\newcommand{\ddim}{{d}} 
\newcommand{\newtonian}[1]{{\Gamma\left({#1}\right)}}
\newcommand{\NN}{\mathbb N}
\newcommand{\IN}{\mathbb Z}
\newcommand{\RN}{\mathbb R}
\newcommand{\CN}{\mathbb C}
\newcommand{\mi}{\mathrm i}
\newcommand{\HS}{\mathcal H}
\newcommand{\SP}[2]{\left\langle {#1} , {#2} \right\rangle}
\newcommand{\norm}[1]{\left|\left|{#1}\right|\right|}
\newcommand{\symp}[2]{\left[{#1},{#2}\right]}
\newcommand{\sympf}[2]{\omega\left({#1},{#2}\right)}
\newcommand{\Lp}[2]{\mathcal{L}^{{#1}}_{{#2}}}
\newcommand{\Lpn}[3]{\norm{{#1}}_{\Lp{{#2}}{{#3}}}}
\newcommand{\Wp}[2]{\mathcal{W}^{{#1}}_{{#2}}}
\newcommand{\Wpn}[3]{\norm{{#1}}_{\Wp{{#2}}{{#3}}}}
\newcommand{\Cp}[2]{\mathcal{C}^{{#1}}_{{#2}}}
\newcommand{\Mp}[2]{\mathcal{M}^{{#1}}_{{#2}}}
\newcommand{\Mpn}[3]{\norm{{#1}}_{\Mp{{#2}}{{#3}}}}
\newcommand{\id}{\textrm{id}}
\newcommand{\unity}{\ensuremath{\mathbbm{1}}}
\newcommand{\intd}{\mathrm d}
\newcommand{\abs}[1]{\left|{#1}\right|}
\newcommand{\inv}[1]{{#1}^{-1}}
\newcommand{\convol}[2]{\left({#1}*{#2}\right)}
\newcommand{\Cm}[2]{\mathcal C^{#1}\left({#2}\right)}
\newcommand{\fourier}{\mathcal F}
\newcommand{\VF}[1]{\hat{{#1}}}
\newcommand{\HF}{\mathcal{H}}
\newcommand{\HVF}{\mathcal{H}_{\text{Vl.}}}
\newcommand{\HHF}{\mathcal{H}_{\text{Ht.}}}
\newcommand{\RX}{{\RN^{\ddim}_{\bx}}}
\newcommand{\RXN}{\RN^{\ddim N}_{\vec{\bx}}}
\newcommand{\RV}{{\RN^{\ddim}_{\bv}}}
\newcommand{\RXi}{{\RN^{\ddim}_{\xi}}}
\newcommand{\RZ}{{\RN^{2\ddim}_{\bz}}}
\newcommand{\RZN}{{\RN^{2\ddim N}_{\vec{\bz}}}}
\newcommand{\RZH}{{\RN^{2\ddim}_{\VF{\bz}}}}
\newcommand{\RZHN}{{\RN^{2\ddim N}_{\vec{\VF{\bz}}}}}
\newcommand{\RNp}{\RN_{\geq 0}}
\newcommand{\naX}{\nabla_{\bx}}
\newcommand{\naV}{\nabla_{\bv}}
\newcommand{\naXi}{\nabla_{\xi}}
\newcommand{\Del}[1]{\mathrm D^{#1}}
\newcommand{\bx}{\textbf x}
\newcommand{\bv}{\textbf v}
\newcommand{\bz}{\textbf z}
\newcommand{\ba}{\textbf a}
\newcommand{\twovec}[2]{\left(\begin{array}{c} {#1} \\ {#2} \end{array}\right)}
\newcommand{\bound}[2]{\mathfrak{b}_{{#1}}^{{#2}}}
\begin{document}


\theoremstyle{plain}
\newtheorem{thm}{Theorem}[section]
\newtheorem{lem}[thm]{Lemma}
\newtheorem{cor}[thm]{Corollary}
\newtheorem{prop}[thm]{Proposition}
\newtheorem{conj}[thm]{Conjecture}

\theoremstyle{definition}
\newtheorem{defn}[thm]{Definition}

\theoremstyle{remark}
\newtheorem{rmk}[thm]{Remark}
\newtheorem{exam}[thm]{Example}

\title{A mean field limit for the \\ Hamiltonian Vlasov system}
\author{R.A. Neiss
	\thanks{
		Electronic address: \texttt{rneiss@math.uni-koeln.de}
	}
}
\affil{
	Universität zu Köln, Mathematisches Institut, \\
	Weyertal 86-90, 50931 Köln, Germany
}
\author{P. Pickl
	\thanks{
		Electronic address: \texttt{peter.pickl@dukekunshan.edu.cn}
	}
}
\affil{
	Duke Kunshan University, \\
	8 Duke Avenue, Kunshan City, \\
	Jiangsu Province, PR China 215316
}
\date{November 29, 2018}

\maketitle

\begin{abstract}
\noindent 
The derivation of effective equations for interacting many body systems has seen a lot of progress in the recent years. While dealing with classical systems, singular potentials are quite challenging \cite{HaurayJabin,lazarovicipickl} comparably strong results are known to hold for quantum systems \cite{knowlespickl}.
In this paper, we wish to show how techniques developed for the derivation of effective descriptions of quantum systems can be used for classical ones. While our future goal is to use these ideas to treat singularities in the interaction, the focus here is to present how quantum mechanical techniques can be used for a classical system and we restrict ourselves to regular two-body  interaction potentials.  In particular we compute a mean field limit for the Hamilton Vlasov system in the sense of \cite{froehlichknowlesschwarz,neiss} that arises from  classical dynamics. The structure reveals strong analogy to the bosonic quantum mechanical ensemble of the many-particle Schrödinger equation and the Hartree equation as its mean field limit \cite{pickl}. 
\end{abstract}

\section{Introduction}

The Vlasov equation effectively describes the collective behaviour of many particle systems by reducing the information to a distribution function on the one-particle phase space.  The system has been under active research for many decades and there are lots of results about it regarding various setups, such as interactions, symmetries, dimensions, boundary conditions \cite{Horst,Horst3,LionsPerthame,Loeper,Schaeffer}. One of the central questions is to derive the equation from many body Newtonian dynamics \cite{Boers,Dobrushin,HaurayJabin,Fournier,Kiessling,Neunzert,NeunzertWick,braunhepp,Sznitman,SpohnBook} however, proving the validity of the Vlasov equation in effectively describing the dynamics of a classical gas with Coulomb interaction is still an open problem.

In this manuscript we wish to show how quantum mechanical techniques can be used to derive such effective equations for classical systems. 
Finding a decent Hamiltonian formulation for such a classical system is interesting on its own, but it also might be the basis for generalizing present results with respect to the singular behaviour of the interaction and/or to adapt tools from the field of finite dimensional Hamiltonian systems. \\

\noindent Among older formal approaches \cite{marsdenweinstein,morrison,yemorrisoncrawford} that focus more on the algebraic structure of the problem, such as symplectic foliation, a more recent idea by Fröhlich, Knowles, and Schwarz \cite[Sec.2]{froehlichknowlesschwarz} is elaborated in \cite[Sec.2]{neiss} for a general Vlasov setting. Therein, for any Vlasov system with a phase space consisting of non-negative distribution functions $f\in\Lp{1}{\bz}$, a symplectic manifold and a Hamiltonian equation for corresponding complex-valued $\alpha\in\Lp{2}{\bz}$ are defined, such that $f=\abs{\alpha}^2$ yields a solution of the classical Vlasov system again. This provides a widely applicable framework that seems more accessible from a rigorous PDE point of view. The Hamiltonian system of these $\alpha$ is referred to as \textit{Hamilton Vlasov system} in this paper. \\

\noindent While the purpose of defining a Hamiltonian structure is not obvious at first sight, we shall show in  this paper that it might be helpful to prove a mean field limit for the Hamilton Vlasov system itself. Note, that the existing techniques for proving mean field limits of classical Vlasov systems, e.g. \cite{braunhepp,lazarovicipickl}, cannot be lifted easily to the $\Lp{2}{}$ setup, because they are mainly concerned with the convergence of marginal distributions, a concept that has no immediate $\Lp{2}{}$ counterpart.

Our paper focuses on the special case of a Newtonian system with a regular two-body interaction force. For this an underlying many-particle system is constructed, and, in strong analogy to bosonic quantum systems, a mean field limit is proven. The effective equation obtained thereby exactly recovers the Hamilton Vlasov system. Although the physical interpretation of the Hamiltonian formulation might be questionable, it serves as a technical tool for proving the mean field limit using $\Lp{2}{}$ methods. It is a peculiar pseudo quantum mechanical system, structurally very similar to the many-particle bosonic Schrödinger/ Hartree ensemble. Nevertheless, significant differences arise, most importantly a kinetic term that is now hyperbolic rather than elliptic.

At this point, it is already worth noting, that by the non-injective map $\alpha \mapsto \abs{\alpha}^2$, more thoroughly explored in Section \ref{sec:mf:eqn-hierarchy}, any mean field limit of the Hamilton Vlasov system also yields some limit for the classical Vlasov setting. Therefore, the results even have an impact on the significant problem of Vlasov mean field limits. \\

\noindent The paper is structured as follows. Section \ref{sec:mf:eqn-hierarchy} recalls and defines the relevant equations and setup, and puts them in hierarchical order. Section \ref{sec:mf:regular-global-well-posedness} proves global well-posedness of the $\Lp{2}{}$ systems for a regular potential and Section \ref{sec:mf:mean-field-limit} finally proves the mean field limit for these potentials by similar means as in the bosonic quantum mechanical setting of \cite{pickl}.

It is worth noting, that if uniform regularity of the solutions of the microscopic equations $\alpha_N$ holds, the method even provides access up to interaction potentials satisfying $\nabla\Gamma\in\Lp{2}{\bx}$. This includes cases of mild singularities without cutoff. See Remark \ref{rmk:mf:singular-potentials} for a detailed discussion.
\section{Hierarchy of equations}

\label{sec:mf:eqn-hierarchy}

\noindent Throughout the paper let $\ddim\in\NN$ denote the dimension  of the underlying physical space, for example $\ddim=3$. The physical interaction potential $\Gamma:\RX\to\RN$ shall be even and only depend on the difference of the positions of two particles. 

\subsection{Overview of dynamical systems}

At this point, we want to recall the various ensembles considered in this paper and introduce our notation.

\subsubsection{Classical Vlasov system}

\noindent The classical Vlasov system is the mean field theory for a microscopic many-particle system of indistinguishable particles. Its Hamiltonian equations of motion for $N$ point masses of equal mass $1/N$ with two-body interaction are considered. We assume that the initial condition, given by a point in the $N$-particle phase space, is at random. The respective probability density is assumed to be i.i.d. and continuous, more precisely it is given by the product state $\mathring{f}_N\equiv \mathring{f}^{\otimes N}$. The equation of motion then is the \textit{Liouville equation}, i.e.,
\begin{align}
\label{eqn:mf:liouville-f} \tag{Lv\textsuperscript{N}}
\partial_tf_N(t,\bz_1,\dots,\bz_N) =& \symp{H_N}{f_N(t)}(\bz_1,\dots,\bz_N) \\
\nonumber
=&~ \sum_{m=1}^{N} \left(-\bv_m\cdot\nabla_{\bx_m} + \left(\frac{1}{N-1} \sum_{n\neq m} \nabla\Gamma(\bx_m-\bx_n)\right)\cdot \nabla_{\bv_m}\right) f_N(t,\bz_1,\dots,\bz_N),
\end{align}
where $[\cdot,\cdot]$ stands for the Poisson bracket on $\RZN$  and $H_N$  for the many-particle Hamiltonian given by
\begin{equation}
\label{eqn:mf:n-hamiltonian} \tag{Ham\textsuperscript{N}}
H_N(\bz_1,\dots,\bz_N) \equiv \sum_{m=1}^{N} \frac{\abs{\bv_m}^2}{2} + \frac{1}{2(N-1)} \sum_{\substack{m,n=1\\ m\neq n}}^{N} \Gamma(\bx_m-\bx_n).
\end{equation}
The rescaled energy is given by the  expectation value of the $N$ particle Hamiltonian w.r.t. the probability density $f_N$ on $\RZN$, i.e.,
\begin{align*}
\HF_N(f_N) = \int_{\RZN} H_N(\vec{\bz})~ f_N(\vec{\bz})~ \intd\vec{\bz}, \quad\text{where}\quad \vec{\bz} = (\bz_1,\dots,\bz_N).
\end{align*} \\

\noindent For any $f\in\Lp{1}{\bz}$ on the one particle phase space $\RZ=\RX\times\RV$ the classical Vlasov  energy functional is given by the energy functional defined for the many-particle evaluated for the product distribution $f^{\otimes N}$, i.e.,
\begin{align*}
\HF(f) =&~ \lim_{N\to\infty} \frac{1}{N} \HF_N\left(f^{\otimes N}\right) = \lim_{N\to\infty} \frac{1}{N} \int_{\RZN} H_N(\vec{\bz})~ \prod_{m=1}^{N} f(\bz_m)~ \intd\vec{\bz} \\
=&~ \int_{\RZ} \frac{\abs{\bv}^2}{2}~ f(\bz)~ \intd\bz + \frac{1}{2} \int_{\RZ\times\RZ} f(\bz_1)~ \Gamma(\bx_1-\bx_2)~ f(\bz_2)~ \intd(\bz_1,\bz_2),
\end{align*}
and the respective equation of motion is given by the \textit{Vlasov equation}
\begin{align}
\label{eqn:mf:vlasov} \tag{Vl}
\partial_t f(t,\bz) =&~ \symp{\frac{\abs{\bv}^2}{2}+\convol{\Gamma}{f(t)}(\bx)}{f(t)}(\bz) \\
\nonumber
=&~ -\bv\cdot\naX f(t,\bz) + \convol{\nabla\Gamma}{f(t)}(\bx) \cdot \naV f(t,\bz)\;.
\end{align}
Again,  $\symp{\cdot}{\cdot}$ stands for the  Poisson bracket, here on $\RZ$. \\

\noindent There exist numerous results on the limit $N\to\infty$ and the convergence towards the Vlasov description. A classical result is that the $k$ particle marginals for fixed $k$ of the joined distribution $f_N(t)$ tend to the product distribution $f(t)^{\otimes k}$, where $f(t)$ solves the Vlasov equation \cite{braunhepp}. Alternate results consider more carefully the difference of solution maps of the autonomous Hamiltonian system and the product of characteristics of the Vlasov system \cite{lazarovicipickl}, which is particularly useful for singular potentials, such as Coulomb with $N$-dependent cutoff length. \\

\subsubsection{Hamilton Vlasov system}

A Hamiltonian formulation of the Vlasov-Liouville ensemble is possible by the following construction. On the symplectic manifold of complex valued $\Lp{2}{\bz}$ functions $\alpha$ with symplectic form $\sympf{\cdot}{\cdot}\equiv \Im\SP{\cdot}{\cdot}$ the new Vlasov Hamiltonian \cite[Sec.2.2]{neiss} is constructed from the first functional derivative of the energy functional. To be more precise:
\begin{align*}
&~\HVF(\alpha) \equiv~ \frac{1}{2\mi} \Del{1}\HF\left(\abs{\alpha}^2\right)\left[\symp{\bar{\alpha}}{\alpha}\right] \\
=&~ \frac{1}{2\mi} \int_{\RZ} \frac{\abs{\bv}^2}{2}~ \symp {\bar{\alpha}} {\alpha}(\bx,\bv)~ \intd\bz + \frac{1}{2\mi} \int_{\RZ\times\RZ} \abs{\alpha(\bz_1)}^2~ \newtonian{\bx_1-\bx_2}~ \symp {\bar{\alpha}} {\alpha}(\bz_2)~ \intd(\bz_1,\bz_2).
\end{align*}
We call the respective equation of motion the \textit{Hamiltonian Vlasov equation}, it is given by
\begin{equation}
\label{eqn:mf:hamilton-vlasov} \tag{HVl}
\partial_t\alpha(t,\bz) = \symp{\frac{\abs{\bv}^2}{2}+\convol{\Gamma}{\abs{\alpha(t)}^2}(\bx)}{\alpha(t)}(\bz) - \convol{\Gamma}{\symp{\bar{\alpha}(t)}{\alpha(t)}}(\bx)~ \alpha(t,\bz),
\end{equation}
where the right hand side is the Hamiltonian vector field of $\HVF$ for the prescribed symplectic structure. 

Note that $f=\abs{\alpha}^2$, i.e., $\alpha$ is somehow the square root of the density. Nevertheless, the connection is quite  complex since the choice of the Hamiltonian functional $\HVF$ is not unique and influences the phase of $\alpha$ \cite[Sec.2.3]{neiss}. The choice we make here is most convenient from the technical point of view. \\

\noindent Similarly, on the $N$ particle distributions, we obtain the corresponding $N$-body  Hamiltonian functional
\begin{equation*}
\HF_{N,\text{Vl.}}(\alpha_N) \equiv \frac{1}{2\mi} \Del{1}\HF_N\left(\abs{\alpha_N}^2\right)\left[\symp{\bar{\alpha}_N}{\alpha_N}\right] = \frac{1}{2\mi} \int_{\RZN} H_N(\vec{\bz})~ \symp{\bar{\alpha}_N}{\alpha_N}(\vec{\bz})~ \intd\vec{\bz}.
\end{equation*}
This functional yields for $\alpha_N$ the \textit{Hamiltonian Liouville equation}
\begin{equation}
\label{eqn:mf:liouville-a} \tag{HLv\textsuperscript{N}}
\partial_t\alpha_N(t,\bz_1,\dots,\bz_N) = \symp{H_N}{\alpha_N(t)}(\bz_1,\dots,\bz_N).
\end{equation}
Note that by the linearity of the energy functional $\HF_N$, this equation is identical to $\eqref{eqn:mf:liouville-f}$.

\subsubsection{Pseudo QM system}

In order to prove the validity  of a mean field limit, it turns out to be useful to transform the Hamilton Vlasov system by  Fourier transformation in the velocity coordinate.

We denote the Fourier transform in the velocity coordinate by
\begin{equation}
\label{eqn:mf:velocity-fourier} \tag{FT}
\fourier: \Lp{2}{\bz} \to \Lp{2}{\VF{\bz}}, \quad \VF{\alpha}(\bx,\xi) \equiv \left(\fourier\alpha\right)(\bx,\xi) \equiv \frac{1}{\left(2\pi\right)^{\frac{\ddim}{2}}} \int_{\RV} \alpha(\bx,\bv)~ e^{-\mi\bv \cdot \xi}~ \intd\bv,
\end{equation}
in particular the conjugate variable of $\bv$ will consistently be denoted by $\xi\in\RXi$, further $\VF{\bz}\equiv(\bx,\xi)$. Of course, the (partial) Fourier transform is $\Lp{2}{}$-isometric and it acts nicely on Sobolev spaces. \\

\noindent The Hamiltonians $\HVF$ and $\HF_{N,\text{Vl.}}$ can be conveniently expressed in $\VF{\alpha}$ instead of $\alpha$, yielding new equations of motion for $\VF{\alpha}$, $\VF{\alpha}_N$ respectively. Introducing the notation
\begin{equation*}
\VF{V}(\bx,\xi)\equiv -\nabla\newtonian\bx\cdot\xi,
\end{equation*}
integrating by parts, and recalling the Plancherel Theorem for the Fourier transform a.e. in $\bx$, one computes
\begin{align*}
\HVF(\alpha) =&~ \frac{1}{2\mi} \int_{\RZ} \frac{\abs{\bv}^2}{2}~ \symp {\bar{\alpha}} {\alpha}(\bx,\bv)~ \intd\bz \\
&+ \frac{1}{2\mi} \int_{\RZ} \int_{\RZ} \abs{\alpha(\bz_1)}^2~ \newtonian{\bx_1-\bx_2}~ \symp {\bar{\alpha}} {\alpha}(\bz_2)~ \intd\bz_1~ \intd\bz_2 \\
\stackrel{\text{P.I.}}{=}&~ \frac{1}{2\mi} \int_{\RZ} \naX\alpha(\bx,\bv) \cdot \bv~ \bar{\alpha} (\bx,\bv)~ \intd\bz \\
&+ \frac{1}{2\mi} \int_{\RZ} \int_{\RZ} \abs{\alpha(\bz_1)}^2~ \bar{\alpha}(\bz_2)~ \nabla\newtonian{\bx_1-\bx_2}~ \naV\alpha(\bz_2)~ \intd\bz_1~ \intd\bz_2 \\
\stackrel{\text{F.T., Planch.}}{=}&~ -\frac{1}{2} \int_{\RZ} \naX\VF{\alpha}(\bx,\xi) \cdot \naXi\bar{\VF{\alpha}} (\bx,\xi)~ \intd\hat{\bz} \\
&+ \frac{1}{4} \int_{\RZ} \int_{\RZ} \abs{\VF{\alpha}(\hat{\bz}_1)}^2~ \abs{\hat{\alpha}(\hat{\bz}_2)}^2~ \nabla\newtonian{\bx_1-\bx_2} \cdot \left(\xi_2-\xi_1\right)~ \intd\hat{\bz}_1~ \intd\hat{\bz}_2 \\
=& -\frac{1}{2} \int_{\RZ} \naX\VF{\alpha}(\bx,\xi) \cdot \naXi\bar{\VF{\alpha}} (\bx,\xi)~ \intd\hat{\bz} + \frac{1}{4} \int_{\RZ} \hat{\alpha}(\hat{\bz})~ \convol {\hat{V}} {\abs{\hat{\alpha}}^2} (\hat{\bz})~ \bar{\VF{\alpha}}(\hat{\bz})~ \intd\hat{\bz} \\
\stackrel{\text{P.I.}}{=}&~ \frac{1}{2} \SP{\VF{\alpha}}{\left(\naX\cdot\naXi + \frac 12 \convol{\hat{V}}{\abs{\VF{\alpha}}^2}\right)~ \VF{\alpha}}_{\Lp{2}{\hat{\bz}}} \equiv \HHF(\VF{\alpha}),
\end{align*}
which is structurally intriguingly close to the Hartree energy functional. Nevertheless, there are important differences. The kinetic term is now hyperbolic instead of elliptic and the underlying space is $\RZH$ instead of $\RX$.

From the first derivative of this functional, the Hamiltonian vector field and the corresponding \textit{Hamilton Hartree equation} is computed to be
\begin{align}
\label{eqn:mf:hamilton-hartree} \tag{HHt}
\mi~\partial_t\hat{\alpha}(t,\hat\bz) = \left(\naX\cdot\naXi + \convol{\hat{V}}{\abs{\hat{\alpha}(t)}^2}(\hat{\bz})\right) \hat{\alpha} (t,\hat{\bz}),
\end{align}
which can also be directly recovered from the velocity Fourier transform of \eqref{eqn:mf:hamilton-vlasov}. \\

\noindent For the $N$ particle system, we compute from the Vlasov Hamiltonian the \textit{bosonic pseudo Schrödinger equation}
\begin{align}
\label{eqn:mf:pseudo-qm} \tag{PsQM\textsuperscript{N}}
\mi~ \partial_t \hat{\alpha}_N(t,\hat{\bz}_1,\dots,\hat{\bz}_N) = \left(\sum_{m=1}^{N} \nabla_{\bx_m} \cdot \nabla_{\xi_m} + \frac 1{2(N-1)}\sum_{n\neq m} \hat{V}(\hat{\bz}_n-\hat{\bz}_m)\right)~ \hat{\alpha}_N(t,\hat{\bz}_1,\dots,\hat{\bz}_N).
\end{align}
Its structure is similar to the bosonic many particle Schrödinger equation, the only difference being that the underlying space is $\RZHN$ instead of $\RXN$ and the kinetic term is different. In agreement with bosonic systems we will only be interested in symmetric states $\VF{\alpha}_N$, i.e., states that are invariant under coordinate permutations. 

\subsection{Relation of ensembles}

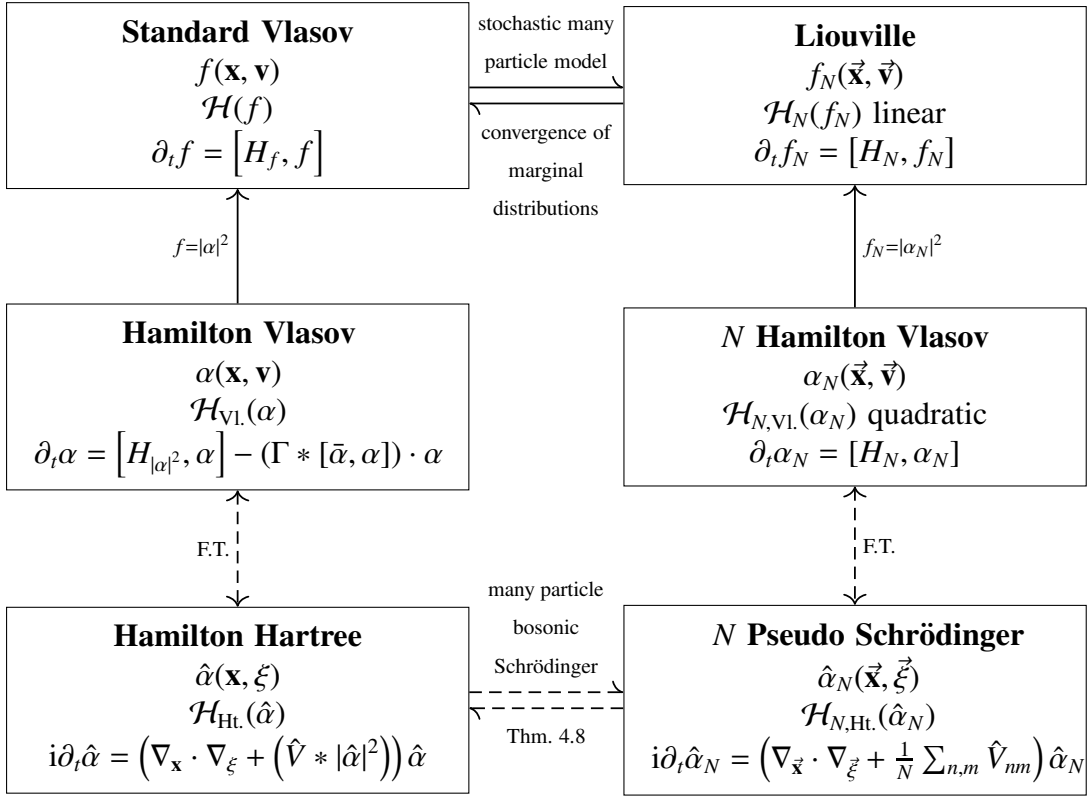
\begin{figure}[ht]
\centering
\begin{tikzcd}[ampersand replacement=\&, column sep=huge, row sep=huge, cells={nodes={draw=black, text width=0.38\textwidth, align=center}}]
\begin{array}{c} \text{\textbf{Standard Vlasov}} \\ f(\bx,\bv) \\ \HF(f) \\ \partial_t f = \symp{H_f}{f} \end{array}
\arrow[r, rightharpoonup, shift left, "\begin{array}{c}
\text{stochastic many} \\ \text{particle model}\end{array}"]
\& 
\begin{array}{c} \text{\textbf{Liouville}} \\ f_N(\vec{\bx},\vec{\bv}) \\ \HF_N(f_N)~ \text{linear} \\ \partial_t f_N = \symp{H_N}{f_N} \end{array}
\arrow[l,rightharpoonup, shift left, "\begin{array}{c}\text{convergence of} \\ \text{marginal} \\ \text{distributions}\end{array}"]
\\
\begin{array}{c} \text{\textbf{Hamilton Vlasov}} \\ \alpha(\bx,\bv) \\ \HVF(\alpha) \\ \partial_t \alpha = \symp{H_{\abs{\alpha}^2}}{\alpha} - \convol{\Gamma}{\symp{\bar{\alpha}}{\alpha}} \cdot \alpha \end{array} 
\arrow[d, dashed, leftrightarrow, swap, "\text{F.T.}"]
\arrow[u, rightarrow, "f=\abs{\alpha}^2"]
\&
\begin{array}{c} \text{\textbf{$N$ Hamilton Vlasov}} \\ \alpha_N(\vec{\bx},\vec{\bv}) \\ \HF_{N,\text{Vl.}}(\alpha_N)~\text{quadratic} \\ \partial_t \alpha_N = \symp{H_N}{\alpha_N} \end{array} 
\arrow[d, dashed, leftrightarrow, "\text{F.T.}"]
\arrow[u, rightarrow, swap, "f_N=\abs{\alpha_N}^2"]
\\
\begin{array}{c} \text{\textbf{Hamilton Hartree}} \\ \hat{\alpha}(\bx,\xi) \\ \HHF(\VF{\alpha}) \\ \mi\partial_t \hat{\alpha} = \left(\naX\cdot\naXi+\convol{\hat{V}}{\abs{\hat{\alpha}}^2}\right) \hat{\alpha} \end{array}
\arrow[r, dashed, rightharpoonup, shift left, "\begin{array}{c} \text{many particle} \\ \text{bosonic} \\ \text{Schrödinger} \end{array}"]
\&
\begin{array}{c} \text{\textbf{$N$ Pseudo Schrödinger}} \\ \hat{\alpha}_N(\vec{\bx},\vec{\xi}) \\ \HF_{N,\text{Ht.}}(\VF{\alpha}_N) \\ \mi\partial_t \hat{\alpha}_N = \left(\nabla_{\vec{\bx}} \cdot \nabla_{\vec{\xi}}+\frac 1N \sum_{n,m} \hat{V}_{nm}\right) \hat{\alpha}_N \end{array}
\arrow[l, dashed, rightharpoonup, shift left, "\begin{array}{c}\text{Thm. \ref{thm:mf:hamilton-hartree-mf-limit}}\end{array}"]
\end{tikzcd}
\caption{\label{fig:mf:eqn-hierarchy} Hierarchy of equations. All systems are given with name, phase space variable, energy functional, and equation of motion. $\symp{\cdot}{\cdot}$ denotes the Poisson bracket on $\RZ$, $\RZN$ respectively. All mean field theories are in the left column, all many particle systems in the right one. The velocity Fourier transform (F.T.) is reversible and therefore yields equivalent equations. The Vlasov mean field theory in the first row is widely discussed in the literature, e.g. \cite{braunhepp,lazarovicipickl}. The Hamiltonian Vlasov equation and its relation to Vlasov is discussed in \cite{neiss}. The dashed lines indicate new results established in this paper.}
\end{figure}

\noindent The connections between the various systems introduced above can be nicely represented in a \textit{commutative} diagram, as seen in Fig. \ref{fig:mf:eqn-hierarchy}. The diagram indicates two key features. 

At first, the structural analogy between the Hamilton Hartree/ $N$ Pseudo Schrödinger ensemble and the quantum mechanical Hartree/ Schrödinger system offers exciting opportunities to develop a mean field limit similar to the one of bosonic quantum systems. It is, with some restrictions and adaptations, possible to follow the guidelines of \cite{pickl}, where a quantum mean field limit of bosonic many-particle Schrödinger to the Hartree equation is proven. Following the strategy presented therein we will show in the present paper that $\hat{\alpha}(t)$ arises as a mean field limit from a sequence of many-particle pseudo wave functions $\hat{\alpha}_N(t)$ solving the \textit{$N$ pseudo Schrödinger system} \eqref{eqn:mf:pseudo-qm}. The sense of convergence is given by the average number of particles in the mean field state. This is rigorously carried out in Section \ref{sec:mf:mean-field-limit}.

Secondly, the diagram shows, that any mean field limit in the Hamiltonian Vlasov or Pseudo-Hartree picture yields a mean field limit for the classical Vlasov case by reversing the velocity Fourier transform and mapping $\alpha\mapsto \abs{\alpha}^2$. It is a peculiar example, where quantum $\Lp{2}{}$ methods can be used to infer properties of a classical system. Due to the reversibility of the velocity Fourier transform, $\Lp{2}{}$ mean field limits in both pictures are equivalent. \\

\section{Global well-posedness for regular potentials}

\label{sec:mf:regular-global-well-posedness}

As, to our knowledge, some of the $\Lp{2}{}$ systems are new, in order to get a valuable mean field discussion well-posedness of the underlying equations has to be established, first. Results for all four equations of motion \eqref{eqn:mf:hamilton-vlasov}, \eqref{eqn:mf:liouville-a}, \eqref{eqn:mf:hamilton-hartree}, and \eqref{eqn:mf:pseudo-qm}, can be obtained from solutions of \eqref{eqn:mf:hamilton-vlasov} and \eqref{eqn:mf:liouville-a}, because the Fourier transform immediately yields analogous results for \eqref{eqn:mf:hamilton-hartree} and \eqref{eqn:mf:pseudo-qm}. \\

\noindent While the algebraic structure of the problem requires to restrict the systems to non-relativistic, two body interaction, the technical restriction is that the physical force $\nabla\Gamma$ is $\Cp{2}{\bx}$ and bounded, i.e., the potential $\Gamma\in\Cm{3}{\RX;\RN}$, which is assumed to be even, satisfies
\begin{equation}
\label{eqn:mf:bounded-force} \tag{Pot}
C_\Gamma\equiv \max\left\{\Lpn{\Del{1}\Gamma}{\infty}{\bx}, \Lpn{\Del{2}\Gamma}{\infty}{\bx}, \Lpn{\Del{3}\Gamma}{\infty}{\bx}\right\} < \infty.
\end{equation}
By the regularity assumptions \eqref{eqn:mf:bounded-force} on the potential, global well-posedness results for the many-particle systems are not difficult to prove.

\begin{thm}[Global wellposed-ness for the $N$ particle system]
\label{thm:mf:existence-n-body}
Under the assumptions \eqref{eqn:mf:bounded-force}, the Liouville equation \eqref{eqn:mf:liouville-a} has a global solution for any $N$ and any initial value. The solution preserves regularity of the initial state up to $\Cp{2}{\vec{\bz}}$ as well as its symmetry under particle permutation. If the solution is classical in $\Cp{1}{(t,\bz)}$, it is unique therein.
Two solutions keep their $\Lp{2}{\vec{\bz}}$ distance constant for all times. \end{thm}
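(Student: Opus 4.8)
Since the Poisson bracket $\symp{H_N}{\cdot}$ differentiates functions along the Hamiltonian flow of $H_N$ (up to the fixed sign convention), the equation \eqref{eqn:mf:liouville-a} is just the linear Liouville transport equation along the Newtonian flow on $\RZN$, and the whole statement can be read off from properties of that flow. First I would check that, under \eqref{eqn:mf:bounded-force}, the field $\HamVec{H_N}=\bigl(\bv_1,\dots,\bv_N,\,-\tfrac1{N-1}\sum_{n\neq 1}\nabla\Gamma(\bx_1-\bx_n),\dots\bigr)$ associated with \eqref{eqn:mf:n-hamiltonian} is $C^2$ on $\RZN$ with globally bounded derivative: the kinetic part is linear, and the force part involves $\Del{1}\Gamma,\Del{2}\Gamma,\Del{3}\Gamma$, all bounded by $C_\Gamma$. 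Hence $\HamVec{H_N}$ is globally Lipschitz, so Picard--Lindelöf (supplemented by the a priori bound $\tfrac{\intd}{\intd t}\abs{\vec{\bz}(t)}^2\leq C_N\bigl(1+\abs{\vec{\bz}(t)}^2\bigr)$, which alone already rules out finite-time blow-up) yields a complete flow $\Phi_t\colon\RZN\to\RZN$, $t\in\RN$. Smooth dependence on initial data makes each $\Phi_t$ a $C^2$-diffeomorphism with $(t,\vec{\bz})\mapsto\Phi_t(\vec{\bz})$ jointly $C^1$; since $\HamVec{H_N}$ is divergence-free, Liouville's theorem gives that each $\Phi_t$ preserves Lebesgue measure on $\RZN$; and since $H_N$ is permutation invariant, $\HamVec{H_N}$ and hence $\Phi_t$ commute with every coordinate permutation.

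With the flow in hand I would take as candidate solution $\alpha_N(t,\vec{\bz}):=\mathring{\alpha}_N\bigl(\Phi_{-t}(\vec{\bz})\bigr)$, the direction $\mp t$ being pinned down by the sign convention for $\symp{\cdot}{\cdot}$. That this solves \eqref{eqn:mf:liouville-a} is the standard identity $\tfrac{\intd}{\intd t}\bigl(\beta\circ\Phi_{-t}\bigr)=\symp{H_N}{\beta\circ\Phi_{-t}}$, which itself reduces to the facts that $\Phi_t$ is a symplectomorphism and conserves $H_N$; for merely measurable, $\Lp{2}{\vec{\bz}}$, or distributional initial data the same formula defines a solution in the distributional sense, because $\Phi_{-t}$ is the measure-preserving flow of a smooth vector field, so pull-back is compatible with pairing against $\TF{\RZN}$. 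Propagation of regularity and symmetry are then immediate: composing a $\Cp{k}{\vec{\bz}}$ function with the $C^2$-diffeomorphism $\Phi_{-t}$ stays in $\Cp{k}{\vec{\bz}}$ for $k\in\{0,1,2\}$, and composing a permutation-invariant function with the permutation-commuting $\Phi_{-t}$ stays permutation invariant.

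For the last two assertions: by linearity of \eqref{eqn:mf:liouville-a} any two solutions satisfy $\alpha_N(t)-\beta_N(t)=\bigl(\mathring{\alpha}_N-\mathring{\beta}_N\bigr)\circ\Phi_{-t}$, so the measure-preservation of $\Phi_{-t}$ gives $\Lpn{\alpha_N(t)-\beta_N(t)}{2}{\vec{\bz}}=\Lpn{\mathring{\alpha}_N-\mathring{\beta}_N}{2}{\vec{\bz}}$ for every $t$, which is the constant $\Lp{2}{}$-distance claim. For uniqueness in $\Cp{1}{(t,\bz)}$, if $\gamma_N$ is a classical solution I would differentiate $t\mapsto\gamma_N\bigl(t,\Phi_t(\vec{\bz}_0)\bigr)$: its derivative equals $\bigl(\partial_t\gamma_N-\symp{H_N}{\gamma_N}\bigr)\circ\Phi_t=0$, so $\gamma_N$ is constant along characteristics and hence determined by $\gamma_N(0,\cdot)$; applied to the difference of two classical solutions with the same initial datum this gives uniqueness within that class.

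I do not anticipate a genuine obstacle here: once \eqref{eqn:mf:bounded-force} is assumed, everything reduces to classical ODE and linear-transport theory. The only points needing a little care are bookkeeping — fixing the sign relating $\symp{H_N}{\cdot}$ to the time-direction of $\Phi_t$, noting that it is the \emph{force} $\nabla\Gamma$ (not $\Gamma$ itself) whose boundedness enters the global-in-time flow estimate, and observing that the $\Cp{2}{\vec{\bz}}$ ceiling on propagated regularity matches exactly the $\Cm{3}{\RX;\RN}$ hypothesis on $\Gamma$, one derivative being lost from potential to force and the flow being only as smooth as its generating field.
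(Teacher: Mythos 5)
Your proposal is correct and follows essentially the same route as the paper: solve the characteristic (Newtonian) ODE system globally using the boundedness of $\nabla\Gamma$, obtain a $\Cp{2}{\vec{\bz}}$ solution map from $\Gamma\in\Cm{3}{\RX;\RN}$, define $\alpha_N(t)$ by pull-back along this flow, and deduce regularity/symmetry propagation, $\Lp{2}{}$-isometry from volume preservation of the Hamiltonian flow, and uniqueness by constancy along characteristics. The extra explicit steps you supply (Picard--Lindel\"of with the a priori bound, divergence-freeness for Liouville's theorem, the sign-convention bookkeeping) are natural fillings-in of what the paper states more tersely, not a different strategy.
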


\begin{proof}
In fact, \eqref{eqn:mf:liouville-a} is a transport equation and its characteristics are independent of the initial value $\mathring{\alpha}_N$. They are  given by
\begin{align*}
\partial_t X_m(t,s,\bz) = V_m(t,s,\bz), \quad
\partial_t V_m(t,s,\bz) = \frac{1}{N-1} \sum_{n\neq m} \nabla\Gamma\left(X_n(t,s,\bz)-X_m(t,s,\bz)\right).
\end{align*}
Assumption \eqref{eqn:mf:bounded-force} assures that the solution map $\vec{Z}_N=(X_1,V_1,\dots,X_N,V_N)$ is well-defined on $\RN\times\RN\times\RZN$ and, because $\Gamma$ is $\Cp{3}{\bx}$, any transformation $\vec{Z}_N(t,s)\in\Cp{2}{\vec{\bz}}$. Hence, for any initial datum $\mathring{\alpha}_N\in\Lp{2}{\vec{\bz}}$ there is a solution given by $\alpha_N(t,\vec{\bz}) \equiv \mathring{\alpha}_N(\vec{Z}_N(0,t,\vec{\bz}))$. The solution is classical, if $\mathring{\alpha}_N\in\Cp{1}{\vec{\bz}}$ and it preserves regularity of the initial datum up to $\Cp{2}{\vec{\bz}}$ and also $\Wp{2,2}{\vec{\bz}}$, respectively.

In addition, the Hamiltonian structure of the characteristic equations implies that all $\Lp{p}{\vec{\bz}}$ norms are conserved. Ultimately, because the characteristic system is symmetric (under permutation of particles),  symmetric initial states are mapped onto symmetric states for all times. 

The uniqueness of classical solutions is immediate, because the transport equation and the characteristic equations are equivalent on that class.

For the last claim, we remark, that all solutions arise from composition with the very same solution map. It is volume preserving because of its Hamiltonian structure. Therefore one obtains   $\Lpn{\alpha_N(t)-\beta_N(t)}{2}{\vec{\bz}} = \Lpn{\mathring{\alpha}_N-\mathring{\beta}_N}{2}{\vec{\bz}}$ for any two initial data $\mathring{\alpha}_N,\mathring{\beta}_N$ and their respective solutions. \qedhere
\end{proof}

\noindent Proving a mean field limit requires bounds uniformly in $N$ on the first and second phase space derivatives of the solution. They can be naturally obtained by a respective condition on the sequence of $N$ particle Hamiltonians.

\begin{defn}[Mean field consistency]
\label{defn:mf:mf-consistency}
Let $\{H_N:\RZN\to\RN\}_{N\in\NN}$ be a sequence of Hamiltonian functions on the $N$ particle phase spaces. They are said to be \textbf{mean field consistent}, if and only if
\begin{enumerate}[label=(\roman*)]
\item each $H_N$ is invariant under particle permutation and
\item there are constants $C_1,C_2,C_3>0$, s.t. for any $N\geq 2$ and any permutation invariant $\alpha_N\in\Wp{2,2}{\vec{\bz}}$
\begin{align}
\nonumber
\Lpn{\symp{\Del{1}_{\bz_1} H_N}{\alpha_N}}{2}{\vec{\bz}} 
\leq&~ C_1 \Lpn{\Del{1}_{\bz_1}\alpha_N}{2}{\vec{\bz}}, \\ 
\label{eqn:mf:mean-field-consistency} \tag{MFC}
\Lpn{\symp{\Del{2}_{(\bz_1,\bz_2)} H_N}{\alpha_N}}{2}{\vec{\bz}} 
\leq&~ C_2 \Lpn{\Del{1}_{\bz_1}\alpha_N}{2}{\vec{\bz}}, \\
\nonumber
\Lpn{\symp{\Del{1}_{(\bz_1,\bz_2)} H_N}{\Del{1}_{(\bz_1,\bz_2)} \alpha_N}}{2}{\vec{\bz}}
\leq&~ C_3 \Lpn{\Del{2}_{(\bz_1,\bz_2)}\alpha_N}{2}{\vec{\bz}}.
\end{align}
\end{enumerate}
\end{defn}

\noindent This definition is of course tailored for our purposes.

\begin{lem}[Non-relativistic two body interaction]
\label{lem:mf:non-relativistic-two-body}
The sequence of Hamiltonians \eqref{eqn:mf:n-hamiltonian} along with condition \eqref{eqn:mf:bounded-force} is mean field consistent.
\end{lem}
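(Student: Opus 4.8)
The plan is to verify the two items of Definition~\ref{defn:mf:mf-consistency} directly from the explicit form \eqref{eqn:mf:n-hamiltonian}. Item~(i) is immediate, since $H_N$ is manifestly invariant under permutations of the $\bz_m$. For item~(ii) I would work from the coordinate form of the Poisson bracket on $\RZN$, namely $\symp{F}{G}=\sum_{m=1}^{N}\big(\nabla_{\bx_m}F\cdot\nabla_{\bv_m}G-\nabla_{\bv_m}F\cdot\nabla_{\bx_m}G\big)$, and record the derivatives of $H_N$ that occur: $\nabla_{\bv_1}H_N=\bv_1$ and $\nabla_{\bx_1}H_N=\tfrac{1}{N-1}\sum_{n\neq1}\nabla\Gamma(\bx_1-\bx_n)$ (using that $\Gamma$ is even), together with their higher derivatives. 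The structural point that makes the mean field scaling work is that differentiating the interaction sum once more in its \emph{own} position variable $\bx_1$ again produces an average of $N-1$ uniformly bounded terms, hence an $O(1)$ quantity with a bound of the form $c_\ddim C_\Gamma$, whereas differentiating it in a \emph{foreign} variable $\bx_n$, $n\neq1$, produces a single term carrying the full weight $\tfrac{1}{N-1}$. It is also worth noting at the outset that among the second derivatives of $H_N$ the only ones that are neither constant (hence Poisson-commuting with $\alpha_N$) nor identically zero are the position Hessians $\nabla_{\bx_i}\nabla_{\bx_j}H_N$ with $i,j\in\{1,2\}$; this trims the bookkeeping in the second estimate of \eqref{eqn:mf:mean-field-consistency}.

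For the first estimate I would split $\symp{\Del{1}_{\bz_1}H_N}{\alpha_N}$ into its velocity part, $\symp{\bv_1}{\alpha_N}=-\nabla_{\bx_1}\alpha_N$, whose $\Lp{2}{\vec{\bz}}$ norm is trivially $\leq\Lpn{\Del{1}_{\bz_1}\alpha_N}{2}{\vec{\bz}}$, and its position part $\symp{\tfrac{1}{N-1}\sum_{n\neq1}\nabla\Gamma(\bx_1-\bx_n)}{\alpha_N}$. In the latter only the $\nabla_{\bx_m}$ half of the bracket survives (the argument carries no velocities), and it separates into a diagonal term $\tfrac{1}{N-1}\sum_{n\neq1}\nabla^2\Gamma(\bx_1-\bx_n)\cdot\nabla_{\bv_1}\alpha_N$, pointwise $\leq c_\ddim C_\Gamma\,|\nabla_{\bv_1}\alpha_N|$, plus a tail $\tfrac{1}{N-1}\sum_{n\neq1}\nabla^2\Gamma(\bx_1-\bx_n)\cdot\nabla_{\bv_n}\alpha_N$. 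Taking $\Lp{2}{\vec{\bz}}$ norms, the triangle inequality bounds the tail by $\tfrac{c_\ddim C_\Gamma}{N-1}\sum_{n\neq1}\Lpn{\nabla_{\bv_n}\alpha_N}{2}{\vec{\bz}}$, and since $\alpha_N$ is permutation invariant every summand equals $\Lpn{\nabla_{\bv_1}\alpha_N}{2}{\vec{\bz}}$, so the $N-1$ terms exactly absorb the prefactor. This yields a constant $C_1$ of the form $1+c_\ddim C_\Gamma$. The second estimate runs the same way: by the remark above $\symp{\Del{2}_{(\bz_1,\bz_2)}H_N}{\alpha_N}$ reduces to brackets of the position Hessians $\nabla_{\bx_i}\nabla_{\bx_j}H_N$, whose further $\bx$-derivatives are controlled by $\nabla^3\Gamma$, hence by $C_\Gamma$, and the identical diagonal/tail split gives $C_2=c_\ddim C_\Gamma$.

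For the third estimate the only difference is that one derivative has now fallen on $\alpha_N$, so expanding $\symp{\Del{1}_{(\bz_1,\bz_2)}H_N}{\Del{1}_{(\bz_1,\bz_2)}\alpha_N}$ produces mixed \emph{second} derivatives of $\alpha_N$ of the form $\nabla_{\bv_n}\nabla_{\bz_j}\alpha_N$ with $j\in\{1,2\}$ (for instance the $\bv_1$- and $\bv_2$-components of $\Del{1}_{(\bz_1,\bz_2)}H_N$ contribute $-\nabla_{\bx_1}\nabla_{\bz_j}\alpha_N$ and $-\nabla_{\bx_2}\nabla_{\bz_j}\alpha_N$, already blocks of $\Del{2}_{(\bz_1,\bz_2)}\alpha_N$). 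The diagonal term is again $c_\ddim C_\Gamma$ times such a block, and for the tail $\tfrac{1}{N-1}\sum_{n\neq1}\nabla^2\Gamma(\bx_1-\bx_n)\cdot\nabla_{\bv_n}\nabla_{\bz_j}\alpha_N$ one applies, for each $n\notin\{1,2\}$, the transposition swapping $n$ with whichever of $1,2$ differs from $j$: this fixes $\alpha_N$ while leaving $j$ untouched, so permutation invariance identifies the $\Lp{2}{\vec{\bz}}$ norm of the summand with $\Lpn{\nabla_{\bv_2}\nabla_{\bz_1}\alpha_N}{2}{\vec{\bz}}$ or $\Lpn{\nabla_{\bv_1}\nabla_{\bz_2}\alpha_N}{2}{\vec{\bz}}$, both $\leq\Lpn{\Del{2}_{(\bz_1,\bz_2)}\alpha_N}{2}{\vec{\bz}}$, while the single remaining term $n=2$ is bounded directly. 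Summing over the $O(\ddim^2)$ component pairs only changes the constant, giving $C_3$ of the form $1+c_\ddim C_\Gamma$.

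The argument is essentially bookkeeping, so I do not anticipate a genuine obstacle; the step that requires care is exactly this relabeling in the tail sums, where the transposition must be chosen so that it leaves untouched precisely the coordinate ($\bz_1$ or $\bz_2$) that is \emph{not} being summed over — this is what permits the permutation invariance of $\alpha_N$ to be used, and it is the reason the last two lines of \eqref{eqn:mf:mean-field-consistency} carry the two free labels $\bz_1,\bz_2$ rather than a single one.
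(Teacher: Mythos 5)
Your proposal is correct and follows the same route as the paper: compute the Poisson brackets in coordinates, split each interaction sum into the diagonal contribution (an average of $N-1$ bounded terms) and the tail (a sum of $N-1$ terms each weighted by $\tfrac{1}{N-1}$), bound the former pointwise via $C_\Gamma$ and the latter via the triangle inequality combined with permutation invariance of $\alpha_N$. The paper only spells out the first inequality and then asserts the other two "can be carried out very similarly"; you correctly identify and fill in the one non-obvious step the paper elides, namely that in the tail of the third estimate the transposition must be chosen to avoid the coordinate index that carries the $\alpha_N$-derivative.
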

\begin{proof}
For any $N\geq 2$ and $\alpha_N$ permutation invariant, w.l.o.g. $\Cp{\infty}{\vec{\bz}}$ and compactly supported, we have
\begin{align*}
&\Lpn{\symp{\Del{1}_{\bz_1}H_N}{\alpha_N}}{2}{\vec{\bz}} = \left(\int_{\RZN} \abs{\symp{\twovec{\frac{1}{N-1}\sum_{m=2}^{N} \nabla\Gamma(\bx_1-\bx_m)}{\bv_1}}{\alpha_N}}^2~ \intd\vec{\bz}\right)^{\frac{1}{2}} \\
\stackrel{(*)}{=}&~ \left(\int_{\RZN} \sum_{i=1}^{\ddim} \left(\abs{\partial_{x_{1,i}}\alpha_N}^2 + \abs{\sum_{j=1}^{\ddim} \frac{1}{N-1} \sum_{m=2}^{N} \partial_i\partial_j \Gamma(\bx_1-\bx_m)~ (\partial_{v_{1,j}}\alpha_N - \partial_{v_{m,j}}\alpha_N)}^2\right)~ \intd\vec{\bz}\right)^{\frac{1}{2}} \\
\stackrel{(**)}{\leq}&~ \sum_{i=1}^{\ddim} \left(\Lpn{\partial_{x_{1,i}}\alpha_N}{2}{\vec{\bz}} + 2\Lpn{\Del{2}\Gamma}{\infty}{\bx} \Lpn{\partial_{v_{1,i}}\alpha_N}{2}{\vec{\bz}}\right) \leq \sqrt{d\left(1+4\Lpn{\Del{2}\Gamma}{\infty}{\bx}^2\right)} \Lpn{\Del{1}_{\bz_1}\alpha_N}{2}{\vec{\bz}},
\end{align*}
where at $(*)$ we evaluate the Poisson bracket and norm square, at $(**)$ apply the matrix operator norm of $\Del{2}\Gamma$ and exploit the permutation invariance of $\alpha_N$ to use the derivative at the first particle index. 

The computations for the other two inequalities of \eqref{eqn:mf:mean-field-consistency} can be carried out very similarly. The bounds on the first three derivatives of $\Gamma$ are crucial here. By standard approximation, these inequalities are extended on permutation invariant functions of $\Wp{2,2}{\vec{\bz}}$. \qedhere
\end{proof}

\begin{lem}
\label{lem:mf:n-derivative-bounds}
Let $\alpha_N:\RN\times\RZN\to\CN$ a solution of \eqref{eqn:mf:liouville-a} invariant w.r.t. particle permutation, $\mathring{\alpha}_N=\alpha_N(0)\in\Wp{2,2}{\vec{\bz}}$, $\Lpn{\Del{1}_{\bz_1}\mathring{\alpha}_N}{2}{\vec{\bz}}, \Lpn{\Del{2}_{(\bz_1,\bz_2)}\mathring{\alpha}_N}{2}{\vec{\bz}} \leq M$. Let $H_N$ be mean field consistent, then we have for any $N\geq 2$ with $C_1,C_2,C_3>0$ from \eqref{eqn:mf:mean-field-consistency}
\begin{align}
\Lpn{\Del{1}_{\bz_1}\alpha_N(t)}{2}{\vec{\bz}} \leq&~ \Lpn{\Del{1}_{\bz_1}\mathring{\alpha}_N}{2}{\vec{\bz}}~ e^{\frac 12\left(1+C_1^2\right)t} \leq M e^{\frac 12\left(1+C_1^2\right)t} \equiv \bound{M,\Del{1}_{\bz_1}}{\Lp{2}{\vec{\bz}}}(t), \\
\Lpn{\Del{2}_{(\bz_1,\bz_2)}\alpha_N(t)}{2}{\vec{\bz}} \leq&~ \left(\Lpn{\Del{2}_{(\bz_1,\bz_2)}\mathring{\alpha}_N}{2}{\vec{\bz}}^2 + \frac{C_2^2}{1+C_1^2}\Lpn{\Del{1}_{\bz_1}\mathring{\alpha}_N}{2}{\vec{\bz}}^2 \left(e^{(1+C_1^2)t}-1\right)\right)^{\frac 12} e^{(\frac 32+C_3^2) t} \\
\leq&~ M \left(1+\frac{C_2^2}{1+C_1^2}\left(e^{\left(1+C_1^2\right)t}-1\right)\right)^{\frac 12} e^{\left(\frac{3}{2}+C_3^2\right)t} \equiv \bound{M,\Del{2}_{(\bz_1,\bz_2)}}{\Lp{2}{\vec{\bz}}}(t).
\end{align}
\end{lem}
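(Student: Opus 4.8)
The plan is to derive differential inequalities for the two quantities $y_1(t) \equiv \Lpn{\Del{1}_{\bz_1}\alpha_N(t)}{2}{\vec{\bz}}^2$ and $y_2(t) \equiv \Lpn{\Del{2}_{(\bz_1,\bz_2)}\alpha_N(t)}{2}{\vec{\bz}}^2$ and then integrate via Grönwall. The mechanism is that differentiating the $\Lp{2}{}$ norm of a derivative of $\alpha_N$ in time produces, after using the equation of motion $\partial_t\alpha_N = \symp{H_N}{\alpha_N}$ and commuting the phase-space derivative past the Poisson bracket, a term controlled by \eqref{eqn:mf:mean-field-consistency}. Since the solution map is volume preserving (Theorem \ref{thm:mf:existence-n-body}), $\Re\SP{\symp{H_N}{\phi}}{\phi} = 0$ for any $\phi$, so the ``diagonal'' part of $\partial_t\|\Del{k}\alpha_N\|^2$ drops out and only the commutator $[\Del{k}, \symp{H_N}{\cdot}]$ contributes.

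Concretely, for the first bound I would write $\frac{\intd}{\intd t}\|\Del{1}_{\bz_1}\alpha_N\|^2 = 2\Re\SP{\Del{1}_{\bz_1}\symp{H_N}{\alpha_N}}{\Del{1}_{\bz_1}\alpha_N}$, split $\Del{1}_{\bz_1}\symp{H_N}{\alpha_N} = \symp{H_N}{\Del{1}_{\bz_1}\alpha_N} + \symp{\Del{1}_{\bz_1}H_N}{\alpha_N}$ (the product/Leibniz rule for the derivative of a Poisson bracket, using that $H_N$ only enters through first derivatives so the bracket with a $\bz_1$-derivative of $\alpha_N$ is again a bracket plus the $\symp{\Del{1}_{\bz_1}H_N}{\alpha_N}$ piece — here one has to be slightly careful and track exactly which derivative-components survive, but permutation invariance lets one reduce to the $\bz_1$-block as in Lemma \ref{lem:mf:non-relativistic-two-body}), kill the first summand by volume preservation, and bound the second by Cauchy--Schwarz and the first line of \eqref{eqn:mf:mean-field-consistency}:
\begin{align*}
\frac{\intd}{\intd t}\Lpn{\Del{1}_{\bz_1}\alpha_N}{2}{\vec{\bz}}^2 \leq 2\Lpn{\symp{\Del{1}_{\bz_1}H_N}{\alpha_N}}{2}{\vec{\bz}}\,\Lpn{\Del{1}_{\bz_1}\alpha_N}{2}{\vec{\bz}} \leq \left(1+C_1^2\right)\Lpn{\Del{1}_{\bz_1}\alpha_N}{2}{\vec{\bz}}^2,
\end{align*}
using $2ab \leq a^2 + b^2$ with $a = \Lpn{\Del{1}_{\bz_1}\alpha_N}{2}{\vec{\bz}}$, $b = C_1\Lpn{\Del{1}_{\bz_1}\alpha_N}{2}{\vec{\bz}}$ absorbed the other way. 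Grönwall then gives $y_1(t) \leq y_1(0)e^{(1+C_1^2)t}$, i.e. the first displayed inequality, and $\leq M^2 e^{(1+C_1^2)t}$ by the hypothesis on $\mathring\alpha_N$.

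For the second bound the same scheme applies to $\Del{2}_{(\bz_1,\bz_2)}\alpha_N$, but now the Leibniz expansion of $\Del{2}_{(\bz_1,\bz_2)}\symp{H_N}{\alpha_N}$ produces three types of terms: $\symp{H_N}{\Del{2}\alpha_N}$ (killed), $\symp{\Del{2}_{(\bz_1,\bz_2)}H_N}{\alpha_N}$ (bounded by $C_2\Lpn{\Del{1}_{\bz_1}\alpha_N}{2}{\vec{\bz}}$ via the second line of \eqref{eqn:mf:mean-field-consistency}), and the mixed terms $\symp{\Del{1}_{(\bz_1,\bz_2)}H_N}{\Del{1}_{(\bz_1,\bz_2)}\alpha_N}$ (bounded by $C_3\Lpn{\Del{2}_{(\bz_1,\bz_2)}\alpha_N}{2}{\vec{\bz}}$ via the third line). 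Pairing with $\Del{2}\alpha_N$ and using Cauchy--Schwarz plus $2ab\le a^2+b^2$ repeatedly yields a differential inequality of the form
\begin{align*}
\frac{\intd}{\intd t}y_2(t) \leq \left(3 + 2C_3^2\right)y_2(t) + C_2^2\, y_1(t) \leq \left(3+2C_3^2\right)y_2(t) + C_2^2\, y_1(0)\, e^{(1+C_1^2)t},
\end{align*}
where in the last step one inserts the already-proved bound on $y_1$. This is a linear scalar ODE inequality with an explicit exponential forcing term; solving it by the integrating factor $e^{-(3+2C_3^2)t}$ and computing $\int_0^t e^{(1+C_1^2)s}\,\intd s = \frac{e^{(1+C_1^2)t}-1}{1+C_1^2}$ produces exactly the claimed closed form, and the final estimate follows by bounding $y_1(0), y_2(0)\le M^2$.

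The main obstacle I expect is purely bookkeeping: carefully verifying the Leibniz-type identity for derivatives of the Poisson bracket $\symp{H_N}{\alpha_N}$ so that the resulting terms match precisely the left-hand sides of \eqref{eqn:mf:mean-field-consistency} (in particular exploiting permutation symmetry to pass from derivatives with respect to a generic coordinate block to the $\bz_1$- or $(\bz_1,\bz_2)$-block, and checking that cross terms indeed have the structure $\symp{\Del{1}H_N}{\Del{1}\alpha_N}$ rather than something not covered by the definition), together with keeping track of the numerical constants so the exponents come out as $\tfrac12(1+C_1^2)$ and $\tfrac32 + C_3^2$ and not merely ``some exponential.'' There is also a minor density/regularity point: the computation is first carried out for smooth compactly supported $\alpha_N$ (so that all manipulations are justified and $t\mapsto \|\Del{k}\alpha_N(t)\|^2$ is differentiable), and then extended to general $\mathring\alpha_N\in\Wp{2,2}{\vec{\bz}}$ by approximation, using that the solution map preserves $\Wp{2,2}{\vec{\bz}}$ and is $\Lp{2}{}$-continuous in the data (Theorem \ref{thm:mf:existence-n-body}).
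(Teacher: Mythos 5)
Your proposal reproduces the paper's argument essentially verbatim: the Leibniz expansion of $\Del{k}\symp{H_N}{\alpha_N}$, killing the pure transport term via the vanishing integral of a Poisson bracket, bounding the remainder by Cauchy--Schwarz and \eqref{eqn:mf:mean-field-consistency}, and Gronwall, with the first bound fed into the differential inequality for the second, exactly as you write. The one subtlety you gloss over is that with $\Gamma$ only $\Cp{3}{\bx}$ even smooth compactly supported data produce merely $\Cp{2}{\vec{\bz}}$ solutions, so the pairing $\SP{\Del{2}\alpha_N}{\symp{H_N}{\Del{2}\alpha_N}}$ involves a third derivative of $\alpha_N$ that need not exist; the paper handles this by first mollifying $\Gamma$ so that the characteristic flow is $\Cp{3}{\vec{\bz}}$, passing to the limit in $\varepsilon$, and only then performing the density extension to $\Wp{2,2}{\vec{\bz}}$ that you describe.
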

\begin{proof}
Let $\alpha_N$ be the solution for some permutation invariant initial datum $\mathring{\alpha}_N$, which for now is supposed to be $\Cp{\infty}{\vec{\bz}}$ with compact support. Then for any $t\geq 0$
\begin{align*}
\Lpn{\Del{1}_{\bz_1}\alpha_N(t)}{2}{\vec{\bz}}^2 - \Lpn{\Del{1}_{\bz_1}\mathring{\alpha}_N}{2}{\vec{\bz}}^2 
=& \int_{0}^{t} \partial_\tau\Lpn{\Del{1}_{\bz_1}\alpha_N(\tau)}{2}{\vec{\bz}}^2 \intd\tau = 2\Re \int_{0}^{t} \SP{\Del{1}_{\bz_1}\alpha_N(\tau)}{\partial_\tau\Del{1}_{\bz_1}\alpha_N(\tau)} \intd\tau \\
=& -2 \Re\int_{0}^{t} \int_{\RZN} \left(\Del{1}_{\bz_1}\bar{\alpha}_N(\tau,\vec{\bz})~\cdot \Del{1}_{\bz_1} \symp{H_N}{\alpha_N(\tau)}(\vec{\bz})\right)~ \intd\vec{\bz}~ \intd\tau \\
=& -\int_{0}^{t} \underbrace{\int_{\RZN} \symp{H_N}{\abs{\Del{1}_{\bz_1}\alpha(\tau)}^2}(\vec{\bz})~ \intd\vec{\bz}}_{=0} \intd\tau \\
&-2 \Re\int_{0}^{t} \int_{\RZN} \left(\Del{1}_{\bz_1}\bar{\alpha}_N(\tau,\vec{\bz})~\cdot \symp{\Del{1}_{\bz_1}H_N}{\alpha_N(\tau)}(\vec{\bz})\right)~ \intd\vec{\bz}~ \intd\tau \\
\leq&~ 2 \int_{0}^{t} \Lpn{\Del{1}_{\bz_1}\alpha_N(\tau)}{2}{\vec{\bz}} \Lpn{\symp{\Del{1}_{\bz_1}H_N}{\alpha_N(\tau)}}{2}{\vec{\bz}}~ \intd\tau \\
\leq& \int_{0}^{t} \left(\Lpn{\Del{1}_{\bz_1}\alpha_N(\tau)}{2}{\vec{\bz}}^2 + \Lpn{\symp{\Del{1}_{\bz_1}H_N}{\alpha_N(\tau)}}{2}{\vec{\bz}}^2\right)~ \intd\tau \\
\leq&~ (1+C_1^2) \int_{0}^{t} \Lpn{\Del{1}_{\bz_1}\alpha_N(\tau)}{2}{\vec{\bz}}^2~ \intd\tau,
\end{align*}
yielding the first claim by Gronwall's Lemma. For the second claim, we have in a similar way
\begin{align*}
&\Lpn{\Del{2}_{(\bz_1,\bz_2)}\alpha_N(t)}{2}{\vec{\bz}}^2 - \Lpn{\Del{2}_{(\bz_1,\bz_2)}\mathring{\alpha}_N}{2}{\vec{\bz}}^2 
= \int_{0}^{t} \partial_\tau \Lpn{\Del{2}_{(\bz_1,\bz_2)}\alpha_N(\tau)}{2}{\vec{\bz}}^2~ \intd\tau \\
=&~ 2\Re \int_{0}^{t} \SP{\Del{2}_{(\bz_1,\bz_2)}\alpha_N(\tau)} {\Del{2}_{(\bz_1,\bz_2)}\partial_\tau\alpha_N(\tau)}~ \intd\tau 
= -\int_{0}^{t} \underbrace{\int_{\RZN} \symp{H_N}{\abs{\Del{2}_{(\bz_1,\bz_2)}\alpha_N(\tau)}^2}(\vec{\bz})~ \intd\vec{\bz}}_{=0}~ \intd\tau \\
&-2\Re \int_{0}^{t} \SP{\Del{2}_{(\bz_1,\bz_2)}\alpha_N(\tau)}{2\symp{\Del{1}_{(\bz_1,\bz_2)}H_N}{\Del{1}_{(\bz_1,\bz_2)}\alpha_N(\tau)} + \symp{\Del{2}_{(\bz_1,\bz_2)}H_N}{\alpha_N(\tau)}}~ \intd\tau \\
\leq&~ \int_{0}^{t} \left(3\Lpn{\Del{2}_{(\bz_1,\bz_2)}\alpha_N(\tau)}{2}{\vec{\bz}}^2 + 2\Lpn{\symp{\Del{1}_{(\bz_1,\bz_2)}H_N}{\Del{1}_{(\bz_1,\bz_2)}\alpha_N(\tau)}}{2}{\vec{\bz}}^2 + \Lpn{\symp{\Del{2}_{(\bz_1,\bz_2)}H_N}{\alpha_N(\tau)}}{2}{\vec{\bz}}^2\right)~ \intd\tau \\
\leq&~ C_2^2\int_{0}^{t} \Lpn{\Del{1}_{\bz_1}\alpha_N(\tau)}{2}{\vec{\bz}}^2~ \intd\tau + (3+2C_3^2)\int_{0}^{t} \Lpn{\Del{2}_{(\bz_1,\bz_2)}\alpha_N(\tau)}{2}{\vec{\bz}}^2~ \intd\tau.
\end{align*}
As intermediate steps in the computation require $\Cp{3}{\vec{\bz}}$ regularity of $\alpha_N$, it is necessary to extend this inequality in two steps. At first, consider a smoothened potential $\Gamma_\varepsilon\to\Gamma$ in the Hamiltonian $H_N$, s.t. the solution of the characteristic system is $\Cp{3}{\vec{\bz}}$ at least. As the support of $\alpha_N$ is compact in $I\times\RZN$ for any compact time interval $I\subset\RN$, both sides of this inequality converge. In a second step, the inequality can be extended to $\Wp{2,2}{\vec{\bz}}$, because the computation actually proves boundedness of the linear map
\begin{equation*}
\Wp{2,2}{\vec{\bz}} \to \Lp{\infty}{t,\text{loc}}\Lp{2}{\vec{\bz}}, \quad \mathring{\alpha}_N \mapsto \Del{2}_{(\bz_1,\bz_2)}\alpha_N. \qedhere
\end{equation*}
\end{proof}

\begin{thm}[Well-posedness for the mean field system]
\label{thm:mf:existence-hvl}
Let $\mathring{\alpha}\in\Wp{1,2}{\bz}$ be given. Then there is a unique global solution $\alpha(t,\bz)$ of \eqref{eqn:mf:hamilton-vlasov} with initial datum $\alpha(0)=\mathring{\alpha}$. For any $T<\infty$ the solution map
\begin{equation*}
\begin{array}{ccc}
\Wp{1,2}{\bz} & \to & \Lp{\infty}{t}\Lp{2}{\bz} \\
\mathring{\alpha} & \mapsto & \alpha : 
\begin{array}{ccc}
[0,T] & \to & \Lp{2}{\bz} \\
t & \mapsto & \alpha(t)
\end{array}
\end{array}
\end{equation*}
is locally Lipschitz continuous. Additionally, the image of this map is in $\Lp{\infty}{t}\Lp{2}{\bz}\cap\Lp{\infty}{t,\text{loc}}\Wp{1,2}{\bz}$ for $T=\infty$.
\end{thm}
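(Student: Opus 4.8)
\noindent The plan is the standard one for a semilinear transport equation: first a local contraction, then an a priori $\Wp{1,2}{\bz}$ estimate that excludes blow-up, and finally an $\Lp{2}{\bz}$ stability estimate giving uniqueness and Lipschitz dependence. It is convenient to rewrite \eqref{eqn:mf:hamilton-vlasov} as
\begin{equation*}
\partial_t\alpha = -\bv\cdot\naX\alpha + \convol{\nabla\Gamma}{\abs{\alpha}^2}(\bx)\cdot\naV\alpha + \mi\,W[\alpha](\bx)\,\alpha, \qquad W[\alpha]:=\mi\,\convol{\Gamma}{\symp{\bar\alpha}{\alpha}},
\end{equation*}
and to note three structural facts. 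First, $W[\alpha]$ is \emph{real} --- $\Gamma$ is real and $\symp{\bar\alpha}{\alpha}$ is pointwise purely imaginary --- so $\mi W[\alpha]\alpha$ is a pure phase, while the first-order part is a divergence-free transport with characteristics $\dot{\bx}=\bv$, $\dot{\bv}=-\convol{\nabla\Gamma}{\abs{\alpha(t)}^2}(\bx)$, as in the proof of Theorem~\ref{thm:mf:existence-n-body}; hence $\Lpn{\alpha(t)}{2}{\bz}$ is conserved. Secondly, the fibrewise identity $\int_{\RV}\symp{\bar\alpha}{\alpha}\,\intd\bv=\naX\cdot\int_{\RV}\bar\alpha\,\naV\alpha\,\intd\bv$ (two integrations by parts) gives $W[\alpha]=\mi\sum_j \partial_j\Gamma * \big(\int_{\RV}\bar\alpha\,\partial_{v_j}\alpha\,\intd\bv\big)$, so neither $W[\alpha]$ nor $\naX W[\alpha]$ ever involves more than $\Del{2}\Gamma$. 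Thirdly, Young's inequality and Cauchy--Schwarz bound all coefficients and their $\bx$-derivatives in $\Lp{\infty}{\bx}$: $\Lpn{\convol{\nabla\Gamma}{\abs{\alpha}^2}}{\infty}{\bx}$ and $\Lpn{\convol{\Del{2}\Gamma}{\abs{\alpha}^2}}{\infty}{\bx}$ by $C_\Gamma\Lpn{\alpha}{2}{\bz}^2$, and $\Lpn{W[\alpha]}{\infty}{\bx}$, $\Lpn{\naX W[\alpha]}{\infty}{\bx}$ by $C_\Gamma\Lpn{\alpha}{2}{\bz}\Lpn{\naV\alpha}{2}{\bz}$, with $C_\Gamma$ as in \eqref{eqn:mf:bounded-force}.

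\noindent For local well-posedness I would run a fixed point for the map $\mathcal T:\alpha\mapsto\beta$, with $\beta$ the solution of the equation above when $\abs{\alpha}^2$ and $W[\alpha]$ are \emph{frozen}. Its solution operator is an $\Lp{2}{\bz}$-isometry (the volume-preserving characteristic flow composed with a unit-modulus phase), and --- differentiating in $\bz$, inserting the bounds above, and using Gronwall --- it maps a ball of $\Cm{0}{[0,T'];\Wp{1,2}{\bz}}$ into itself for a time $T'$ controlled by $\Wpn{\mathring\alpha}{1,2}{\bz}$. On that ball $\mathcal T$ is a contraction in the weaker metric of $\Cm{0}{[0,T'];\Lp{2}{\bz}}$: one has $\abs{\alpha_1}^2-\abs{\alpha_2}^2=\Re\big(\overline{(\alpha_1-\alpha_2)}\,(\alpha_1+\alpha_2)\big)$ and a matching bilinear bound for $W[\alpha_1]-W[\alpha_2]$ --- obtained by shifting the velocity derivative off $\alpha_1-\alpha_2$ with one further integration by parts in $\bv$ --- so both are $\le C_\Gamma R\,\Lpn{\alpha_1-\alpha_2}{2}{\bz}$ in $\Lp{\infty}{\bx}$, and the $\Lp{2}{\bz}$-isometry property together with a Duhamel representation then yields the contraction once $T'$ is small. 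A standard Bona--Smith type argument (uniform $\Wp{1,2}{\bz}$ bound, Cauchy in $\Lp{2}{\bz}$, lower semicontinuity of the $\Wp{1,2}{\bz}$-norm) promotes the fixed point to a solution in $\Cm{0}{[0,T'];\Wp{1,2}{\bz}}$.

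\noindent Globalisation comes from the natural energy estimate, done rigorously first for $\Cp{\infty}{\bz}$ compactly supported data (smoothing $\Gamma$ beforehand if needed, exactly as in Lemma~\ref{lem:mf:n-derivative-bounds}): differentiate the equation in $\bz$, test against $\naZ\alpha$, integrate the transport terms away, and exploit that \emph{every} surviving coefficient carries a factor $\Lpn{\alpha}{2}{\bz}=\Lpn{\mathring\alpha}{2}{\bz}$, so that even the a priori worst term $\mi(\naX W[\alpha])\,\alpha$ is only quadratic in $\Lpn{\naZ\alpha}{2}{\bz}$. This gives
\begin{equation*}
\frac{\intd}{\intd t}\,\Lpn{\naZ\alpha(t)}{2}{\bz}^2 \;\le\; C\big(\Lpn{\mathring\alpha}{2}{\bz},C_\Gamma\big)\,\Lpn{\naZ\alpha(t)}{2}{\bz}^2 ,
\end{equation*}
hence a finite $\Wp{1,2}{\bz}$ bound on every $[0,T]$; the local solution therefore never leaves $\Wp{1,2}{\bz}$ and continues globally, and approximating a general $\mathring\alpha\in\Wp{1,2}{\bz}$ by smooth data (passing to the limit by the stability estimate below) places the global solution in $\Lp{\infty}{t}\Lp{2}{\bz}\cap\Lp{\infty}{t,\text{loc}}\Wp{1,2}{\bz}$.

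\noindent Finally, rerunning the $\Lp{2}{\bz}$ difference computation from the contraction step for two genuine solutions $\alpha_1,\alpha_2$ --- now bounding the coefficient differences through the $\Wp{1,2}{\bz}$ a priori bounds rather than a fixed ball --- gives
\begin{equation*}
\Lpn{\alpha_1(t)-\alpha_2(t)}{2}{\bz} \;\le\; e^{C(T,\mathring\alpha_1,\mathring\alpha_2)\,t}\,\Lpn{\mathring\alpha_1-\mathring\alpha_2}{2}{\bz},
\end{equation*}
which is at once uniqueness and the asserted local Lipschitz continuity of the solution map into $\Lp{\infty}{t}\Lp{2}{\bz}$. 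The step I expect to be the main obstacle is the nonlocal phase $W[\alpha]$: both the $\Lp{2}{\bz}$ contraction and, above all, the \emph{quadratic} (not cubic) a priori bound hinge on the divergence identity for $\int_{\RV}\symp{\bar\alpha}{\alpha}\,\intd\bv$ together with the right integration by parts, and the passage from smooth to $\Wp{1,2}{\bz}$ data must be threaded with some care because the intermediate manipulations differentiate the equation.
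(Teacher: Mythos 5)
Your proposal is essentially the paper's proof, reorganized. Both arguments freeze the coefficients to produce a linear transport-plus-phase equation solved along Hamiltonian characteristics (exploiting the $\Lp{2}{\bz}$-isometry and the observation that the convolution phase $\convol{\Gamma}{\symp{\bar\alpha}{\alpha}}$ is purely imaginary), both obtain the uniform $\Wp{1,2}{\bz}$ bound from the same Gronwall estimate whose linearity rests on $\Lp{2}{\bz}$-norm conservation, and both close with an $\Lp{2}{\bz}$ stability estimate --- including the same integration by parts in $\bv$ you flag as the obstacle, which is needed to bound $W[\alpha_1]-W[\alpha_2]$ by $\Lpn{\alpha_1-\alpha_2}{2}{\bz}$ --- yielding uniqueness and the Lipschitz solution map; the only organizational difference is that the paper runs the Picard iteration directly on an arbitrary $[0,T]$ and deduces global convergence from a factorial bound of type $C(T)^{n}T^{n}/n!$, so no separate local-contraction-plus-continuation step is needed.
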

\begin{proof}
\textbf{(i) Iteration scheme.} Consider $\mathring{\alpha}\in\Wp{1,2}{\bz}\cap\Cp{1}{\bz}$,  $\Wpn{\mathring{\alpha}}{1,2}{\bz}\leq M$, compactly supported. Choose any $T>0$. We define $\alpha_0(t,\bz)\equiv\mathring{\alpha}(\bz)$, and given $\alpha_n$, we define the objects
\begin{align*}
\rho_n(t,\bx) \equiv& \int_{\RV} \abs{\alpha_n(t,\bz)}^2~ \intd\bv, \\
F_n(t,\bx) \equiv& \convol{-\nabla\Gamma}{\rho_n(t)}(\bx), \\
K_n(t,\bx) \equiv& \convol{\Gamma}{\symp{\bar{\alpha}_n(t)}{\alpha_n(t)}}(\bx)
= \convol{\nabla\Gamma}{\int_{\RV} \bar{\alpha}_n(t,\bx,\bv)~ \naV\alpha_n(t,\bx,\bv)~ \intd\bv}(\bx),
\end{align*}
which are all well-defined for compactly supported $\alpha(t)\in\Cp{1}{\bz}$. In fact, $F_n(t),K_n(t)\in\Cp{2}{\bx}$ and all their derivatives are continuous in $(t,\bx)$. Further $Z_n(t,t_0,\bz)=(X_n,V_n)(t,t_0,\bz)$ is the solution map of the non-autonomous Hamiltonian system
\begin{equation}
\label{eqn:mf:iteration-characteristic-system}
\partial_t X_n(t,t_0,\bz) = V_n(t,t_0,\bz), \quad \partial_t V_n(t,t_0,\bz) = F_n(t,X_n(t,t_0,\bz)).
\end{equation}
Now $\alpha_{n+1}$ shall be defined iteratively from $\alpha_n$ as the unique solution of the linear equation
\begin{equation*}
\partial_t\alpha_{n+1}(t,\bz) = \symp{\frac{\abs{\bv}^2}{2} + \convol{\Gamma}{\abs{\alpha_n(t)}^2}(\bx)}{\alpha_{n+1}(t)}(\bz) - \convol{\Gamma}{\symp{\bar{\alpha}_n(t)}{\alpha_n(t)}}(\bx)~ \alpha_{n+1}(t,\bz)
\end{equation*}
with initial datum $\alpha_{n+1}(0)=\mathring{\alpha}$. In fact, we have the equation
\begin{equation}
\label{eqn:mf:iteration-transport}
\alpha_{n+1}(t,\bz) = \mathring{\alpha}(Z_n(0,t,\bz))~ \exp\left(-\int_{0}^{t} K_n(\tau,X_n(\tau,t,\bz))~ \intd\tau\right),
\end{equation}
which proves, that also $\alpha_{n+1}(t)$ will be compactly supported and $\Cp{1}{\bz}$ for any $t$. Therefore, the iteration scheme is well-defined. \\

\noindent\textbf{(ii) Uniform bounds.} In a next step, one determines bounds of the transport coefficients which are uniform in $n$ and depend on $M$, only. With $C_\Gamma$ from \eqref{eqn:mf:bounded-force},
\begin{align*}
\Lpn{\rho_n(t)}{1}{\bx} =& \Lpn{\alpha_n(t)}{2}{\bz}^2 \leq M^2, \\
\Lpn{F_n(t)}{\infty}{\bx} =& \Lpn{\convol{\nabla\Gamma}{\rho_n(t)}}{\infty}{\bx} \leq \Lpn{\nabla\Gamma}{\infty}{\bx} \Lpn{\rho_n(t)}{1}{\bx} \leq C_\Gamma M^2, \text{and} \\
\Lpn{\nabla F_n(t)}{\infty}{\bx} =& \Lpn{\convol{\Del{2}\Gamma}{\rho_n(t)}}{\infty}{\bx} \leq \Lpn{\Del{2}\Gamma}{\infty}{\bx} \Lpn{\rho_n(t)}{1}{\bx} \leq C_\Gamma M^2.
\end{align*}
This yields a bound on the solution map's differential in matrix operator norm
\begin{align*}
&&Z_n(t,t_0,\bz) =&~ \bz + \int_{t_0}^{t} \twovec{V_n(\tau,t_0,\bz)}{F_n(\tau,X_n(\tau,t_0,\bz))}~ \intd\tau \\
\Rightarrow&&\Del{}Z_n(t,t_0,\bz) =&~ \unity + \int_{t_0}^{t} \left(\begin{matrix}
0 & \unity \\ \nabla F_n(\tau,X_n(\tau,t_0,\bz)) & 0
\end{matrix}\right) \cdot \Del{}Z_n(\tau,t_0,\bz)~ \intd\tau \\
\Rightarrow&&~ \Lpn{\Del{}Z_n(t,t_0)}{\infty}{\bz} \leq&~ 1 + \int_{t_0}^{t} (1+C_\Gamma M^2) \Lpn{\Del{}Z_n(\tau,t_0)}{\infty}{\bz}~ \intd\tau \leq e^{(1+C_\Gamma M^2)\abs{t-t_0}}.
\end{align*}
Further one gets the following inequalities for $K_n$
\begin{align*}
\Lpn{K_n(t)}{\infty}{\bx} \leq&~ \Lpn{\Del{1}\Gamma}{\infty}{\bx} \Lpn{\alpha_n(t)}{2}{\bz} \Lpn{\nabla\alpha_n(t)}{2}{\bz} \leq C_\Gamma M \Lpn{\nabla\alpha_n(t)}{2}{\bz}~ \text{and} \\
\Lpn{\nabla K_n(t)}{\infty}{\bx} \leq&~ \Lpn{\Del{2}\Gamma}{\infty}{\bx} \Lpn{\alpha_n(t)}{2}{\bz} \Lpn{\nabla\alpha_n(t)}{2}{\bz} \leq C_\Gamma M \Lpn{\nabla\alpha_n(t)}{2}{\bz}.
\end{align*}
Combining all these results, one computes an iterative Gronwall scheme with help of \eqref{eqn:mf:iteration-transport}:
\begin{align*}
\Wpn{\alpha_{n+1}(t)}{1,2}{\bz} \leq& \Lpn{\alpha_{n+1}(t)}{2}{\bz} + \Lpn{\nabla\alpha_{n+1}(t)}{2}{\bz} \\
\leq& \Lpn{\mathring{\alpha}}{2}{\bz} + \Lpn{\nabla\mathring{\alpha}}{2}{\bz} \Lpn{\Del{}Z_n(0,t)}{\infty}{\bz} + \Lpn{\mathring{\alpha}}{2}{\bz} \int_{0}^{t} \Lpn{\nabla K_n(\tau)}{\infty}{\bx} \Lpn{\Del{}Z_n(\tau,t)}{\infty}{\bz}~ \intd\tau \\
\leq&~ M \left(1+e^{(1+C_\Gamma M^2) t}\right) + C_\Gamma M^2 \int_{0}^{t} \Wpn{\alpha_n(\tau)}{1,2}{\bz} e^{(1+C_\Gamma M^2)(t-\tau)}~ \intd\tau.
\end{align*}
Hence, if $Q_M: \RNp\to\RNp$ is the fixed point of the integral equation
\begin{equation*}
Q_M(t) = M \left(1+e^{(1+C_\Gamma M^2) t}\right) + C_\Gamma M^2 \int_{0}^{t} Q_M(\tau)~ e^{(1+C_\Gamma M^2)(t-\tau)}~ \intd\tau
\end{equation*}
given by
\begin{equation}
\label{eqn:mf:Qm-bound}
Q_M(t) = M\frac{1+C_\Gamma M^2+C_\Gamma M^2~ e^{(1+2C_\Gamma M^2)t}}{1+2C_\Gamma M^2},
\end{equation}
one obviously finds $\Wpn{\alpha_0(t)}{1,2}{\bz} \leq M \leq Q_M(t)$ and inductively $\Wpn{\alpha_n(t)}{1,2}{\bz} \leq Q_M(t)~\forall n$. \\

\noindent\textbf{(iii) Cauchy sequence.} The previous steps allow to compute an iterative Gronwall scheme in $\Lp{\infty}{t}\Lp{2}{\bz}$, $t\in[0,T]$:
\begin{align*}
&\Lpn{\alpha_{n+2}(t)-\alpha_{n+1}(t)}{2}{\bz}^2 = \int_{0}^{t} \partial_\tau\Lpn{\alpha_{n+2}(\tau)-\alpha_{n+1}(\tau)}{2}{\bz}^2~ \intd\tau \\
=&~ 2\Re\int_{0}^{t} \SP{\alpha_{n+2}(\tau)-\alpha_{n+1}(\tau)}{\partial_\tau\alpha_{n+2}(\tau)-\partial_\tau\alpha_{n+1}(\tau)}~\intd\tau \\
=&~ 2\Re\int_{0}^{t} \SP{\alpha_{n+2}(\tau)-\alpha_{n+1}(\tau)}{\symp{\frac{\abs{\bv}^2}{2}+\convol{\Gamma}{\abs{\alpha_{n+1}(\tau)}^2}}{\alpha_{n+2}(\tau)}} \intd\tau \\
&- 2\Re\int_{0}^{t} \SP{\alpha_{n+2}(\tau)-\alpha_{n+1}(\tau)}{\symp{\frac{\abs{\bv}^2}{2}+\convol{\Gamma}{\abs{\alpha_n(\tau)}^2}}{\alpha_{n+1}(\tau)}} \intd\tau \\
&- 2\Re\int_{0}^{t} \SP {\alpha_{n+2}(\tau)-\alpha_{n+1}(\tau)} {K_{n+1}(\tau)~ \alpha_{n+2}(\tau) - K_n(\tau)~ \alpha_{n+1}(\tau)}~ \intd\tau \\
=&~ 2\Re\int_{0}^{t} \SP{\alpha_{n+2}(\tau)-\alpha_{n+1}(\tau)}{\symp{\frac{\abs{\bv}^2}{2}+\convol{\Gamma}{\abs{\alpha_{n+1}(\tau)}^2}}{\alpha_{n+2}(\tau)-\alpha_{n+1}(\tau)}} \intd\tau \\
&+ 2\Re\int_{0}^{t} \SP{\alpha_{n+2}(\tau)-\alpha_{n+1}(\tau)}{\symp{\convol{\Gamma}{\left(\abs{\alpha_{n+1}(\tau)}^2-\abs{\alpha_n(\tau)}^2\right)}}{\alpha_{n+1}(\tau)}} \intd\tau \\
&- 2\Re\int_{0}^{t} \SP {\alpha_{n+2}(\tau)-\alpha_{n+1}(\tau)} {K_{n+1}(\tau)~ \left(\alpha_{n+2}(\tau) - \alpha_{n+1}(\tau)\right)}~ \intd\tau \\
&- 2\Re\int_{0}^{t} \SP {\alpha_{n+2}(\tau)-\alpha_{n+1}(\tau)} {\left(K_{n+1}(\tau) - K_n(\tau)\right)~ \alpha_{n+1}(\tau)}~ \intd\tau \\
\leq&~ \int_{0}^{t} \underbrace{\int_{\RZ} \symp{\frac{\abs{\bv}^2}{2}+\convol{\Gamma}{\abs{\alpha_{n+1}(\tau)}^2}}{\abs{\alpha_{n+2}(\tau)-\alpha_{n+1}(\tau)}^2}~ \intd\bz}_{=0}~ \intd\tau \\
&+ 2\int_{0}^{t} \Lpn{\alpha_{n+2}(\tau)-\alpha_{n+1}(\tau)}{2}{\bz} \Lpn{\symp{\convol{\Gamma}{\left(\abs{\alpha_{n+1}(\tau)}^2-\abs{\alpha_n(\tau)}^2\right)}}{\alpha_{n+1}(\tau)}}{2}{\bz} \intd\tau \\
&- 2\underbrace{\Re\int_{0}^{t} \int_{\RZ} \underbrace{K_{n+1}(\tau,\bx)}_{\in\mi\RN}~ \abs{\alpha_{n+2}(\tau,\bz) - \alpha_{n+1}(\tau,\bz)}^2~ \intd\bz~ \intd\tau}_{=0} \\
&+ 2\int_{0}^{t} \Lpn{\alpha_{n+2}(\tau)-\alpha_{n+1}(\tau)}{2}{\bz} \Lpn{\left(K_{n+1}(\tau) - K_n(\tau)\right)~ \alpha_{n+1}(\tau)}{2}{\bz}~ \intd\tau \\
\leq&~ 2\int_{0}^{t} \Lpn{\alpha_{n+2}(\tau)-\alpha_{n+1}(\tau)}{2}{\bz}^2~ \intd\tau + \int_{0}^{t} \Lpn{\symp{\convol{\Gamma}{\left(\abs{\alpha_{n+1}(\tau)}^2-\abs{\alpha_n(\tau)}^2\right)}}{\alpha_{n+1}(\tau)}}{2}{\bz}^2 \intd\tau \\
&+ \int_{0}^{t} \Lpn{\left(K_{n+1}(\tau) - K_n(\tau)\right)~ \alpha_{n+1}(\tau)}{2}{\bz}^2~ \intd\tau,
\end{align*}
where for the last two terms, we use that
\begin{align*}
&\Lpn{\symp{\convol{\Gamma}{\left(\abs{\alpha_{n+1}(\tau)}^2-\abs{\alpha_n(\tau)}^2\right)}}{\alpha_{n+1}(\tau)}}{2}{\bz} \\
\leq&~ \Lpn{\nabla\Gamma}{\infty}{\bx} \left(\Lpn{\alpha_{n+1}(\tau)}{2}{\bz} + \Lpn{\alpha_n(\tau)}{2}{\bz}\right) \Lpn{\alpha_{n+1}(\tau)-\alpha_n(\tau)}{2}{\bz} \Lpn{\nabla\alpha_{n+1}(\tau)}{2}{\bz} \\
\leq&~ 2C_\Gamma M Q_M(\tau) \Lpn{\alpha_{n+1}(\tau)-\alpha_n(\tau)}{2}{\bz} 
\end{align*}
and
\begin{align*}
&\Lpn{\left(K_{n+1}(\tau) - K_n(\tau)\right)~ \alpha_{n+1}(\tau)}{2}{\bz} 
\leq \Lpn{K_{n+1}(\tau)-K_n(\tau)}{\infty}{\bx} \Lpn{\alpha_{n+1}(\tau)}{2}{\bz} \\
\leq&~ \Lpn{\convol{\nabla\Gamma}{\int_{\RV} \left(\bar{\alpha}_{n+1}(\tau) \naV\alpha_{n+1}(\tau) - \bar{\alpha}_{n}(\tau) \naV\alpha_{n}(\tau)\right)~\intd\bv}}{\infty}{\bx} \Lpn{\alpha_{n+1}(\tau)}{2}{\bz} \\
\leq&~ \Lpn{\nabla\Gamma}{\infty}{\bx} \Lpn{\alpha_{n+1}(\tau)}{2}{\bz} \left(\Lpn{\nabla\alpha_{n+1}(\tau)}{2}{\bz} + \Lpn{\nabla\alpha_n(\tau)}{2}{\bz}\right) \Lpn{\alpha_{n+1}(\tau)-\alpha_n(\tau)}{2}{\bz} \\
\leq&~ 2C_\Gamma M Q_M(\tau) \Lpn{\alpha_{n+1}(\tau)-\alpha_n(\tau)}{2}{\bz}.
\end{align*}
Combining these estimates, along with an inductive Gronwall and $\Lpn{\alpha_1(\tau)-\alpha_0(\tau)}{2}{\bz}^2 \leq 4M^2$, one finds for any $0\leq t\leq T$
\begin{align*}
\Lpn{\alpha_{n+2}(t)-\alpha_{n+1}(t)}{2}{\bz}^2 \leq&~ 8C_\Gamma^2 M^2 \int_{0}^{t} Q_M^2(\tau) \Lpn{\alpha_{n+1}(\tau)-\alpha_{n}(\tau)}{2}{\bz}^2~ \intd\tau \\
&+ 2\int_{0}^{t} \Lpn{\alpha_{n+2}(\tau)-\alpha_{n+1}(\tau)}{2}{\bz}^2 \intd\tau \\
\leq&~ 8C_\Gamma^2 M^2 Q_M^2(t) e^{2t} \int_{0}^{t} \Lpn{\alpha_{n+1}(\tau)-\alpha_{n}(\tau)}{2}{\bz}^2~ \intd\tau \\
\leq&~ \left(8C_\Gamma^2 M^2 Q_M^2(T) e^{2T}\right)^{n+1} \frac{T^{n+1}}{(n+1)!} 4M^2.
\end{align*}
As the square root of the right-hand side is summable in $n$, $(\alpha_n)$ is a Cauchy sequence in the Banach space $\Lp{\infty}{}\left([0,T];\Lp{2}{\bz}\right)$ with a unique limit $\alpha\equiv\lim_n\alpha_n$. As $T$ was arbitrary, $\alpha$ is in fact in $\Lp{\infty}{t}\Lp{2}{\bz}$ for all $t\in\RNp$. \\

\noindent\textbf{(iv) Regularity.} Fix some $t\in\RNp$. (iii) proves, that $\alpha_n(t)\to\alpha(t)$ strongly in $\Lp{2}{\bz}$ and $\Wpn{\alpha_n(t)}{1,2}{\bz}\leq Q_M(t)$. By the reflexivity of the Hilbert space $\Wp{1,2}{\bz}$, some subsequence must converge weakly to the same limit $\alpha(t)$, hence $\alpha(t)\in\Wp{1,2}{\bz}$ and $\Wpn{\alpha(t)}{1,2}{\bz}\leq Q_M(t)$. 

In addition, we remark that by the strong $\Lp{2}{\bz}$ convergence, $F_n$ and $K_n$ as well as their first two spatial derivatives converge in $\Lp{\infty}{(t,\bx)}$, i.e., their limits are still continuous. Taking the $n\to\infty$ limit of \eqref{eqn:mf:iteration-characteristic-system} and \eqref{eqn:mf:iteration-transport} proves that $\alpha\in\Cp{1}{(t,\bz)}$ is a classical solution. \\

\noindent\textbf{(v) Dense extension and uniqueness.} Now assume, that $t\mapsto\alpha(t)$ and $t\mapsto\beta(t)$ are two classical solutions with compactly supported $\Cp{1}{\bz}$ initial values $\Wpn{\mathring{\alpha}}{1,2}{\bz},\Wpn{\mathring{\beta}}{1,2}{\bz}\leq M$. By using the bounds from (ii), which remain valid because solutions are fixed points of the iteration, and estimating a Gronwall type inequality similar to (iii), we find
\begin{align*}
&\Lpn{\alpha(t)-\beta(t)}{2}{\bz}^2 - \Lpn{\mathring{\alpha}-\mathring{\beta}}{2}{\bz}^2 - 2\int_{0}^{t} \Lpn{\alpha(\tau)-\beta(\tau)}{2}{\bz}^2~ \intd\tau \\
\leq& \int_{0}^{t} \Lpn{\symp{\convol{\Gamma}{\left(\abs{\alpha(\tau)}^2-\abs{\beta(\tau)}^2\right)}}{\alpha(\tau)}}{2}{\bz}^2 \intd\tau \\
&+ \int_{0}^{t} \Lpn{\convol{\nabla\Gamma}{\int_{\RV} \left(\bar{\alpha}(\tau) \naV\alpha(\tau) - \bar{\beta}(\tau) \naV\beta(\tau)\right)~ \intd\bv}~ \alpha(\tau)}{2}{\bz}^2~ \intd\tau \\
\leq& \int_{0}^{t} \Lpn{\nabla\Gamma}{\infty}{\bx}^2 \Lpn{\nabla\alpha(\tau)}{2}{\bz}^2 \left(\Lpn{\alpha(\tau)}{2}{\bz} + \Lpn{\beta(\tau)}{2}{\bz}\right)^2 \Lpn{\alpha(\tau)-\beta(\tau)}{2}{\bz}^2~ \intd\tau \\
&+ \int_{0}^{t} \Lpn{\nabla\Gamma}{\infty}{\bx}^2 \Lpn{\alpha(\tau)}{2}{\bz}^2 \left(\Lpn{\nabla\alpha(\tau)}{2}{\bz} + \Lpn{\nabla\beta(\tau)}{2}{\bz}\right)^2 \Lpn{\alpha(\tau)-\beta(\tau)}{2}{\bz}^2~ \intd\tau \\
\leq&~ 8C_\Gamma^2 M^2 \int_{0}^{t} Q_M^2(\tau) \Lpn{\alpha(\tau)-\beta(\tau)}{2}{\bz}^2~ \intd\tau,
\end{align*}
Gronwall's Lemma therefore yields
\begin{align}
\label{eqn:mf:hvl-lipschitz}
\Lpn{\alpha(t)-\beta(t)}{2}{\bz} \leq \Lpn{\mathring{\alpha}-\mathring{\beta}}{2}{\bz} \exp\left(\int_{0}^{t} \left(1+4C_\Gamma^2 M^2 Q_M^2(\tau)\right)~ \intd\tau\right).
\end{align}
Indeed, we have constructed a solution map 
\begin{equation*}
\Wp{1,2}{\bz}\cap\Cp{1}{\bz,c} \to \Lp{\infty}{t,\text{loc}}\Lp{2}{\bz}, \quad \mathring{\alpha} \mapsto \alpha,
\end{equation*}
which is locally Lipschitz and uniquely extensible to the entire space $\Wp{1,2}{\bz}$. This gives a notion of solution for every $\Wp{1,2}{\bz}$ initial value and \eqref{eqn:mf:hvl-lipschitz} also proves uniqueness of classical solutions with compact support. By the same regularity argument as in (iv), $\Wpn{\alpha(t)}{1,2}{\bz}\leq Q_{\Wpn{\mathring{\alpha}}{1,2}{\bz}}(t)$ remains valid for any initial datum. Hence, solutions are actually in $\Lp{\infty}{t,\text{loc}}\Wp{1,2}{\bz}\cap\Lp{\infty}{t}\Lp{2}{\bz}$. \qedhere
\end{proof}

\noindent Theorem \ref{thm:mf:existence-hvl} yields a couple of Corollaries. It states that solutions are bounded in $\Wp{1,2}{\bz}$ and explicitly gives this bound as well as the dependence of the Lipschitz constant of the solution map on the initial data.

\begin{cor}
\label{cor:mf:derivative-bound}
Let $\alpha: \RNp \to \Wp{1,2}{\bz}$ be a solution of \eqref{eqn:mf:hamilton-vlasov}. If $\Wpn{\alpha(0)}{1,2}{\bz}\leq M$, then the solution satisfies for any $t\geq0$
\begin{equation}
\label{eqn:mf:w12-bound}
\Wpn{\alpha(t)}{1,2}{\bz} \leq \bound{M}{\Wp{1,2}{\bz}}(t) = M\frac{1+C_\Gamma M^2+C_\Gamma M^2~ e^{(1+2C_\Gamma M^2)t}}{1+2C_\Gamma M^2}.
\end{equation}
\end{cor}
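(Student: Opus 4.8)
The plan is to observe that the asserted bound is, up to bookkeeping, already established inside the proof of Theorem \ref{thm:mf:existence-hvl}, so the work consists in identifying the given solution with the one constructed there and transferring the uniform-in-$n$ estimate to the limit. By the uniqueness statement of Theorem \ref{thm:mf:existence-hvl}, for a compactly supported $\Cp{1}{\bz}$ datum $\alpha$ coincides with the limit $\lim_n\alpha_n$ of the iteration scheme, and for a general datum in $\Wp{1,2}{\bz}$ it is the unique solution obtained by the dense extension; hence it suffices to propagate the bound $Q_M$ from the iterates to $\alpha(t)$, and then from smooth compactly supported data to arbitrary $\Wp{1,2}{\bz}$ data.

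First I would recall from part (ii) of the proof of Theorem \ref{thm:mf:existence-hvl} that every iterate obeys $\Wpn{\alpha_n(t)}{1,2}{\bz}\leq Q_M(t)$, where $Q_M$ is the fixed point of the linear Gronwall integral equation produced there and has the explicit form \eqref{eqn:mf:Qm-bound} --- which is precisely the claimed $\bound{M}{\Wp{1,2}{\bz}}(t)$. Next, since part (iii) gives $\alpha_n(t)\to\alpha(t)$ strongly in $\Lp{2}{\bz}$ while the sequence stays bounded in the reflexive space $\Wp{1,2}{\bz}$, a subsequence converges weakly in $\Wp{1,2}{\bz}$, necessarily to the same limit $\alpha(t)$, and weak lower semicontinuity of the norm yields $\Wpn{\alpha(t)}{1,2}{\bz}\leq\liminf_n\Wpn{\alpha_n(t)}{1,2}{\bz}\leq Q_M(t)$, which is exactly the argument of part (iv). Finally, for general $\mathring{\alpha}\in\Wp{1,2}{\bz}$ with $\Wpn{\mathring{\alpha}}{1,2}{\bz}\leq M$ I would approximate by compactly supported $\Cp{1}{\bz}$ data $\mathring{\alpha}^{(k)}$ carrying the same norm bound, use the local Lipschitz continuity of the solution map into $\Lp{\infty}{t}\Lp{2}{\bz}$ to get $\alpha^{(k)}(t)\to\alpha(t)$ in $\Lp{2}{\bz}$, and repeat the weak-compactness argument; this is the remark concluding part (v).

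Alternatively, one can run the estimate directly on the self-consistent flow $Z(t,t_0,\bz)$ of \eqref{eqn:mf:hamilton-vlasov}: from the transport representation $\alpha(t,\bz)=\mathring{\alpha}(Z(0,t,\bz))\exp\left(-\int_0^t K(\tau,X(\tau,t,\bz))\,\intd\tau\right)$ together with $\Lpn{\Del{}Z(t,t_0)}{\infty}{\bz}\leq e^{(1+C_\Gamma M^2)\abs{t-t_0}}$, conservation of $\Lpn{\alpha(t)}{2}{\bz}$, and $\Lpn{\nabla K(\tau)}{\infty}{\bx}\leq C_\Gamma M\Lpn{\nabla\alpha(\tau)}{2}{\bz}$, one reaches $\Wpn{\alpha(t)}{1,2}{\bz}\leq M\left(1+e^{(1+C_\Gamma M^2)t}\right)+C_\Gamma M^2\int_0^t\Wpn{\alpha(\tau)}{1,2}{\bz}\,e^{(1+C_\Gamma M^2)(t-\tau)}\,\intd\tau$, whose solution is again $Q_M$. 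There is no genuine obstacle here; the only point requiring care is that the bound must hold for every $\Wp{1,2}{\bz}$ initial datum and not merely for the smooth compactly supported ones used in the iteration, which is why the density/continuity step cannot be skipped.
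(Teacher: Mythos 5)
Your proposal is correct and follows the same route as the paper, which offers no separate argument but treats the corollary as an immediate byproduct of parts (ii), (iv), and (v) of the proof of Theorem~\ref{thm:mf:existence-hvl}: the uniform iterate bound $\Wpn{\alpha_n(t)}{1,2}{\bz}\leq Q_M(t)$, weak lower semicontinuity of the $\Wp{1,2}{\bz}$ norm along the strongly $\Lp{2}{\bz}$--convergent sequence, and the density/Lipschitz extension to general $\Wp{1,2}{\bz}$ data. Your alternative derivation via the self-consistent flow $Z$ is a legitimate repackaging of the same Gronwall inequality (the solution being a fixed point of the iteration) rather than a genuinely different argument.
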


\begin{cor}
\label{cor:mf:lipschitz-dependence}
Let $M>0$ be arbitrary and $\mathring{\alpha}, \mathring{\beta}\in\Wp{1,2}{\bz}$ with $\Wpn{\mathring{\alpha}}{1,2}{\bz}, \Wpn{\mathring{\beta}}{1,2}{\bz}\leq M$. For the corresponding solutions $t\mapsto \alpha(t)$ and $t\mapsto \beta(t)$ of \eqref{eqn:mf:hamilton-vlasov}, we have
\begin{equation*}
\Lpn{\alpha(t)-\beta(t)}{2}{\bz} \leq \Lpn{\mathring{\alpha}-\mathring{\beta}}{2}{\bz} \exp\left(\int_{0}^{t} \left(1+4C_\Gamma^2 M^2 \left(\bound{M}{\Wp{1,2}{\bz}}(\tau)\right)^2\right)~ \intd\tau\right).
\end{equation*}
\end{cor}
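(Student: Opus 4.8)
The plan is to recognize the asserted inequality as nothing but \eqref{eqn:mf:hvl-lipschitz} from step (v) of the proof of Theorem \ref{thm:mf:existence-hvl}, extended from compactly supported $\Cp{1}{\bz}$ initial data to arbitrary data in $\Wp{1,2}{\bz}$. Comparing the explicit formula \eqref{eqn:mf:Qm-bound} for $Q_M$ with the definition of $\bound{M}{\Wp{1,2}{\bz}}$ in \eqref{eqn:mf:w12-bound} (equivalently, Corollary \ref{cor:mf:derivative-bound}), one sees that the two functions of $t$ coincide. Hence, whenever $\mathring{\alpha},\mathring{\beta}$ are compactly supported, of class $\Cp{1}{\bz}$, and have $\Wp{1,2}{\bz}$-norm at most $M$, the estimate \eqref{eqn:mf:hvl-lipschitz} is already word-for-word the claim. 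All that remains is to drop the compact-support/$\Cp{1}{\bz}$ restriction by a density argument.

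First I would fix $\mathring{\alpha},\mathring{\beta}\in\Wp{1,2}{\bz}$ with $\Wpn{\mathring{\alpha}}{1,2}{\bz},\Wpn{\mathring{\beta}}{1,2}{\bz}\le M$ and choose approximating sequences $\mathring{\alpha}_k,\mathring{\beta}_k$ of compactly supported $\Cp{1}{\bz}$ functions with $\mathring{\alpha}_k\to\mathring{\alpha}$, $\mathring{\beta}_k\to\mathring{\beta}$ in $\Wp{1,2}{\bz}$ — concretely, mollify and then multiply by a smooth cutoff supported on a large ball. Since mollification does not increase the $\Wp{1,2}{\bz}$-norm (convolution with a probability density has $\Lp{2}{}$-operator norm at most one and commutes with $\nabla$), and the contribution of the cutoff to the gradient norm can be made arbitrarily small by enlarging its support, these approximants may be taken with $\Wpn{\mathring{\alpha}_k}{1,2}{\bz},\Wpn{\mathring{\beta}_k}{1,2}{\bz}\le M+1/k$. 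Denoting by $\alpha_k,\beta_k$ the corresponding classical solutions from Theorem \ref{thm:mf:existence-hvl}, inequality \eqref{eqn:mf:hvl-lipschitz} applied with the constant $M+1/k$ bounds $\Lpn{\alpha_k(t)-\beta_k(t)}{2}{\bz}$ by $\Lpn{\mathring{\alpha}_k-\mathring{\beta}_k}{2}{\bz}$ times the factor $\exp\bigl(\int_0^t\bigl(1+4C_\Gamma^2(M+1/k)^2\,(\bound{M+1/k}{\Wp{1,2}{\bz}}(\tau))^2\bigr)\,\intd\tau\bigr)$.

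Then I would let $k\to\infty$. On the left-hand side, the local Lipschitz continuity of the solution map $\Wp{1,2}{\bz}\to\Lp{\infty}{t}\Lp{2}{\bz}$ asserted in Theorem \ref{thm:mf:existence-hvl} gives $\alpha_k(t)\to\alpha(t)$ and $\beta_k(t)\to\beta(t)$ in $\Lp{2}{\bz}$, hence $\Lpn{\alpha_k(t)-\beta_k(t)}{2}{\bz}\to\Lpn{\alpha(t)-\beta(t)}{2}{\bz}$. On the right-hand side, $\Lpn{\mathring{\alpha}_k-\mathring{\beta}_k}{2}{\bz}\to\Lpn{\mathring{\alpha}-\mathring{\beta}}{2}{\bz}$, and since $M\mapsto\bound{M}{\Wp{1,2}{\bz}}(\tau)$ is a continuous function of $M$, uniformly for $\tau\in[0,t]$, the exponential factor converges to the one with $M$ in place of $M+1/k$, giving the corollary. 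I expect essentially no obstacle here: every ingredient — the estimate \eqref{eqn:mf:hvl-lipschitz}, the identification $Q_M=\bound{M}{\Wp{1,2}{\bz}}$, and the Lipschitz continuity of the solution map — is already available from Theorem \ref{thm:mf:existence-hvl}, and the only point that needs a little care is not worsening the constant $M$ in the approximation, which is dealt with by the continuity in $M$ just described.
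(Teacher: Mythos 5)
Your proposal is correct and takes essentially the same approach as the paper: Corollary~\ref{cor:mf:lipschitz-dependence} is just inequality~\eqref{eqn:mf:hvl-lipschitz} from step (v) of the proof of Theorem~\ref{thm:mf:existence-hvl}, restated with $Q_M = \bound{M}{\Wp{1,2}{\bz}}$ (compare \eqref{eqn:mf:Qm-bound} with \eqref{eqn:mf:w12-bound}), and the paper itself remarks there that the estimate extends from compactly supported $\Cp{1}{\bz}$ data to all of $\Wp{1,2}{\bz}$ by the Lipschitz continuity of the (densely defined) solution map. You have simply spelled out that density argument, including the slightly delicate point of keeping the $\Wp{1,2}{\bz}$-bound of the approximants close to $M$ and using the continuity of $M\mapsto\bound{M}{\Wp{1,2}{\bz}}(\tau)$; the paper leaves this implicit but your filling-in is accurate.
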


\noindent Proving a mean field limit also requires bounds on the second derivative of solutions of \eqref{eqn:mf:hamilton-vlasov}. 

\begin{cor}
\label{cor:mf:2nd-derivative-bound}
There is a family of continuous, rising functions $\left\{\bound{M,\Del{2}}{\Lp{2}{\bz}}:\RNp\to\RNp\right\}_{M\geq0}$, s.t. for every solution $\alpha: \RNp \to \Wp{1,2}{\bz}$ of \eqref{eqn:mf:hamilton-vlasov} with initial value $\alpha(0)=\mathring{\alpha}\in\Wp{2,2}{\bz}$ and $\Wpn{\mathring{\alpha}}{2,2}{\bz}\leq M$ we have
\begin{align*}
\Lpn{\Del{2}\alpha(t)}{2}{\bz} \leq \bound{M,\Del{2}}{\Lp{2}{\bz}}(t)
\end{align*}
and the solution is infact in $\Lp{\infty}{t,\text{loc}}\Wp{2,2}{\bz}$.
\end{cor}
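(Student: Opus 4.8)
The plan is to mimic the proof of Lemma~\ref{lem:mf:n-derivative-bounds}, but now for the nonlinear mean field equation: differentiate \eqref{eqn:mf:hamilton-vlasov} twice in $\bz$ and close a Gronwall estimate for $t\mapsto\Lpn{\Del{2}\alpha(t)}{2}{\bz}^2$, using Corollary~\ref{cor:mf:derivative-bound} to absorb every lower-order contribution. Write the equation as $\partial_t\alpha=\symp{h[\alpha]}{\alpha}-K[\alpha]\,\alpha$, where $h[\alpha](t,\bz)=\frac{\abs{\bv}^2}{2}+\convol{\Gamma}{\abs{\alpha(t)}^2}(\bx)$ is real-valued and $K[\alpha](t,\bx)=\convol{\Gamma}{\symp{\bar{\alpha}(t)}{\alpha(t)}}(\bx)$ is purely imaginary, since $\symp{\bar{\alpha}}{\alpha}$ is (as already exploited in step~(iii) of Theorem~\ref{thm:mf:existence-hvl}). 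Let $\Del{2}$ denote any fixed second-order phase-space derivative. The Leibniz rule for the Poisson bracket gives, schematically,
\[
\partial_t\Del{2}\alpha=\symp{h[\alpha]}{\Del{2}\alpha}-K[\alpha]\,\Del{2}\alpha+\symp{\Del{2}h[\alpha]}{\alpha}+\symp{\Del{1}h[\alpha]}{\Del{1}\alpha}-\Del{2}K[\alpha]\,\alpha-\Del{1}K[\alpha]\,\Del{1}\alpha,
\]
where the mixed terms $\symp{\Del{1}h[\alpha]}{\Del{1}\alpha}$ and $\Del{1}K[\alpha]\,\Del{1}\alpha$ stand for the appropriate combinatorial sums over the two first derivatives forming $\Del{2}$. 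Testing against $\Del{2}\alpha$ and taking $2\Re$, the first two terms vanish exactly as in Lemma~\ref{lem:mf:n-derivative-bounds} and steps~(iii),~(v) of Theorem~\ref{thm:mf:existence-hvl}, because $\int_\RZ\symp{h[\alpha]}{\abs{\Del{2}\alpha}^2}\intd\bz=0$ for real $h[\alpha]$ and $\Re\int_\RZ K[\alpha](\bx)\,\abs{\Del{2}\alpha(\bz)}^2\intd\bz=0$ for purely imaginary $K[\alpha]$.

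It then remains to bound the four inhomogeneous terms in $\Lp{2}{\bz}$. The mixed term $\symp{\Del{1}h[\alpha]}{\Del{1}\alpha}$ produces either a component of $\Del{2}\alpha$ (from the kinetic part $\frac{\abs{\bv}^2}{2}$) or $\convol{\Del{2}\Gamma}{\rho}\cdot\naV\Del{1}\alpha$ with $\rho(t,\bx)=\int_\RV\abs{\alpha(t,\bz)}^2\intd\bv$, hence is bounded, up to a dimensional constant, by $\bigl(1+C_\Gamma\Lpn{\alpha(t)}{2}{\bz}^2\bigr)\Lpn{\Del{2}\alpha(t)}{2}{\bz}$; since $\Lpn{\alpha(t)}{2}{\bz}=\Lpn{\mathring{\alpha}}{2}{\bz}\leq M$ is conserved, this feeds the Gronwall loop. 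The decisive terms are the nonlocal ones $\symp{\Del{2}h[\alpha]}{\alpha}$, $\Del{1}K[\alpha]\,\Del{1}\alpha$ and $\Del{2}K[\alpha]\,\alpha$: in each, every surplus derivative must land on $\Gamma$ and never on the ``slow'' factor $\alpha$, which is precisely what $\Gamma\in\Cm{3}{\RX;\RN}$ from \eqref{eqn:mf:bounded-force} permits. Thus $\symp{\Del{2}h[\alpha]}{\alpha}=\convol{\Del{3}\Gamma}{\rho}\cdot\naV\alpha$ up to constants, and, rewriting $K[\alpha]=\convol{\nabla\Gamma}{\int_\RV\bar{\alpha}\,\naV\alpha\,\intd\bv}$ as in step~(i) of Theorem~\ref{thm:mf:existence-hvl}, $\Del{2}K[\alpha]=\convol{\Del{3}\Gamma}{\int_\RV\bar{\alpha}\,\naV\alpha\,\intd\bv}$ up to constants, with $\Lpn{\int_\RV\bar{\alpha}\,\naV\alpha\,\intd\bv}{1}{\bx}\leq\Lpn{\alpha}{2}{\bz}\Lpn{\naV\alpha}{2}{\bz}$ by Cauchy--Schwarz. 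All three are therefore bounded by $c(C_\Gamma,\ddim)\,M\,\bigl(1+\Wpn{\alpha(t)}{1,2}{\bz}\bigr)^2$, hence by $c(C_\Gamma,\ddim)\,M\,\bigl(1+\bound{M}{\Wp{1,2}{\bz}}(t)\bigr)^2$ via Corollary~\ref{cor:mf:derivative-bound}. Collecting all estimates yields a differential inequality $\partial_t\Lpn{\Del{2}\alpha(t)}{2}{\bz}^2\leq a(t)\,\Lpn{\Del{2}\alpha(t)}{2}{\bz}^2+c(t)$ with $a,c$ continuous, nonnegative and nondecreasing in $t$ (built from $C_\Gamma$, $M$, $\ddim$ and $\bound{M}{\Wp{1,2}{\bz}}$) and $\Lpn{\Del{2}\alpha(0)}{2}{\bz}\leq M$; Gronwall's Lemma then produces the claimed continuous, rising function $\bound{M,\Del{2}}{\Lp{2}{\bz}}$.

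Finally, these formal differentiations require more regularity than Theorem~\ref{thm:mf:existence-hvl} a priori supplies, so --- as in Lemma~\ref{lem:mf:n-derivative-bounds} --- the estimate is first carried out for smooth, compactly supported $\mathring{\alpha}$ and a mollified potential $\Gamma_\varepsilon\to\Gamma$ (so that the characteristics are at least $\Cp{3}{\bz}$), then passed to the limit $\varepsilon\to0$ on compact time intervals, and finally extended by density: given $\mathring{\alpha}\in\Wp{2,2}{\bz}$ with $\Wpn{\mathring{\alpha}}{2,2}{\bz}\leq M$ (so in particular $\Wpn{\mathring{\alpha}}{1,2}{\bz}\leq M$ and $\Lpn{\Del{2}\mathring{\alpha}}{2}{\bz}\leq M$), pick smooth compactly supported $\mathring{\alpha}_k\to\mathring{\alpha}$ in $\Wp{2,2}{\bz}$; by Corollary~\ref{cor:mf:lipschitz-dependence} (with the uniform $\Wp{1,2}{\bz}$ bound) the solutions converge $\alpha_k(t)\to\alpha(t)$ in $\Lp{2}{\bz}$, while $\Lpn{\Del{2}\alpha_k(t)}{2}{\bz}\leq\bound{M,\Del{2}}{\Lp{2}{\bz}}(t)$ uniformly in $k$, so by reflexivity of $\Wp{2,2}{\bz}$ the limit obeys $\alpha(t)\in\Wp{2,2}{\bz}$ with the same bound, whence $\alpha\in\Lp{\infty}{t,\text{loc}}\Wp{2,2}{\bz}$. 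I expect the main obstacle to be exactly this bookkeeping: ensuring that in $\symp{\Del{2}h[\alpha]}{\alpha}$ and $\Del{2}K[\alpha]\,\alpha$ no second $\bz$-derivative ever falls on the slow factor $\alpha$ --- which works only thanks to $\Del{3}\Gamma\in\Lp{\infty}{\bx}$ --- together with the mollification-and-density argument needed to legitimise the manipulations.
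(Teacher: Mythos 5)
Your proof is correct, but it takes a genuinely different route from the paper. The paper works on the \emph{Lagrangian} side: having already introduced the characteristic flow $Z(t,t_0)$ in step (ii) of Theorem~\ref{thm:mf:existence-hvl}, it bounds $\Lpn{\Del{}Z(t,t_0)}{\infty}{\bz}$ and $\Lpn{\Del{2}Z(t,t_0)}{\infty}{\bz}$ by Gronwall on the linearised characteristic ODE, then applies the chain rule directly to the explicit transport formula
\[
\alpha(t,\bz)=\mathring{\alpha}(Z(0,t,\bz))\exp\left(-\int_0^t K(\tau,X(\tau,t,\bz))\,\intd\tau\right),
\]
which yields a pointwise expression for $\partial_l\partial_k\alpha(t,\bz)$ and hence an $\Lp{2}{\bz}$ bound term by term. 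You instead work on the \emph{Eulerian} side, mimicking Lemma~\ref{lem:mf:n-derivative-bounds}: differentiate the PDE twice, test against $\Del{2}\alpha$, exploit that the transport operator is real (so its contribution integrates to zero) and that $K[\alpha]$ is purely imaginary (so its contribution has zero real part), and close a Gronwall loop on $\Lpn{\Del{2}\alpha(t)}{2}{\bz}^2$. Both arguments funnel the critical observation through the same point --- every surplus derivative must land on $\Gamma$, and $\Del{3}\Gamma\in\Lp{\infty}{\bx}$ makes that harmless --- and both close by reflexivity-of-$\Wp{2,2}{\bz}$ density. The trade-off: the paper's Lagrangian computation is rigorous for $\Gamma\in\Cm{3}{\RX;\RN}$ and $\mathring{\alpha}\in\Cp{\infty}{\bz}$ without further regularisation, since $Z\in\Cp{2}{\bz}$ is all the chain rule needs; your differential-in-$\bz$ manipulation of the PDE formally requires a third phase-space derivative of $\alpha$, which is why you correctly insert the mollification $\Gamma_\varepsilon\to\Gamma$ before passing to the limit --- exactly the extra step that Lemma~\ref{lem:mf:n-derivative-bounds} also needs and the paper's proof of this Corollary avoids. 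Your route is more uniform with the $N$-body case; the paper's is slightly more economical here.
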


\begin{proof}
A similar claim for the $\Wp{1,2}{\bz}$ norm is already explicitly given in the proof of Theorem \ref{thm:mf:existence-hvl}. Hence, it suffices to consider $\Lpn{\Del{2}\alpha(t)}{2}{\bz}$. At first, we assume $\mathring{\alpha}\in\Cp{\infty}{\bz}$ compactly supported. It is necessary to bound the derivatives of the solution map $Z(t,t_0)$ of the characteristic system
\begin{align}
\nonumber
\Lpn{\Del{}Z(t,t_0)}{\infty}{\bz} \leq&~ 1+ \int_{t_0}^{t} \left(1 + \Lpn{\convol{\Del{2}\Gamma}{\abs{\alpha(\tau)}^2}}{\infty}{\bx}\right) \Lpn{\Del{}Z(\tau,t_0)}{\infty}{\bz}~ \intd\tau \\
\label{eqn:mf:solution-map-derivative-1-bound}
\leq&~ \exp\left(\left(1+C_\Gamma \Lpn{\mathring{\alpha}}{2}{\bz}^2\right)\abs{t-t_0}\right) \\
\nonumber
\Lpn{\Del{2}Z(t,t_0)}{\infty}{\bz} \leq& \int_{t_0}^{t} \Lpn{\convol{\Del{3}\Gamma}{\abs{\alpha(\tau)}^2}}{\infty}{\bx} \Lpn{\Del{}Z(\tau,t_0)}{\infty}{\bz}^2~ \intd\tau \\
\nonumber
&+ \int_{t_0}^{t} \left(1+\Lpn{\convol{\Del{2}\Gamma}{\abs{\alpha(\tau)}^2}}{\infty}{\bx}\right) \Lpn{\Del{2}Z(\tau,t_0)}{\infty}{\bz}~ \intd\tau \\
\nonumber
\leq&~ C_\Gamma \Lpn{\mathring{\alpha}}{2}{\bz}^2 \frac{\exp\left(2\left(1+C_\Gamma \Lpn{\mathring{\alpha}}{2}{\bz}^2\right)\abs{t-t_0}\right) - 1}{2\left(1+C_\Gamma \Lpn{\mathring{\alpha}}{2}{\bz}^2\right)} \exp\left(\left(C_\Gamma \Lpn{\mathring{\alpha}}{2}{\bz}^2\right)\abs{t-t_0}\right) \\
\label{eqn:mf:solution-map-derivative-2-bound}
\leq&~ \frac{1}{2} \exp\left(2\left(1+C_\Gamma \Lpn{\mathring{\alpha}}{2}{\bz}^2\right)\abs{t-t_0}\right).
\end{align}
For the second derivative of $\alpha(t)$, we denote $K(t,\bx)\equiv\convol{\Gamma}{\symp{\bar{\alpha}(t)}{\alpha(t)}}(\bx)$ and recall the transport formula for general Hamiltonian Vlasov systems \cite[Prop.2.2]{neiss}
\begin{equation*}
\alpha(t,\bz) = \mathring{\alpha}(Z(0,t,\bz)) \exp\left(-\int_{0}^{t} K(\tau,X(\tau,t,\bz))~ \intd\tau\right).
\end{equation*}
With this at hand, we determine
\begin{align*}
&\left(\partial_l\partial_k \alpha(t,\bz)\right)~ \exp\left(\int_{0}^{t} K(\tau,X(\tau,t,\bz))~ \intd\tau\right) \\
=& \sum_{i,j} \partial_i\partial_j\mathring{\alpha}(Z(0,t,\bz))~ \partial_l Z_i(0,t,\bz)~ \partial_k Z_j(0,t,\bz) \\
&+ \sum_j \partial_j\mathring{\alpha}(Z(0,t,\bz))~ \partial_k\partial_l Z_j(0,t,\bz) \\
&- \sum_i\partial_i\mathring{\alpha}(Z(0,t,\bz))~ \partial_l Z_i(0,t,\bz)~ \int_{0}^{t} \sum_j \partial_j K(\tau,X(\tau,t,\bz))~ \partial_k X_j(\tau,t,\bz)~ \intd\tau \\
&- \sum_i\partial_i\mathring{\alpha}(Z(0,t,\bz))~ \partial_k Z_i(0,t,\bz)~ \int_{0}^{t} \sum_j \partial_j K(\tau,X(\tau,t,\bz))~ \partial_l X_j(\tau,t,\bz)~ \intd\tau \\
&- \mathring{\alpha}(Z(0,t,\bz))~ \int_{0}^{t} \sum_{i,j} \partial_i\partial_j K(\tau,X(\tau,t,\bz))~ \partial_k X_j(\tau,t,\bz)~ \partial_l X_i(\tau,t,\bz)~ \intd\tau \\
&- \mathring{\alpha}(Z(0,t,\bz))~ \int_{0}^{t} \sum_j \partial_j K(\tau,X(\tau,t,\bz))~ \partial_l\partial_k X_j(\tau,t,\bz)~ \intd\tau,
\end{align*}
hence the $\Lp{2}{\bz}$ norm is computed to be
\begin{align}
\nonumber
\Lpn{\Del{2}\alpha(t)}{2}{\bz} \leq&~ \Lpn{\Del{2}\mathring{\alpha}}{2}{\bz} \Lpn{\Del{}Z(0,t)}{\infty}{\bz}^2 + \sqrt{2\ddim} \Lpn{\nabla\mathring{\alpha}}{2}{\bz} \Lpn{\Del{2}Z(0,t)}{\infty}{\bz} \\
\nonumber
&+ 2\sqrt{2\ddim} \Lpn{\nabla\mathring{\alpha}}{2}{\bz} \Lpn{\Del{}Z(0,t)}{\infty}{\bz} \int_{0}^{t} \Lpn{\convol{\nabla\Gamma}{\symp{\bar{\alpha}(\tau)}{\alpha(\tau)}}}{\infty}{\bx} \Lpn{\Del{}Z(\tau,t)}{\infty}{\bz}~ \intd\tau \\
\nonumber
&+ \Lpn{\mathring{\alpha}}{2}{\bz} \int_{0}^{t} \Lpn{\convol{\Del{2}\Gamma}{\symp{\bar{\alpha}(\tau)}{\alpha(\tau)}}}{\infty}{\bx} \Lpn{\Del{}Z(\tau,t)}{\infty}{\bz}^2~ \intd\tau \\
\label{eqn:mf:d2-bound}
&+ \Lpn{\mathring{\alpha}}{2}{\bz} \int_{0}^{t} \Lpn{\convol{\nabla\Gamma}{\symp{\bar{\alpha}(\tau)}{\alpha(\tau)}}}{\infty}{\bx} \Lpn{\Del{2}Z(\tau,t)}{\infty}{\bz}~ \intd\tau \leq \bound{M,\Del{2}}{\Lp{2}{\bz}}(t)
\end{align}
for a continuous rising function $\bound{M,\Del{2}}{\Lp{2}{\bz}}:\RNp\to\RNp$ explicitly computable from \eqref{eqn:mf:solution-map-derivative-1-bound}, \eqref{eqn:mf:solution-map-derivative-2-bound}, and $\bound{M}{\Wp{1,2}{\bz}}$. In particular, it only depends on a bound $M$ of the $\Wp{2,2}{\bz}$ norm of $\mathring\alpha$. Now if we approximate $\mathring{\alpha}\in\Wp{2,2}{\bz}$ by a sequence $\mathring{\alpha}^{(k)}$ of test functions with norm $\leq M$, we have norm convergence $\Lpn{\alpha(t)-\alpha^{(k)}(t)}{2}{\bz}\to 0$ for any $t\geq 0$ by Theorem \ref{thm:mf:existence-hvl}. By the inequalities \eqref{eqn:mf:w12-bound}, \eqref{eqn:mf:d2-bound} given above, $\Wpn{\alpha^{(k)}(t)}{2,2}{\bz}$ is bounded and the reflexivity of $\Wp{2,2}{\bz}$ gives a weakly converging subsequence. Its limit must coincide with the $\Lp{2}{\bz}$ limit $\alpha(t)$ which therefore is already in $\Wp{2,2}{\bz}$ and its norm is dominated by the same bound \eqref{eqn:mf:d2-bound}, i.e., $\Lpn{\Del{2}\alpha(t)}{2}{\bz}\leq \bound{M,\Del{2}}{\Lp{2}{\bz}}(t)$. \qedhere
\end{proof}

\begin{defn}
For any $k\in\NN_0$, define $\Mp{k}{\VF{\bz}}\equiv\fourier\left(\Wp{k,2}{\bz}\right)$ the set of functions $\VF{\alpha}: \RZH \to \CN$ that arise from velocity Fourier transform \eqref{eqn:mf:velocity-fourier} with isometric norm $\Mpn{\VF{\alpha}}{k}{\VF{\alpha}}\equiv\Wpn{\inv{\fourier}\VF{\alpha}}{k,2}{\bz}$. In the case of $k=1$ the norm is, e.g.,
\begin{equation*}
\Mpn{\VF{\alpha}}{1}{\VF{\bz}} \equiv \left(\Lpn{\VF{\alpha}}{2}{\VF{\bz}}^2 + \Lpn{\abs{\xi}\VF{\alpha}}{2}{\VF{\bz}}^2 + \Lpn{\naX\VF{\alpha}}{2}{\VF{\bz}}^2\right)^{\frac{1}{2}}.
\end{equation*}
Likewise, one defines $\Mp{k}{\vec{\VF{\bz}}}\equiv \fourier^{\otimes N} \left(\Wp{k,2}{\vec{\bz}}\right)$ for the $N$ particle states.
\end{defn}

\begin{thm}[Well-posedness for the Hamilton Hartree system]
\label{thm:mf:existence-hht}
Let $\mathring{\VF{\alpha}}\in\Mp{1}{\VF{\bz}}$ be arbitrary. Then there is a unique global solution $\VF{\alpha}(t,\VF{\bz})$ of \eqref{eqn:mf:hamilton-hartree} with initial datum $\VF{\alpha}(0)=\mathring{\VF{\alpha}}$. For any $T<\infty$ the solution map
\begin{equation*}
\begin{array}{ccc}
\Mp{1}{\VF{\bz}} & \to & \Lp{\infty}{t}\Lp{2}{\VF{\bz}} \\
\mathring{\VF{\alpha}} & \mapsto & \VF{\alpha} :
\begin{array}{ccc}
[0,T] & \to & \Lp{2}{\VF{\bz}} \\
t & \mapsto & \VF{\alpha}(t)
\end{array}
\end{array}
\end{equation*}
is locally Lipschitz continuous. In fact, the image of this map is in $\Lp{\infty}{t}\Lp{2}{\VF{\bz}}\cap\Lp{\infty}{t,\text{loc}}\Mp{1}{\VF{\bz}}$ for $T=\infty$.
\end{thm}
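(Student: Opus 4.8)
The plan is to obtain Theorem~\ref{thm:mf:existence-hht} as an immediate consequence of Theorem~\ref{thm:mf:existence-hvl} (together with Corollaries~\ref{cor:mf:lipschitz-dependence} and~\ref{cor:mf:derivative-bound}) by transporting everything along the velocity Fourier transform~\eqref{eqn:mf:velocity-fourier}. The point is that $\fourier\colon\Lp{2}{\bz}\to\Lp{2}{\VF{\bz}}$ is a unitary isomorphism which, by the very definition of $\Mp{1}{\VF{\bz}}$, restricts to an isometric isomorphism $\Wp{1,2}{\bz}\to\Mp{1}{\VF{\bz}}$; so once one knows that $\fourier$ conjugates the solution map of~\eqref{eqn:mf:hamilton-vlasov} to that of~\eqref{eqn:mf:hamilton-hartree}, existence, uniqueness, local Lipschitz continuity of the solution map, and the $\Lp{\infty}{t,\text{loc}}\Mp{1}{\VF{\bz}}$-regularity all follow verbatim.

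First I would make the conjugacy precise. On classical, compactly supported data one checks term by term that $\VF{\alpha}=\fourier\alpha$ solves~\eqref{eqn:mf:hamilton-hartree} if and only if $\alpha$ solves~\eqref{eqn:mf:hamilton-vlasov}, using that under $\fourier$ multiplication by $\bv$ becomes $\mi\naXi$ and $\naV$ becomes multiplication by $\mi\xi$, while $\bx$, $\naX$, and convolution in $\bx$ are untouched. Concretely, $\fourier\bigl(\symp{\frac{\abs{\bv}^2}{2}}{\alpha}\bigr)=\fourier(-\bv\cdot\naX\alpha)=-\mi\,\naX\cdot\naXi\VF{\alpha}$; the density $\rho(\bx)\equiv\int_{\RV}\abs{\alpha(\bx,\bv)}^2\,\intd\bv$ equals $\int_{\RXi}\abs{\VF{\alpha}(\bx,\xi)}^2\,\intd\xi$ by Plancherel in $\bv$, so $\fourier\bigl(\symp{\convol{\Gamma}{\abs{\alpha}^2}}{\alpha}\bigr)=\mi\,\convol{\nabla\Gamma}{\rho}(\bx)\cdot\xi\,\VF{\alpha}$; and the phase term $\convol{\Gamma}{\symp{\bar{\alpha}}{\alpha}}=\convol{\nabla\Gamma}{\int_{\RV}\bar{\alpha}\naV\alpha\,\intd\bv}$ is exactly the piece that reappears from the imaginary part of the first $\xi$-moment $\int_{\RXi}\xi\,\abs{\VF{\alpha}}^2\,\intd\xi=-\mi\int_{\RV}\bar{\alpha}\,\naV\alpha\,\intd\bv$. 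Assembling these with $\VF{V}(\bx,\xi)=-\nabla\Gamma(\bx)\cdot\xi$ reproduces precisely the nonlinearity $\tfrac1{\mi}\convol{\VF{V}}{\abs{\VF{\alpha}}^2}\,\VF{\alpha}$ of~\eqref{eqn:mf:hamilton-hartree}; this is the equation-of-motion counterpart of the Hamiltonian computation already carried out in Section~\ref{sec:mf:eqn-hierarchy}.

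With the conjugacy established, I would \emph{define} the solution of~\eqref{eqn:mf:hamilton-hartree} with datum $\mathring{\VF{\alpha}}\in\Mp{1}{\VF{\bz}}$ to be $\VF{\alpha}(t)\equiv\fourier\alpha(t)$, where $\alpha$ is the solution of~\eqref{eqn:mf:hamilton-vlasov} with datum $\inv{\fourier}\mathring{\VF{\alpha}}\in\Wp{1,2}{\bz}$ furnished by Theorem~\ref{thm:mf:existence-hvl}. Global existence is then automatic, and uniqueness is inherited because the $\eqref{eqn:mf:hamilton-hartree}$-solution is, by construction, the pushforward of the \emph{unique} $\eqref{eqn:mf:hamilton-vlasov}$-solution. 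Local Lipschitz continuity of $\Mp{1}{\VF{\bz}}\to\Lp{\infty}{t}\Lp{2}{\VF{\bz}}$ follows from the $\Lp{2}{}$-isometry of $\fourier$ combined with Corollary~\ref{cor:mf:lipschitz-dependence}, with the very same constant for $M=\Mpn{\mathring{\VF{\alpha}}}{1}{\VF{\bz}}$. Finally, Corollary~\ref{cor:mf:derivative-bound} gives $\Wpn{\alpha(t)}{1,2}{\bz}\le\bound{M}{\Wp{1,2}{\bz}}(t)$, hence $\Mpn{\VF{\alpha}(t)}{1}{\VF{\bz}}\le\bound{M}{\Wp{1,2}{\bz}}(t)$, so the image lies in $\Lp{\infty}{t}\Lp{2}{\VF{\bz}}\cap\Lp{\infty}{t,\text{loc}}\Mp{1}{\VF{\bz}}$ for $T=\infty$.

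The only genuinely delicate point — and the one I would be most careful about — is that $\fourier$ does \emph{not} preserve compact support, so one should not try to re-run the iteration/approximation scheme of Theorem~\ref{thm:mf:existence-hvl} directly in the $\VF{\bz}$ variables; instead one works entirely at the level of the already-constructed solution map on $\Wp{1,2}{\bz}$ and simply pushes it forward by $\fourier$. Equivalently, one should state up front that ``solution of~\eqref{eqn:mf:hamilton-hartree}'' is meant in the same extended sense as in Theorem~\ref{thm:mf:existence-hvl} (the unique limit, continuous in $\Lp{2}{\VF{\bz}}$, of classical solutions), after which no further approximation is needed and the whole argument is soft. A secondary, purely bookkeeping point is to confirm that convolution in $\bx$ commutes with the partial Fourier transform in $\bv$ — which holds because $\fourier$ acts only on the $\bv$ slot while the kernel acts only on the $\bx$ slot — legitimising the term-by-term computation above.
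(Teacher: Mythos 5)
Your argument is exactly the paper's: transport the solution map of Theorem~\ref{thm:mf:existence-hvl} through the unitary velocity Fourier transform, using $\Lpn{\alpha}{2}{\bz}=\Lpn{\VF{\alpha}}{2}{\VF{\bz}}$ and $\Wpn{\alpha}{1,2}{\bz}=\Mpn{\VF{\alpha}}{1}{\VF{\bz}}$ to carry over existence, uniqueness, Lipschitz continuity, and regularity, with the notion of solution extended from classical compactly supported data exactly as in the source theorem. Your more explicit term-by-term verification of the conjugacy is a harmless elaboration of what the paper treats as already established in Section~\ref{sec:mf:eqn-hierarchy}.
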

\begin{proof}
This is immediate from the norm conserving properties of the Fourier transform, $\Wpn{\alpha}{1,2}{\bz}=\Mpn{\VF{\alpha}}{1}{\VF{\bz}}$ and $\Lpn{\alpha}{2}{\bz}=\Lpn{\VF{\alpha}}{2}{\VF{\bz}}$, 
and Thm. \ref{thm:mf:existence-hvl}. We also remark that classical compactly supported solutions of \eqref{eqn:mf:hamilton-vlasov} will transform into classical solutions of \eqref{eqn:mf:hamilton-hartree}. Therefore, the notion of a solution is meaningful, even without proper regularity restrictions for the derivatives. \qedhere
\end{proof}

\section{Mean field limit}

\label{sec:mf:mean-field-limit}

\newcommand{\cn}{\mathfrak{n}}
\newcommand{\cm}{\mathfrak{m}}
\newcommand{\cml}{\cm_\lambda}
\newcommand{\dCml}[1]{\Delta_{{#1}}\cml}
\newcommand{\shift}[2]{\left({#1}\right)_{#2}}
\newcommand{\ExN}[2]{\SP{\alpha_N({#2})} {{#1} \alpha_N({#2})}}
\newcommand{\ExHN}[2]{\SP{\VF{\alpha}_N({#2})} {{#1} \VF{\alpha}_N({#2})}}
\newcommand{\proj}[2]{p_{{#1}}^{{#2}}}
\newcommand{\qroj}[2]{q_{{#1}}^{{#2}}}
\newcommand{\Proj}[3]{P_{{#1},{#2}}^{{#3}}}
\newcommand{\SFS}{\HS}
\newcommand{\MFS}[1]{\SFS_{{#1}}}

In this chapter, let $\SFS\equiv\Lp{2}{\VF{\bz}}$ denote the phase space for the effective one-particle description. For any $N\in\NN$ define $\MFS{N}\equiv\SFS^{\otimes N} \subseteq \Lp{2}{(\VF{\bz}_1,\dots,\VF{\bz}_N)}\equiv\Lp{2}{\vec{\VF{\bz}}}$. It contains the bosonic pseudo Fock space of $N$ particle states, which is given by all functions that are invariant under particle index permutation, referred to also as \textit{symmetric}. \\

\noindent The mean field derivation heavily follows ideas and uses techniques of \cite{pickl} with slight adjustments to fit the new requirements of an unbounded, non-integrable pair interaction potential. For the sake of completeness, the key definitions are included here. The main result is Theorem \ref{thm:mf:hamilton-hartree-mf-limit} at the end of the section.

\begin{defn}
[$N$ particle projections and counting operators]
\cite[Def.2.1]{pickl}
Fix some normed $\VF{\alpha}\in\SFS$, $\norm{\VF{\alpha}}_\SFS=1$ and some $N\in\NN$.
\begin{enumerate}[label=(\roman*)]
\item For $j\in\{1,\dots,N\}$ and $k\in\IN$ define the projections in $\MFS{N}$
\begin{align*}
\left(\proj{j}{\VF{\alpha}} \VF{\alpha}_N\right)(\vec{\VF{\bz}}) \equiv&~ \VF{\alpha}(\VF{\bz}_j) \int_{\RZH} \bar{\VF{\alpha}}(\tilde{\VF{\bz}}_j)~ \VF{\alpha}_N(\VF{\bz}_1,\dots,\tilde{\VF{\bz}}_j,\dots,\VF{\bz}_N)~ \intd\tilde{\VF{\bz}}_j, \\
\qroj{j}{\VF{\alpha}} \equiv&~ \id_{\MFS{N}}-\proj{j}{\VF{\alpha}}, \\
\Proj{k}{j}{\VF{\alpha}} \equiv&~ \sum_{\substack{\ba\in\{0,1\}^j \\ \sum_i a_i = k}} \prod_{i=1}^j \left(\proj{N-j+i}{\VF{\alpha}}\right)^{1-a_i}~ \left(\qroj{N-j+i}{\VF{\alpha}}\right)^{a_i},
\end{align*}
which all commute pairwise.
\item Now let $f: \{0,\dots,N\}\to\RN$ be a map. We define the associated operator on $\MFS{N}$ by
\begin{equation*}
f^{\VF{\alpha}} \equiv \sum_{k=0}^{N} f(k)~ \Proj{k}{N}{\VF{\alpha}},
\end{equation*}
i.e., $f$ defines the spectral decomposition of $f^{\VF{\alpha}}$ and the operator commutes with all projections from (i). In addition, for $l\in\IN$, define the shifted operator
\begin{equation*}
\shift{f^{\VF{\alpha}}}{l} \equiv \sum_{k=l}^{N+l} f(k-l)~ \Proj{k}{N}{\VF{\alpha}} = \sum_{k=0}^{N} f(k) \Proj{k+l}{N}{\VF{\alpha}}.
\end{equation*}
\item In particular, we define for any $\lambda\in[0,1]$ the \textbf{counting functions} on $\{0,\dots,N\}$
\begin{align*}
\cn(k) \equiv \sqrt{\frac{k}{N}}, \quad \text{and} \quad \cml(k) \equiv \min\left\{\frac{k}{N^\lambda},1\right\}.
\end{align*}
\end{enumerate}
\end{defn}

\begin{rmk}
\textbf{(i).} $\proj{j}{\VF{\alpha}}$ projects onto the subspace of functions with $j$-th particle in state $\VF{\alpha}$. $\Proj{k}{j}{\VF{\alpha}}$ projects onto the subspace, where exactly $j-k$ out of the last $j$ particles are in state $\VF{\alpha}$ and the other $k$ particles are in an orthogonal state. \\

\noindent\textbf{(ii).} Note, that $\Proj{k}{j}{\VF{\alpha}}\neq 0$ only for $0\leq k\leq j\leq N$. In addition, $\sum_{k=0}^j \Proj{k}{j}{\VF{\alpha}} = \id_{\MFS{N}}$. \\

\noindent\textbf{(iii).} $\norm{\cn^{\VF{\alpha}}\VF{\alpha}_N}^2 = \SP{\VF{\alpha}_N}{(\cn^{\VF{\alpha}})^2\VF{\alpha}_N}$ counts the share of particles in $\VF{\alpha}_N$ not in state $\VF{\alpha}$. Also, because $\cn(k)^2\leq \cml(k)$, the quantity $\beta_N\equiv \SP{\VF{\alpha}_N}{\cml^{\VF{\alpha}}\VF{\alpha}_N}$ controls the distance of $\VF{\alpha}_N$ and the pure product state $\VF{\alpha}^{\otimes N}$ in a suitable way.
\end{rmk}

\begin{lem}
\label{lem:mf:computation-rules}
\cite[Lem.2.3]{pickl}
Let $N\in\NN$ be fixed and $\VF{\alpha}\in\Lp{2}{\VF{\bz}}$ be normalized.
\begin{enumerate}[label=(\roman*)]
\item For any two $f,g: \{0,\dots,N\}\to\RNp$ we have
\begin{equation*}
f^{\VF{\alpha}}~ g^{\VF{\alpha}} = g^{\VF{\alpha}}~ f^{\VF{\alpha}}, \quad 
f^{\VF{\alpha}} \proj{j}{\VF{\alpha}} = \proj{j}{\VF{\alpha}} f^{\VF{\alpha}}, \quad 
f^{\VF{\alpha}} \Proj{k}{j}{\VF{\alpha}} = \Proj{k}{j}{\VF{\alpha}} f^{\VF{\alpha}} \quad \forall j.
\end{equation*}
\item The following  identities hold
\begin{equation*}
\frac{1}{N} \sum_{j=1}^{N} \qroj{j}{\VF{\alpha}} = \frac{1}{N} \sum_{j=1}^{N} \qroj{j}{\VF{\alpha}} \sum_{k=0}^N \Proj{k}{N}{\VF{\alpha}} = \sum_{k=0}^{N} \frac{k}{N}~ \Proj{k}{N}{\VF{\alpha}} = \left(\cn^{\VF{\alpha}}\right)^2.
\end{equation*}
\item For any $f: \{0,\dots,N\}\to\RNp$ and $\alpha_N\in\Lp{2}{\vec{\VF{\bz}}}$ symmetric we have
\begin{equation*}
\norm{f^{\VF{\alpha}} \qroj{1}{\VF{\alpha}} \VF{\alpha}_N} = \norm{f^{\VF{\alpha}} \cn^{\VF{\alpha}} \VF{\alpha}_N}, \quad \norm{f^{\VF{\alpha}} \qroj{1}{\VF{\alpha}} \qroj{2}{\VF{\alpha}} \VF{\alpha}_N} \leq \sqrt{\frac{N}{N-1}} \norm{f^{\VF{\alpha}} \left(\cn^{\VF{\alpha}}\right)^2 \VF{\alpha}_N}.
\end{equation*}
\item For any $f: \{0,\dots,N\} \to \RNp$, $\VF{v}:\left(\RZH\right)^2\to\RN$ multiplication operator, and any $j,k\in\{0,1,2\}$ we have
\begin{equation*}
f^{\VF{\alpha}} Q_j^{\VF{\alpha}}~ \VF{v}(\VF{\bz}_1,\VF{\bz}_2)~ Q_k^{\VF{\alpha}} = Q_j^{\VF{\alpha}}~ \VF{v}(\VF{\bz}_1,\VF{\bz}_2)~ Q_k^{\VF{\alpha}} \shift{f^{\VF{\alpha}}}{k-j},
\end{equation*}
where $Q_0^{\VF{\alpha}}=\proj{1}{\VF{\alpha}} \proj{2}{\VF{\alpha}}$, $Q_1^{\VF{\alpha}}=\proj{1}{\VF{\alpha}} \qroj{2}{\VF{\alpha}}$, and $Q_2^{\VF{\alpha}}=\qroj{1}{\VF{\alpha}} \qroj{2}{\VF{\alpha}}$.
\end{enumerate}
\end{lem}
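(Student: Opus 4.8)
The plan is to treat all four statements as elementary algebraic identities for the mutually commuting family of orthogonal projections $\{\proj{j}{\VF{\alpha}},\qroj{j}{\VF{\alpha}}\}_{j=1}^N$ on $\MFS{N}$; these are precisely the computation rules of \cite[Lem.2.3]{pickl}, so one may simply invoke that reference, but the argument is short enough to recall. The backbone is the spectral resolution $\id_{\MFS{N}}=\sum_{k=0}^N\Proj{k}{N}{\VF{\alpha}}$ into pairwise orthogonal, permutation-invariant projections (Remark (ii)), together with one bookkeeping identity: for $j\in\{0,1,2\}$,
\begin{equation*}
\Proj{k}{N}{\VF{\alpha}}\,Q_j^{\VF{\alpha}} \;=\; Q_j^{\VF{\alpha}}\,\Proj{k-j}{N-2}{\VF{\alpha}}\,,
\end{equation*}
where on the right $\Proj{\cdot}{N-2}{\VF{\alpha}}$ is built from the projections of particles $3,\dots,N$ only, and terms with an index outside $\{0,\dots,N\}$ resp.\ $\{0,\dots,N-2\}$ are understood to be zero. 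This follows by inspecting the defining sum of $\Proj{k}{N}{\VF{\alpha}}$ over $\ba\in\{0,1\}^N$: multiplying by $Q_j^{\VF{\alpha}}$ forces the values $a_1,a_2$ (using $\qroj{i}{\VF{\alpha}}\proj{i}{\VF{\alpha}}=0$ and idempotency), leaving $k-j$ orthogonal entries to be distributed among indices $3,\dots,N$. Part (i) is then immediate: since the $\proj{j}{\VF{\alpha}}$ and $\qroj{j}{\VF{\alpha}}$ all commute, every $\Proj{k}{j}{\VF{\alpha}}$, and hence every $f^{\VF{\alpha}}=\sum_k f(k)\Proj{k}{N}{\VF{\alpha}}$, is a polynomial in this one commuting family.

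For (ii) I would apply $\sum_{j=1}^N\qroj{j}{\VF{\alpha}}$ to a single summand $\prod_i(\proj{i}{\VF{\alpha}})^{1-a_i}(\qroj{i}{\VF{\alpha}})^{a_i}$ of $\Proj{k}{N}{\VF{\alpha}}$ with $\sum_i a_i=k$: because $(\qroj{j}{\VF{\alpha}})^2=\qroj{j}{\VF{\alpha}}$ and $\qroj{j}{\VF{\alpha}}\proj{j}{\VF{\alpha}}=0$, each $\qroj{j}{\VF{\alpha}}$ reproduces the summand if $a_j=1$ and annihilates it if $a_j=0$, so summing over $j$ multiplies it by $k$. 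Summing over all admissible $\ba$ gives $\sum_j\qroj{j}{\VF{\alpha}}\Proj{k}{N}{\VF{\alpha}}=k\,\Proj{k}{N}{\VF{\alpha}}$, and dividing by $N$, summing over $k$, and recalling $\cn(k)^2=k/N$ yields the asserted chain of equalities.

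For (iii) I would expand, using (i) and $(\qroj{1}{\VF{\alpha}})^2=\qroj{1}{\VF{\alpha}}$, the squared norm as $\SP{\VF{\alpha}_N}{(f^{\VF{\alpha}})^2\qroj{1}{\VF{\alpha}}\VF{\alpha}_N}$; since $(f^{\VF{\alpha}})^2$ is permutation invariant and $\VF{\alpha}_N$ symmetric, this equals $\tfrac1N\sum_{j=1}^N\SP{\VF{\alpha}_N}{(f^{\VF{\alpha}})^2\qroj{j}{\VF{\alpha}}\VF{\alpha}_N}=\SP{\VF{\alpha}_N}{(f^{\VF{\alpha}})^2(\cn^{\VF{\alpha}})^2\VF{\alpha}_N}$ by (ii), i.e.\ $\norm{f^{\VF{\alpha}}\cn^{\VF{\alpha}}\VF{\alpha}_N}^2$. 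For the two-particle version the same symmetrization gives $\tfrac1{N(N-1)}\sum_{i\neq j}\SP{\VF{\alpha}_N}{(f^{\VF{\alpha}})^2\qroj{i}{\VF{\alpha}}\qroj{j}{\VF{\alpha}}\VF{\alpha}_N}$, and the positive-operator bound
\begin{equation*}
\sum_{i\neq j}\qroj{i}{\VF{\alpha}}\qroj{j}{\VF{\alpha}} \;=\; \Big(\sum_i\qroj{i}{\VF{\alpha}}\Big)^2-\sum_i\qroj{i}{\VF{\alpha}}\;\leq\;\Big(\sum_i\qroj{i}{\VF{\alpha}}\Big)^2=N^2(\cn^{\VF{\alpha}})^4
\end{equation*}
produces the constant $\sqrt{N/(N-1)}$. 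For (iv) I would use the bookkeeping identity to write $f^{\VF{\alpha}}Q_j^{\VF{\alpha}}=Q_j^{\VF{\alpha}}g_j$ with $g_j\equiv\sum_m f(m+j)\Proj{m}{N-2}{\VF{\alpha}}$ acting only on particles $3,\dots,N$, hence commuting with both the multiplication operator $\VF{v}(\VF{\bz}_1,\VF{\bz}_2)$ and with $Q_k^{\VF{\alpha}}$; pushing $g_j$ to the far right and re-applying the bookkeeping identity to convert $\Proj{\cdot}{N-2}{\VF{\alpha}}$ back into $\Proj{\cdot}{N}{\VF{\alpha}}$ gives $Q_k^{\VF{\alpha}}g_j=\sum_n f(n)\,Q_k^{\VF{\alpha}}\Proj{n+k-j}{N}{\VF{\alpha}}=Q_k^{\VF{\alpha}}\shift{f^{\VF{\alpha}}}{k-j}$, which is exactly the shift rule.

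The only genuine, though minor, obstacle I anticipate is the index bookkeeping in (iv): keeping the shifts $k-j$ consistent through the two applications of the bookkeeping identity and checking that the shifted spectral sums stay within range, so that $\shift{f^{\VF{\alpha}}}{k-j}$ is well defined and contributions outside $\{0,\dots,N\}$ simply vanish. The asymmetry of $Q_1^{\VF{\alpha}}=\proj{1}{\VF{\alpha}}\qroj{2}{\VF{\alpha}}$ in particles $1$ and $2$ is harmless, since the identity is not claimed to be symmetric. Beyond that, everything is the finite functional calculus of a commuting family of projections; no analytic input enters and nothing uses the partial Fourier picture, so the statements hold verbatim on an arbitrary Hilbert space.
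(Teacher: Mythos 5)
Your proposal is correct. The paper itself does not reproduce a proof but simply writes ``See reference,'' citing \cite[Lem.~2.3]{pickl}; your argument faithfully recalls the elementary projection-algebra computations from that reference (the spectral resolution $\id=\sum_k\Proj{k}{N}{\VF{\alpha}}$, the counting identity $\sum_j\qroj{j}{\VF{\alpha}}\Proj{k}{N}{\VF{\alpha}}=k\,\Proj{k}{N}{\VF{\alpha}}$, symmetrization over particle indices for (iii), and the bookkeeping identity $\Proj{k}{N}{\VF{\alpha}}Q_j^{\VF{\alpha}}=Q_j^{\VF{\alpha}}\Proj{k-j}{N-2}{\VF{\alpha}}$ for the shift rule (iv)), so the approach is essentially the same as the cited one.
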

\begin{proof}
See reference. \qedhere
\end{proof}

\begin{lem}
\label{lem:mf:derivative}
Let $\VF{\alpha}_N: \RNp \to \Mp{1}{\vec{\VF{\bz}}}$ be some $\Lp{2}{\vec{\VF{\bz}}}$ normalized symmetric solution of \eqref{eqn:mf:pseudo-qm}. Let $\VF{\alpha}: \RNp\to\Mp{1}{\VF{\bz}}$ be some $\Lp{2}{\bz}$ normalized solution of \eqref{eqn:mf:hamilton-hartree}. Then we have
\begin{align*}
&\ExHN{\cml^{\VF{\alpha}(t)}}{t}-\ExHN{\cml^{\VF{\alpha}(0)}}{0} \\
=&~ N \int_{0}^{t} \Im \ExHN{\cml^{\VF{\alpha}(\tau)} \left(\VF{V}(\VF{\bz}_1-\VF{\bz}_2) - \convol{\VF{V}}{\abs{\VF{\alpha}(\tau)}^2}(\VF{\bz}_1) - \convol{\VF{V}}{\abs{\VF{\alpha}(\tau)}^2}(\VF{\bz}_2)\right)}{\tau}~ \intd\tau.
\end{align*}
\end{lem}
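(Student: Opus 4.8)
The plan is to differentiate the functional $t\mapsto\ExHN{\cml^{\VF{\alpha}(t)}}{t}$ directly, feeding \eqref{eqn:mf:pseudo-qm} into the two copies of $\VF{\alpha}_N(t)$ and \eqref{eqn:mf:hamilton-hartree} into the time dependence of the operator $\cml^{\VF{\alpha}(t)}$, and then to collapse the resulting many-body expression to a two-body one by permutation symmetry. I would first carry this out for smooth, compactly supported, normalised, symmetric initial data $\mathring{\VF{\alpha}}$ and $\mathring{\VF{\alpha}}_N$ — so that the solutions stay smooth and compactly supported (mollifying $\Gamma$ to $\Gamma_\varepsilon$ and letting $\varepsilon\to0$ afterwards, exactly as in the proof of Lemma \ref{lem:mf:n-derivative-bounds}, to obtain sufficient regularity of the characteristics) — and extend by density at the end. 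Write $h^{\VF{\alpha}}\equiv\naX\cdot\naXi+\convol{\VF{V}}{\abs{\VF{\alpha}}^2}$ for the one-particle Hamilton--Hartree generator, $h_j^{\VF{\alpha}}$ for its action on the $j$-th factor, $H_N^{\mathrm{Ht}}$ for the generator appearing on the right of \eqref{eqn:mf:pseudo-qm}, and $\tilde H_N\equiv\sum_{j=1}^{N}h_j^{\VF{\alpha}(t)}$. Since $\partial_x^\ast=-\partial_x$ and $\partial_\xi^\ast=-\partial_\xi$, the cross-derivative $\naX\cdot\naXi$ is formally self-adjoint; $\VF{V}$ is real and even because $\Gamma$ is even; and $\convol{\VF{V}}{\abs{\VF{\alpha}}^2}$ is a real multiplication operator. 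Hence $H_N^{\mathrm{Ht}}$ and $\tilde H_N$ are symmetric on the smooth compactly supported functions, and (with $\SP{\cdot}{\cdot}$ conjugate-linear in its first slot) the two $\VF{\alpha}_N(t)$-derivatives contribute $\mi\SP{\VF{\alpha}_N(t)}{[H_N^{\mathrm{Ht}},\cml^{\VF{\alpha}(t)}]\VF{\alpha}_N(t)}$.

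For the operator derivative I would differentiate the rank-one kernel $\VF{\alpha}(\VF{\bz}_j)\bar{\VF{\alpha}}(\tilde{\VF{\bz}}_j)$ of $\proj{j}{\VF{\alpha}(t)}$ and insert $\mi\partial_t\VF{\alpha}=h^{\VF{\alpha}(t)}\VF{\alpha}$ to obtain the von Neumann identity $\mi\partial_t\proj{j}{\VF{\alpha}(t)}=[h_j^{\VF{\alpha}(t)},\proj{j}{\VF{\alpha}(t)}]$, hence the same with $\qroj{j}{\VF{\alpha}(t)}$ in place of $\proj{j}{\VF{\alpha}(t)}$. Because each $h_j^{\VF{\alpha}}$ commutes with every projection acting on a particle index $\neq j$, and $\cml^{\VF{\alpha}}=\sum_k\cml(k)\Proj{k}{N}{\VF{\alpha}}$ is a polynomial in $\{\proj{j}{\VF{\alpha}},\qroj{j}{\VF{\alpha}}\}_{j=1}^{N}$, the product rule gives $\mi\partial_t\cml^{\VF{\alpha}(t)}=[\tilde H_N,\cml^{\VF{\alpha}(t)}]$, contributing $-\mi\SP{\VF{\alpha}_N(t)}{[\tilde H_N,\cml^{\VF{\alpha}(t)}]\VF{\alpha}_N(t)}$. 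Adding the two contributions, the kinetic parts of $H_N^{\mathrm{Ht}}$ and $\tilde H_N$ cancel and
\begin{equation*}
\frac{\intd}{\intd t}\ExHN{\cml^{\VF{\alpha}(t)}}{t}=\mi\,\SP{\VF{\alpha}_N(t)}{\Big[\tfrac{1}{2(N-1)}{\textstyle\sum_{m\neq n}}\VF{V}(\VF{\bz}_m-\VF{\bz}_n)-{\textstyle\sum_{m=1}^{N}}\convol{\VF{V}}{\abs{\VF{\alpha}(t)}^2}(\VF{\bz}_m),\,\cml^{\VF{\alpha}(t)}\Big]\VF{\alpha}_N(t)}.
\end{equation*}
By permutation symmetry of $\VF{\alpha}_N(t)$ and of $\cml^{\VF{\alpha}(t)}$, each of the $N(N-1)$ ordered pairs in the first sum contributes the same expectation, so under the expectation it may be replaced by $\tfrac N2\VF{V}(\VF{\bz}_1-\VF{\bz}_2)$; symmetrising the second sum in the indices $1,2$ turns it into $\tfrac N2\big(\convol{\VF{V}}{\abs{\VF{\alpha}(t)}^2}(\VF{\bz}_1)+\convol{\VF{V}}{\abs{\VF{\alpha}(t)}^2}(\VF{\bz}_2)\big)$. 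Setting $W(\tau)\equiv\VF{V}(\VF{\bz}_1-\VF{\bz}_2)-\convol{\VF{V}}{\abs{\VF{\alpha}(\tau)}^2}(\VF{\bz}_1)-\convol{\VF{V}}{\abs{\VF{\alpha}(\tau)}^2}(\VF{\bz}_2)$, a self-adjoint multiplication operator, this reads $\frac{\intd}{\intd t}\ExHN{\cml^{\VF{\alpha}(t)}}{t}=\mi\tfrac N2\SP{\VF{\alpha}_N(t)}{[W(t),\cml^{\VF{\alpha}(t)}]\VF{\alpha}_N(t)}$; since $W$ and $\cml^{\VF{\alpha}}$ are self-adjoint, $\SP{\VF{\alpha}_N}{W\cml^{\VF{\alpha}}\VF{\alpha}_N}=\overline{\SP{\VF{\alpha}_N}{\cml^{\VF{\alpha}}W\VF{\alpha}_N}}$, whence $\mi\tfrac N2[\,\cdot\,]=N\Im\ExHN{\cml^{\VF{\alpha}(t)}W(t)}{t}$. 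Integrating from $0$ to $t$ then yields precisely the asserted identity.

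To remove the smoothness assumption I would approximate $\mathring{\VF{\alpha}}$ in $\Mp{1}{\VF{\bz}}$ and $\mathring{\VF{\alpha}}_N$ in $\Mp{1}{\vec{\VF{\bz}}}$, within the normalised symmetric class, by test functions; by Theorems \ref{thm:mf:existence-hvl}, \ref{thm:mf:existence-hht} together with the Fourier isometry the corresponding solutions converge in $\Lp{\infty}{t,\text{loc}}\Lp{2}{}$ with uniformly bounded $\mathcal M^1$-norms. The operators $\cml^{\VF{\alpha}(t)}$ are uniformly bounded and depend continuously on $\VF{\alpha}(t)\in\Lp{2}{}$, and $\mathcal M^1$-control of $\VF{\alpha}$ and $\VF{\alpha}_N$ is exactly what makes $\VF{V}(\VF{\bz}_1-\VF{\bz}_2)\VF{\alpha}_N$ and $\convol{\VF{V}}{\abs{\VF{\alpha}}^2}(\VF{\bz}_j)\VF{\alpha}_N$ well-defined in $\Lp{2}{}$ and convergent (each being $\VF{V}$, which grows linearly in $\xi$, times a bounded factor), so both sides of the identity pass to the limit. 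The main obstacle is precisely this last point: the cross-derivative $\naX\cdot\naXi$ is not $\Lp{2}{}$-bounded on $\mathcal M^1$, so the commutator manipulations above are literally valid only on the dense smooth compactly supported subspace; one must therefore arrange the cancellation of the two kinetic operators \emph{before} expanding, check that every term surviving in the final identity is governed by the $\mathcal M^1$-norm alone, and only then invoke density together with the continuity statements of Section \ref{sec:mf:regular-global-well-posedness}.
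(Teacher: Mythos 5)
Your proposal is correct and follows essentially the same route as the paper's proof: differentiate $t\mapsto\SP{\VF{\alpha}_N(t)}{\cml^{\VF{\alpha}(t)}\VF{\alpha}_N(t)}$ on a regular core, apply the von Neumann identity to the time-dependent projectors, use the self-adjointness of the generators to reduce the result to a commutator with $\VF{H}_N-\sum_m\VF{H}_m^{\VF{\alpha}}$ (in which the kinetic parts cancel), collapse to a two-body expression by permutation symmetry, and pass to general $\Mp{1}{}$ data by density. Your mollification of $\Gamma$ (borrowed from Lemma~\ref{lem:mf:n-derivative-bounds}) is a small extra precaution compared with the paper's direct use of Schwartz initial data, but the computation and its justification are otherwise the same.
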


\begin{proof}
At first we assume that $\VF{\alpha}(0)$ and $\VF{\alpha}_N(0)$ are Schwartz functions. Their smoothness and decay properties are then uniform on a compact time intervals. The time derivative then can be pulled into the inner product and be explicitly evaluated. Equations \eqref{eqn:mf:hamilton-hartree} and \eqref{eqn:mf:pseudo-qm} immediately apply, i.e.,
\begin{align*}
&\ExHN{\cml^{\VF{\alpha}(t)}}{t}-\ExHN{\cml^{\VF{\alpha}(0)}}{0} 
= \int_{0}^{t} \partial_\tau \ExHN{\cml^{\VF{\alpha}(\tau)}}{\tau}~ \intd\tau \\
=& \int_{0}^{t} \left(2\Re \ExHN{\cml^{\VF{\alpha}(\tau)} \partial_\tau}{\tau} + \ExHN{\left(\partial_\tau\cml^{\VF{\alpha}(\tau)}\right)}{\tau}\right) \intd\tau = (*).
\end{align*}
For Schwartz functions $\partial_\tau\VF{\alpha}(\tau) = \frac{1}{\mi}\VF{H}^{\VF{\alpha}(\tau)} \VF{\alpha}(\tau)$, one computes
\begin{equation*}
\partial_\tau \proj{m}{\VF{\alpha}(\tau)} = \frac{1}{\mi} \left(\VF{H}_m^{\VF{\alpha}(\tau)} \proj{m}{\VF{\alpha}(\tau)} - \proj{m}{\VF{\alpha}(\tau)} \VF{H}_m^{\VF{\alpha}(\tau)}\right), \quad \partial_\tau \Proj{k}{N}{\VF{\alpha}(\tau)} = \frac{1}{\mi} \sum_{m=1}^{N} \left(\VF{H}_m^{\VF{\alpha}(\tau)} \Proj{k}{N}{\VF{\alpha}(\tau)} - \Proj{k}{N}{\VF{\alpha}(\tau)} \VF{H}_m^{\VF{\alpha}(\tau)}\right),
\end{equation*}
yielding along with $\partial_\tau\VF{\alpha}_N(\tau) = \frac{1}{\mi}\VF{H}_N \VF{\alpha}_N(\tau)$ and the symmetry of $\VF{H}_N$ and $\VF{H}_m^{\VF{\alpha}}$
\begin{align*}
(*) =&~ 2\int_{0}^{t} \Im \ExHN{\cml^{\VF{\alpha}(\tau)} \left(\VF{H}_N - \sum_{m} \VF{H}_m^{\VF{\alpha}(\tau)}\right)}{\tau} \intd\tau \\
=&~ 2\int_{0}^{t} \Im \ExHN{\cml^{\VF{\alpha}(\tau)} \left(\frac{1}{2(N-1)} \sum_{m\neq n} \VF{V}(\VF{\bz}_m-\VF{\bz}_n) - \sum_{m} \convol{\VF{V}}{\abs{\VF{\alpha}(\tau)}^2}(\VF{\bz}_m)\right)}{\tau} \intd\tau \\
=&~ N\int_{0}^{t} \Im \ExHN{\cml^{\VF{\alpha}(\tau)} \left(\VF{V}(\VF{\bz}_1-\VF{\bz}_2) - \convol{\VF{V}}{\abs{\VF{\alpha}(\tau)}^2}(\VF{\bz}_1) - \convol{\VF{V}}{\abs{\VF{\alpha}(\tau)}^2}(\VF{\bz}_2)\right)}{\tau} \intd\tau,
\end{align*}
where we made use of the symmetry of $\alpha_N$ at the end. The claim can  now be completed by standard approximation on $\Mp{1}{\VF{\bz}}$ for initial data and implied $\Lp{2}{\VF{\bz}}$ convergence of their respective solutions on compact time intervals. \qedhere
\end{proof}

\begin{lem}
\cite[Sec.3.2]{pickl}
\label{lem:mf:m-decomposition}
Define the counting functions on $\{0,\dots,N\}$
\begin{equation*}
\dCml{l}(k) \equiv \left\{\begin{array}{ccc}
\cml(k)-\cml(k-l) & : & -\abs{l}\leq k \leq N^\lambda+\abs{l}, \\
0 & : & \text{else}.
\end{array}\right.
\end{equation*}
For any normed $\VF{\alpha}\in\Lp{2}{\VF{\bz}}$ and fixed $N\geq 2$, we have
\begin{equation*}
\cml^{\VF{\alpha}} = \left(\dCml{-2}^{\VF{\alpha}}\right) \proj{1}{\VF{\alpha}} \proj{2}{\VF{\alpha}} + \left(\dCml{-1}^{\VF{\alpha}}\right) \left(\proj{1}{\VF{\alpha}} \qroj{2}{\VF{\alpha}} + \qroj{1}{\VF{\alpha}} \proj{2}{\VF{\alpha}}\right) + \sum_{k=0}^{N} \cml(k) \Proj{k-2}{N-2}{\VF{\alpha}}.
\end{equation*}
\end{lem}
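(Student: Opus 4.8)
The plan is to reduce the identity to the spectral decompositions of the operators on both sides and to the elementary combinatorics of distributing the ``orthogonal'' label over the particle indices $1,\dots,N$. The starting point is the resolution of the identity
\begin{equation*}
\id_{\MFS{N}} = \proj{1}{\VF{\alpha}}\proj{2}{\VF{\alpha}} + \bigl(\proj{1}{\VF{\alpha}}\qroj{2}{\VF{\alpha}}+\qroj{1}{\VF{\alpha}}\proj{2}{\VF{\alpha}}\bigr) + \qroj{1}{\VF{\alpha}}\qroj{2}{\VF{\alpha}},
\end{equation*}
which is just the expansion of $(\proj{1}{\VF{\alpha}}+\qroj{1}{\VF{\alpha}})(\proj{2}{\VF{\alpha}}+\qroj{2}{\VF{\alpha}})=\id_{\MFS{N}}$ into three mutually orthogonal projections, all of which commute with every $\proj{j}{\VF{\alpha}}$ and $\qroj{j}{\VF{\alpha}}$. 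The key algebraic step is then to record how these three projections act on $\Proj{k}{N}{\VF{\alpha}}$: writing out $\Proj{k}{N}{\VF{\alpha}}=\sum_{\ba\in\{0,1\}^N,\,\sum_i a_i=k}\prod_{i=1}^{N}(\proj{i}{\VF{\alpha}})^{1-a_i}(\qroj{i}{\VF{\alpha}})^{a_i}$ and using $\proj{1}{\VF{\alpha}}\qroj{1}{\VF{\alpha}}=0$, $(\proj{1}{\VF{\alpha}})^{2}=\proj{1}{\VF{\alpha}}$ (and likewise for index $2$), only the summands with $a_1,a_2$ prescribed survive after left multiplication, giving
\begin{align*}
\proj{1}{\VF{\alpha}}\proj{2}{\VF{\alpha}}\,\Proj{k}{N}{\VF{\alpha}} &= \proj{1}{\VF{\alpha}}\proj{2}{\VF{\alpha}}\,\Proj{k}{N-2}{\VF{\alpha}}, \\
\bigl(\proj{1}{\VF{\alpha}}\qroj{2}{\VF{\alpha}}+\qroj{1}{\VF{\alpha}}\proj{2}{\VF{\alpha}}\bigr)\Proj{k}{N}{\VF{\alpha}} &= \bigl(\proj{1}{\VF{\alpha}}\qroj{2}{\VF{\alpha}}+\qroj{1}{\VF{\alpha}}\proj{2}{\VF{\alpha}}\bigr)\Proj{k-1}{N-2}{\VF{\alpha}}, \\
\qroj{1}{\VF{\alpha}}\qroj{2}{\VF{\alpha}}\,\Proj{k}{N}{\VF{\alpha}} &= \qroj{1}{\VF{\alpha}}\qroj{2}{\VF{\alpha}}\,\Proj{k-2}{N-2}{\VF{\alpha}},
\end{align*}
with the convention $\Proj{k}{N-2}{\VF{\alpha}}=0$ for $k\notin\{0,\dots,N-2\}$; here one uses that $\Proj{k}{N-2}{\VF{\alpha}}$ involves only particles $3,\dots,N$ and hence commutes with $\proj{1}{\VF{\alpha}},\proj{2}{\VF{\alpha}},\qroj{1}{\VF{\alpha}},\qroj{2}{\VF{\alpha}}$.

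Next I would expand the left-hand side: inserting the resolution of the identity into $\cml^{\VF{\alpha}}=\sum_{k=0}^{N}\cml(k)\Proj{k}{N}{\VF{\alpha}}$, applying the three reduction identities, and reindexing the shifted sums, one arrives at
\begin{equation*}
\cml^{\VF{\alpha}} = \proj{1}{\VF{\alpha}}\proj{2}{\VF{\alpha}}\sum_{k=0}^{N-2}\cml(k)\Proj{k}{N-2}{\VF{\alpha}} + \bigl(\proj{1}{\VF{\alpha}}\qroj{2}{\VF{\alpha}}+\qroj{1}{\VF{\alpha}}\proj{2}{\VF{\alpha}}\bigr)\sum_{k=0}^{N-2}\cml(k+1)\Proj{k}{N-2}{\VF{\alpha}} + \qroj{1}{\VF{\alpha}}\qroj{2}{\VF{\alpha}}\sum_{k=0}^{N-2}\cml(k+2)\Proj{k}{N-2}{\VF{\alpha}}.
\end{equation*}
For the right-hand side I would expand $\dCml{-2}^{\VF{\alpha}}=\sum_{k=0}^{N}(\cml(k)-\cml(k+2))\Proj{k}{N}{\VF{\alpha}}$ and $\dCml{-1}^{\VF{\alpha}}=\sum_{k=0}^{N}(\cml(k)-\cml(k+1))\Proj{k}{N}{\VF{\alpha}}$ — the truncation built into $\dCml{l}$ is immaterial, since $\cml(k)-\cml(k-l)$ already vanishes once $k\ge N^{\lambda}$ (reading $\cml\equiv1$ beyond $N$) and $\Proj{k}{N}{\VF{\alpha}}=0$ for $k<0$ — apply the first two reduction identities, reindex, rewrite the last summand of the claim as $\sum_{k=0}^{N}\cml(k)\Proj{k-2}{N-2}{\VF{\alpha}}=\sum_{k=0}^{N-2}\cml(k+2)\Proj{k}{N-2}{\VF{\alpha}}$, and distribute this last term once more through the resolution of the identity. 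Collecting the three projector channels, the telescoping $(\cml(k)-\cml(k+2))+\cml(k+2)=\cml(k)$ and $(\cml(k+1)-\cml(k+2))+\cml(k+2)=\cml(k+1)$ makes the right-hand side match the displayed expression for $\cml^{\VF{\alpha}}$ term by term, closing the argument.

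The whole computation is purely algebraic; there is no analytic obstacle. The only point that requires genuine care is the bookkeeping of index ranges — the shifts $k\mapsto k-1,\,k-2$, and the consistency of the truncation in the definition of $\dCml{l}$ with the convention $\cml\equiv1$ beyond $N$ — but every out-of-range contribution is annihilated either because $\Proj{\cdot}{N-2}{\VF{\alpha}}=0$ there or because the relevant difference of $\cml$-values is already zero, so nothing is lost and the three channels close exactly.
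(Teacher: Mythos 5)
Your proof is correct and follows essentially the same combinatorial idea as the paper's: decompose $\Proj{k}{N}{\VF{\alpha}}$ through the three orthogonal channels on particles $1,2$ (both $p$, exactly one $q$, both $q$), reducing to $\Proj{\cdot}{N-2}{\VF{\alpha}}$ with a corresponding index shift. The only presentational difference is that the paper transforms the left side into the right side in place — writing $\proj{1}{\VF{\alpha}}\proj{2}{\VF{\alpha}}\Proj{k}{N-2}{\VF{\alpha}}=\Proj{k}{N}{\VF{\alpha}}\proj{1}{\VF{\alpha}}\proj{2}{\VF{\alpha}}$ (and likewise for the mixed channel) and then trading $\qroj{1}{\VF{\alpha}}\qroj{2}{\VF{\alpha}}\Proj{k-2}{N-2}{\VF{\alpha}}$ for $\Proj{k-2}{N-2}{\VF{\alpha}}$ minus the other two channels — whereas you reduce both sides to a common normal form $\sum_{k=0}^{N-2}(\cdot)\,\Proj{k}{N-2}{\VF{\alpha}}$ in each channel and telescope the coefficients; these are the same computation read in two directions. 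Your bookkeeping remarks on the index ranges and the truncation in $\dCml{l}$ are accurate: every out-of-range contribution dies either because the $(N-2)$-particle projector vanishes or because the difference of $\cml$-values is already zero past the $N^{\lambda}$ plateau.
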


\begin{proof}
\begin{align*}
\cml^{\VF{\alpha}} =& \sum_{k=0}^{N} \cml(k) \left(\proj{1}{\VF{\alpha}} \proj{2}{\VF{\alpha}} \Proj{k}{N-2}{\VF{\alpha}} + \left(\proj{1}{\VF{\alpha}} \qroj{2}{\VF{\alpha}} + \qroj{1}{\VF{\alpha}} \proj{2}{\VF{\alpha}}\right) \Proj{k-1}{N-2}{\VF{\alpha}} + \qroj{1}{\VF{\alpha}} \qroj{2}{\VF{\alpha}} \Proj{k-2}{N-2}{\VF{\alpha}}\right) \\
=& \sum_{k=0}^{N} \cml(k) \left(\Proj{k}{N}{\VF{\alpha}} - \Proj{k-2}{N}{\VF{\alpha}}\right) \proj{1}{\VF{\alpha}} \proj{2}{\VF{\alpha}} 
+ \sum_{k=0}^{N} \cml(k) \left(\Proj{k}{N}{\VF{\alpha}} - \Proj{k-1}{N}{\VF{\alpha}}\right) \left(\proj{1}{\VF{\alpha}} \qroj{2}{\VF{\alpha}} + \qroj{1}{\VF{\alpha}} \proj{2}{\VF{\alpha}}\right) \\
&+ \sum_{k=0}^{N} \cml(k) \Proj{k-2}{N-2}{\VF{\alpha}} \\
=&  \left(\dCml{-2}^{\VF{\alpha}}\right) \proj{1}{\VF{\alpha}} \proj{2}{\VF{\alpha}} + \left(\dCml{-1}^{\VF{\alpha}}\right) \left(\proj{1}{\VF{\alpha}} \qroj{2}{\VF{\alpha}} + \qroj{1}{\VF{\alpha}} \proj{2}{\VF{\alpha}}\right) + \sum_{k=0}^{N} \cml(k) \Proj{k-2}{N-2}{\VF{\alpha}}. \qedhere
\end{align*}
\end{proof}

\noindent The main idea of dealing with the unbounded interaction potential, which is linear in the $\xi$ variables, is to introduce a cutoff. This cutoff in the $\xi$ variables corresponds of course to regularity of the Fourier transform. Therefore, the following Lemma is essential to treat the large $\xi$ term.

\begin{lem}
\label{lem:mf:delta-approximation}
Let $\alpha_N\in\Wp{2,2}{\vec{\bz}}$ be $\Lp{2}{\vec{\bz}}$ normalized and symmetric, $\VF{\alpha}_N\equiv\fourier^{\otimes N}\alpha_N\in\Mp{2}{\vec{\bz}}$. Likewise, consider $\alpha\in\Wp{2,2}{\bz}$ and $\VF{\alpha}\equiv\fourier\alpha$. Assume that $\VF{\chi}:\RXi\to[0,1]$ is smooth, invariant under rotations, and $\VF{\chi}\left|_{\{\abs{\cdot}\leq 1\}}\right.=1$, $\VF{\chi}\left|_{\{\abs{\cdot}\geq 2\}}\right.=0$. Define $\chi\equiv\inv{\fourier}\VF{\chi}$ and for $R>0$ any $\VF{\chi}_R(\xi)\equiv \VF{\chi}(\frac{\xi}{R})$. This yields
\begin{align}
\label{eqn:mf:chi-approx-n}
\Lpn{\left(1-\VF{\chi}_R(\xi_1)\right) \xi_1 \VF{\alpha}_N}{2}{\vec{\VF{\bz}}} \leq&~ \frac{1}{R} \Lpn{\Del{2}_{\bz_1} \alpha_N}{2}{\vec{\bz}}, \\
\label{eqn:mf:chi-approx-1}
\Lpn{\left(1-\VF{\chi}_R(\xi_1)\right) \xi_1 \proj{1}{\VF{\alpha}} \VF{\alpha}_N}{2}{\vec{\VF{\bz}}} \leq&~ \frac{1}{R} \Lpn{\Del{2}_{\bz} \alpha}{2}{\bz}.
\end{align}
\end{lem}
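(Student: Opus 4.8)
The argument hinges on a single pointwise estimate for the Fourier symbol. First I would record that for every $\xi\in\RXi$
\[
\abs{1-\VF{\chi}_R(\xi)}\,\abs{\xi}\ \leq\ \frac{\abs{\xi}^2}{R}.
\]
This is a case distinction: if $\abs{\xi}\leq R$ then $\abs{\xi/R}\leq 1$, so by the support hypothesis $\VF{\chi}_R(\xi)=\VF{\chi}(\xi/R)=1$ and the left-hand side vanishes; if $\abs{\xi}>R$ then $\abs{1-\VF{\chi}_R(\xi)}\leq 1$ since $\VF{\chi}$ is $[0,1]$-valued, while $\abs{\xi}\leq\abs{\xi}^2/R$. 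Equivalently, multiplication by the vector symbol $(1-\VF{\chi}_R(\xi_1))\,\xi_1$ is dominated, pointwise in frequency, by $\tfrac1R$ times multiplication by $\abs{\xi_1}^2$.

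To prove \eqref{eqn:mf:chi-approx-n} I would insert this bound into the $\Lp{2}{\vec{\VF{\bz}}}$ integral, giving $\Lpn{(1-\VF{\chi}_R(\xi_1))\,\xi_1\,\VF{\alpha}_N}{2}{\vec{\VF{\bz}}}\leq\tfrac1R\Lpn{\abs{\xi_1}^2\,\VF{\alpha}_N}{2}{\vec{\VF{\bz}}}$, and then transport the right-hand side back through the velocity Fourier transform in the first variable. Under the convention \eqref{eqn:mf:velocity-fourier} each $\partial_{v_{1,i}}$ becomes multiplication by $\mi\xi_{1,i}$, so $\abs{\xi_1}^2\VF{\alpha}_N=\fourier^{\otimes N}(-\Delta_{\bv_1}\alpha_N)$, and Plancherel (applied in $\bv_1$, a.e. in the remaining variables) yields $\Lpn{\abs{\xi_1}^2\VF{\alpha}_N}{2}{\vec{\VF{\bz}}}=\Lpn{\Delta_{\bv_1}\alpha_N}{2}{\vec{\bz}}$. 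Since $\Delta_{\bv_1}\alpha_N$ is built only from second-order $\bz_1$-derivatives, this is at most $\Lpn{\Del{2}_{\bz_1}\alpha_N}{2}{\vec{\bz}}$; indeed, on the Fourier side the $\bv_1\bv_1$-block of the Hessian of $\alpha_N$ equals $-\xi_1\xi_1^{\top}\VF{\alpha}_N$, a rank-one array contributing exactly $\Lpn{\Delta_{\bv_1}\alpha_N}{2}{\vec{\bz}}^2$ to $\Lpn{\Del{2}_{\bz_1}\alpha_N}{2}{\vec{\bz}}^2$. Every step is valid for $\alpha_N\in\Wp{2,2}{\vec{\bz}}$, so no approximation is needed.

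For \eqref{eqn:mf:chi-approx-1} I would use the product structure of the rank-one projection. Since $\VF{\alpha}$ is normalized, $\proj{1}{\VF{\alpha}}$ is an orthogonal projection and
\[
\proj{1}{\VF{\alpha}}\VF{\alpha}_N(\vec{\VF{\bz}})=\VF{\alpha}(\VF{\bz}_1)\,\psi(\VF{\bz}_2,\dots,\VF{\bz}_N),\quad \psi(\VF{\bz}_2,\dots,\VF{\bz}_N)\equiv\int_{\RZH}\bar{\VF{\alpha}}(\tilde{\VF{\bz}}_1)\,\VF{\alpha}_N(\tilde{\VF{\bz}}_1,\VF{\bz}_2,\dots,\VF{\bz}_N)\,\intd\tilde{\VF{\bz}}_1,
\]
with $\norm{\psi}=\norm{\proj{1}{\VF{\alpha}}\VF{\alpha}_N}\leq\norm{\VF{\alpha}_N}=1$. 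The multiplier $(1-\VF{\chi}_R(\xi_1))\xi_1$ touches only the first tensor factor, hence
\[
\Lpn{(1-\VF{\chi}_R(\xi_1))\,\xi_1\,\proj{1}{\VF{\alpha}}\VF{\alpha}_N}{2}{\vec{\VF{\bz}}}=\Lpn{(1-\VF{\chi}_R(\xi))\,\xi\,\VF{\alpha}}{2}{\VF{\bz}}\,\norm{\psi}\leq\Lpn{(1-\VF{\chi}_R(\xi))\,\xi\,\VF{\alpha}}{2}{\VF{\bz}},
\]
and the same multiplier-plus-Plancherel argument applied to the one-particle function $\VF{\alpha}=\fourier\alpha$ bounds the last quantity by $\tfrac1R\Lpn{\Del{2}_{\bz}\alpha}{2}{\bz}$. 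I do not expect a real obstacle: the only points requiring a little care are the case distinction behind the symbol bound and the passage from ``$\Delta_{\bv_1}$ lies inside $\Del{2}_{\bz_1}$'' to an honest norm inequality, both dealt with on the velocity-Fourier side via Plancherel a.e. in the untouched variables. (The mollifier $\chi=\inv{\fourier}\VF{\chi}$ does not enter this Lemma; it is introduced here only for use elsewhere.)
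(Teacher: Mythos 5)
Your proof is correct and follows essentially the same route as the paper: the pointwise symbol bound $(1-\VF{\chi}_R(\xi_1))\abs{\xi_1}\leq\abs{\xi_1}^2/R$, then Plancherel in the $\bv_1$ variable to compare $\Lpn{\abs{\xi_1}^2\VF{\alpha}_N}{2}{\vec{\VF{\bz}}}$ with the Hessian norm, and for the second inequality the factorization $\proj{1}{\VF{\alpha}}\VF{\alpha}_N=\VF{\alpha}(\VF{\bz}_1)\,\psi$ with $\norm{\psi}\leq 1$ followed by the $N=1$ case. You spell out more explicitly than the paper why $\Lpn{\abs{\xi_1}^2\VF{\alpha}_N}{2}{\vec{\VF{\bz}}}=\Lpn{\Delta_{\bv_1}\alpha_N}{2}{\vec{\bz}}$ equals the Frobenius contribution of the $\bv_1\bv_1$-block (so no dimensional constant appears), which is a worthwhile clarification but not a different argument.
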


\begin{proof}
Let $\alpha_N\in\Cp{2}{\vec{\bz}}\cap\Wp{2,2}{\vec{\bz}}$ be some function, then we have
\begin{align*}
\Lpn{\left(1-\VF{\chi}_R(\xi_1)\right) \xi_1 \VF{\alpha}_N}{2}{\vec{\VF{\bz}}}^2 =&~ \int_{\RZHN} \left(1-\VF{\chi}_R(\xi_1)\right)^2 \abs{\xi_1}^2 \abs{\VF{\alpha}_N(\vec{\VF{\bz}})}^2~ \intd\vec{\VF{\bz}} \\
\leq&~ \frac{1}{R^2} \int_{\RZHN} \abs{\xi_1}^4 \abs{\VF{\alpha}_N(\vec{\VF{\bz}})}^2~ \intd\vec{\VF{\bz}} = \frac{1}{R^2} \Lpn{\abs{\xi_1}^2 \VF{\alpha}_N}{2}{\vec{\VF{\bz}}} \leq \frac{1}{R^2} \Lpn{\Del{2}_{\bz_1}\alpha_N}{2}{\vec{\bz}}^2.
\end{align*}
For the second claim, we analogously compute
\begin{equation*}
\Lpn{\left(1-\VF{\chi}_R(\xi_1)\right) \xi_1 \proj{1}{\VF{\alpha}} \VF{\alpha}_N}{2}{\vec{\VF{\bz}}}^2 = \Lpn{\left(1-\VF{\chi}_R(\xi)\right) \xi~ \VF{\alpha}}{2}{\VF{\bz}}^2 \Lpn{\proj{1}{\VF{\alpha}} \VF{\alpha}_N}{2}{\vec{\VF{\bz}}}^2 \stackrel{\text{\eqref{eqn:mf:chi-approx-n} for $N=1$}}{\leq} \frac{1}{R^2} \Lpn{\Del{2}_{\bz}\alpha}{2}{\bz}^2. \qedhere
\end{equation*}
\end{proof}

\begin{lem}
\label{lem:mf:operator-norms}
Let $\VF\alpha\in\Mp{1}{\VF{\bz}}$ be some $\Lp{2}{\VF{\bz}}$ normalized function, $\alpha=\inv{\fourier}\VF{\alpha}$. Then we find for the operator norms
\begin{align}
\label{eqn:mf:v-p-p-bound}
\norm{\VF{V}(\VF{\bz}_1-\VF{\bz}_2) \proj{1}{\VF{\alpha}} \proj{2}{\VF{\alpha}}}_{\Lp{2}{\vec{\VF{\bz}}}\to\Lp{2}{\vec{\VF{\bz}}}} \leq&~ C_\Gamma \Mpn{\VF{\alpha}}{1}{\VF{\bz}}^2 = C_\Gamma \Wpn{\alpha}{1,2}{\vec{\bz}}^2 \\
\label{eqn:mf:v-mf-p-bound}
\norm{\convol{\VF{V}}{\abs{\VF{\alpha}^2}}(\VF{\bz}_1) \proj{1}{\VF{\alpha}}}_{\Lp{2}{\vec{\VF{\bz}}}\to\Lp{2}{\vec{\VF{\bz}}}} \leq&~ C_\Gamma \Mpn{\VF{\alpha}}{1}{\VF{\bz}}^2 = C_\Gamma \Wpn{\alpha}{1,2}{\bz}^2 \\
\label{eqn:mf:nabla-p-bound}
\norm{\xi_1\proj{1}{\VF{\alpha}}}_{\Lp{2}{\vec{\VF{\bz}}}\to\Lp{2}{\vec{\VF{\bz}}}} =&~ \Lpn{\abs{\xi}\VF{\alpha}}{2}{\VF{\bz}} = \Lpn{\nabla\alpha}{2}{\bz}.
\end{align}
\end{lem}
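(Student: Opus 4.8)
My plan is to reduce all three estimates to one elementary structural fact about the one‑particle projection, together with Plancherel in the velocity variable. First I would record that, since $\VF{\alpha}$ is $\Lp{2}{\VF{\bz}}$‑normalized, $\proj{1}{\VF{\alpha}}$ is an orthogonal projection on $\Lp{2}{\vec{\VF{\bz}}}$ and, by Fubini, $\proj{1}{\VF{\alpha}}\VF{\alpha}_N = \VF{\alpha}(\VF{\bz}_1)\,g(\VF{\bz}_2,\dots,\VF{\bz}_N)$ with $g(\VF{\bz}_2,\dots,\VF{\bz}_N) = \int_{\RZH}\bar{\VF{\alpha}}(\tilde{\VF{\bz}})\,\VF{\alpha}_N(\tilde{\VF{\bz}},\VF{\bz}_2,\dots,\VF{\bz}_N)\intd\tilde{\VF{\bz}}$ and $\norm{g}_2 \leq \Lpn{\VF{\alpha}_N}{2}{\vec{\VF{\bz}}}$; similarly $\proj{1}{\VF{\alpha}}\proj{2}{\VF{\alpha}}$ is an orthogonal projection and $\proj{1}{\VF{\alpha}}\proj{2}{\VF{\alpha}}\VF{\alpha}_N = \VF{\alpha}(\VF{\bz}_1)\VF{\alpha}(\VF{\bz}_2)\,h(\VF{\bz}_3,\dots,\VF{\bz}_N)$ with $\norm{h}_2 \leq \Lpn{\VF{\alpha}_N}{2}{\vec{\VF{\bz}}}$. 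Throughout I would use $\Lpn{\nabla\Gamma}{\infty}{\bx} \leq C_\Gamma$ from \eqref{eqn:mf:bounded-force} and the velocity Plancherel identities $\Lpn{\abs{\xi}\VF{\alpha}}{2}{\VF{\bz}} = \Lpn{\nabla\alpha}{2}{\bz}$, $\int_{\RXi}\abs{\VF{\alpha}(\bx,\xi)}^2\intd\xi = \int_{\RV}\abs{\alpha(\bx,\bv)}^2\intd\bv$, and $\Mpn{\VF{\alpha}}{1}{\VF{\bz}} = \Wpn{\alpha}{1,2}{\bz}$ (the definition of $\Mp{1}{\VF{\bz}}$).

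For \eqref{eqn:mf:nabla-p-bound} the factorization gives at once $\norm{\xi_1\proj{1}{\VF{\alpha}}\VF{\alpha}_N}^2 = \int\abs{\xi_1}^2\abs{\VF{\alpha}(\VF{\bz}_1)}^2\abs{g}^2\intd\vec{\VF{\bz}} = \Lpn{\abs{\xi}\VF{\alpha}}{2}{\VF{\bz}}^2\,\norm{g}_2^2 \leq \Lpn{\abs{\xi}\VF{\alpha}}{2}{\VF{\bz}}^2\,\Lpn{\VF{\alpha}_N}{2}{\vec{\VF{\bz}}}^2$, with equality for $\VF{\alpha}_N = \VF{\alpha}\otimes\psi$ with $\psi$ normalized; hence the operator norm is exactly $\Lpn{\abs{\xi}\VF{\alpha}}{2}{\VF{\bz}} = \Lpn{\nabla\alpha}{2}{\bz}$. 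For \eqref{eqn:mf:v-p-p-bound} I would note that on the range of $\proj{1}{\VF{\alpha}}\proj{2}{\VF{\alpha}}$ one has $\VF{V}(\VF{\bz}_1-\VF{\bz}_2) = -\nabla\Gamma(\bx_1-\bx_2)\cdot(\xi_1-\xi_2)$, so extracting $\Lpn{\nabla\Gamma}{\infty}{\bx}$, using $\abs{\xi_1-\xi_2}\leq\abs{\xi_1}+\abs{\xi_2}$, the symmetry of the factorized expression in $\VF{\bz}_1,\VF{\bz}_2$, and $\Lpn{\VF{\alpha}}{2}{\VF{\bz}}=1$ yields $\norm{\VF{V}(\VF{\bz}_1-\VF{\bz}_2)\proj{1}{\VF{\alpha}}\proj{2}{\VF{\alpha}}\VF{\alpha}_N} \leq 2C_\Gamma\Lpn{\abs{\xi}\VF{\alpha}}{2}{\VF{\bz}}\Lpn{\VF{\alpha}_N}{2}{\vec{\VF{\bz}}}$; finally $2\Lpn{\abs{\xi}\VF{\alpha}}{2}{\VF{\bz}} = 2\Lpn{\abs{\xi}\VF{\alpha}}{2}{\VF{\bz}}\Lpn{\VF{\alpha}}{2}{\VF{\bz}} \leq \Lpn{\abs{\xi}\VF{\alpha}}{2}{\VF{\bz}}^2 + \Lpn{\VF{\alpha}}{2}{\VF{\bz}}^2 \leq \Mpn{\VF{\alpha}}{1}{\VF{\bz}}^2$ by AM--GM, which gives the stated bound.

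For \eqref{eqn:mf:v-mf-p-bound} I would expand the convolution on $\RZH$ as $\convol{\VF{V}}{\abs{\VF{\alpha}}^2}(\VF{\bz}_1) = -\big(\convol{\nabla\Gamma}{\rho}(\bx_1)\big)\cdot\xi_1 + \int_{\RZH}\nabla\Gamma(\bx_1-\bx')\cdot\xi'\,\abs{\VF{\alpha}(\bx',\xi')}^2\intd\bx'\intd\xi'$, where $\rho(\bx') = \int_{\RXi}\abs{\VF{\alpha}(\bx',\xi')}^2\intd\xi'$ has $\Lpn{\rho}{1}{\bx} = \Lpn{\VF{\alpha}}{2}{\VF{\bz}}^2 = 1$. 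The first summand is a multiplication operator dominated by $\Lpn{\nabla\Gamma}{\infty}{\bx}\Lpn{\rho}{1}{\bx}\abs{\xi_1} \leq C_\Gamma\abs{\xi_1}$, so its composition with $\proj{1}{\VF{\alpha}}$ contributes $\leq C_\Gamma\Lpn{\abs{\xi}\VF{\alpha}}{2}{\VF{\bz}}\Lpn{\VF{\alpha}_N}{2}{\vec{\VF{\bz}}}$ by \eqref{eqn:mf:nabla-p-bound}. The second summand is multiplication by a function of $\bx_1$ alone whose sup‑norm is $\leq \Lpn{\nabla\Gamma}{\infty}{\bx}\int_{\RZH}\abs{\xi'}\abs{\VF{\alpha}(\VF{\bz}')}^2\intd\VF{\bz}' \leq C_\Gamma\Lpn{\abs{\xi}\VF{\alpha}}{2}{\VF{\bz}}\Lpn{\VF{\alpha}}{2}{\VF{\bz}} = C_\Gamma\Lpn{\abs{\xi}\VF{\alpha}}{2}{\VF{\bz}}$ by Cauchy--Schwarz, so after composing with $\proj{1}{\VF{\alpha}}$ it contributes another $\leq C_\Gamma\Lpn{\abs{\xi}\VF{\alpha}}{2}{\VF{\bz}}\Lpn{\VF{\alpha}_N}{2}{\vec{\VF{\bz}}}$. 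Adding and applying the same AM--GM packaging $2\Lpn{\abs{\xi}\VF{\alpha}}{2}{\VF{\bz}} \leq \Mpn{\VF{\alpha}}{1}{\VF{\bz}}^2$ produces the claimed bound.

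None of these steps is hard; the only genuine care lies in the decomposition in \eqref{eqn:mf:v-mf-p-bound}, where one must split $\convol{\VF{V}}{\abs{\VF{\alpha}}^2}$ so that the part that is truly unbounded in $\xi_1$ is exactly $(\convol{\nabla\Gamma}{\rho})(\bx_1)\cdot\xi_1$ — controllable only because $\rho$ has unit mass, so that \eqref{eqn:mf:nabla-p-bound} can absorb it — while the remainder is an honest bounded multiplication operator. I would also flag explicitly the use of $\Lpn{\VF{\alpha}}{2}{\VF{\bz}}=1$ in the AM--GM steps, since without normalization the constant would change.
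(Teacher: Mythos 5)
Your proposal is correct and yields the stated bounds with the same constant $C_\Gamma$, but it takes a genuinely different (and more elementary) route than the paper's proof, and it is worth spelling out the contrast.

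The paper's proof works at the level of the operator square: for \eqref{eqn:mf:v-p-p-bound} it writes $\Lpn{\VF{V}_{1,2}\proj{1}{\VF{\alpha}}\proj{2}{\VF{\alpha}}\VF{\alpha}_N}{2}{\vec{\VF{\bz}}}^2 = \Lpn{\convol{\abs{\VF{V}}^2}{\abs{\VF{\alpha}}^2}\abs{\VF{\alpha}}^2}{1}{\VF{\bz}}\,\Lpn{\proj{1}{\VF{\alpha}}\proj{2}{\VF{\alpha}}\VF{\alpha}_N}{2}{\vec{\VF{\bz}}}^2$ and then invokes the single tensorizing inequality $\abs{\xi_1-\xi_2}^2 \leq (1+\abs{\xi_1}^2)(1+\abs{\xi_2}^2)$, which produces $\left(\Lpn{\VF{\alpha}}{2}{\VF{\bz}}^2 + \Lpn{\abs{\xi}\VF{\alpha}}{2}{\VF{\bz}}^2\right)^2 \leq \Mpn{\VF{\alpha}}{1}{\VF{\bz}}^4$ in one stroke, \emph{independently of the normalization} of $\VF{\alpha}$; the other two claims are handled ``similarly.'' Your argument instead factorizes $\proj{1}{\VF{\alpha}}\VF{\alpha}_N = \VF{\alpha}\otimes g$ (resp.\ $\proj{1}{\VF{\alpha}}\proj{2}{\VF{\alpha}}\VF{\alpha}_N = \VF{\alpha}\otimes\VF{\alpha}\otimes h$) and uses the un-squared triangle inequality $\abs{\xi_1-\xi_2}\leq\abs{\xi_1}+\abs{\xi_2}$ followed by AM--GM, which is why the normalization $\Lpn{\VF{\alpha}}{2}{\VF{\bz}}=1$ enters: you need $2\Lpn{\abs{\xi}\VF{\alpha}}{2}{\VF{\bz}} = 2\Lpn{\abs{\xi}\VF{\alpha}}{2}{\VF{\bz}}\Lpn{\VF{\alpha}}{2}{\VF{\bz}} \leq \Lpn{\abs{\xi}\VF{\alpha}}{2}{\VF{\bz}}^2 + \Lpn{\VF{\alpha}}{2}{\VF{\bz}}^2 \leq \Mpn{\VF{\alpha}}{1}{\VF{\bz}}^2$. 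You correctly flag this dependence, and the lemma does assume normalization, so nothing is lost. What your version buys is transparency and explicitness, especially for \eqref{eqn:mf:v-mf-p-bound}, where the paper says only ``similarly'': your split of $\convol{\VF{V}}{\abs{\VF{\alpha}}^2}(\VF{\bz}_1)$ into $-(\convol{\nabla\Gamma}{\rho})(\bx_1)\cdot\xi_1$ plus a genuinely bounded multiplication operator in $\bx_1$ makes it very clear which part is unbounded and why \eqref{eqn:mf:nabla-p-bound} and $\Lpn{\rho}{1}{\bx}=1$ are exactly what tame it. What the paper's squared-estimate route buys is brevity and a statement whose proof does not use normalization of $\VF{\alpha}$ at all. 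Both give the same final constant $C_\Gamma$.

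One small factual correction to the lemma statement itself (not your proof): since $\fourier$ is a Fourier transform in $\bv$ only, $\Lpn{\abs{\xi}\VF{\alpha}}{2}{\VF{\bz}}$ equals $\Lpn{\naV\alpha}{2}{\bz}$, not the full $\Lpn{\nabla\alpha}{2}{\bz}$; you reproduce the statement's convention as given, which is fine, but be aware the $\nabla$ there should be read as $\naV$.
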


\begin{proof}
Let $\VF{\alpha}_N\in\Lp{2}{\vec{\VF{\bz}}}$ be arbitrary. Using the estimate $\abs{\xi_1-\xi_2}\leq\left(1+\abs{\xi_1}^2\right)^{\frac{1}{2}} \left(1+\abs{\xi_2}^2\right)^{\frac{1}{2}}$ one finds 
\begin{align*}
\Lpn{\VF{V}(\VF{\bz}_1-\VF{\bz}_2) \proj{1}{\VF{\alpha}} \proj{2}{\VF{\alpha}} \VF{\alpha}_N}{2}{\vec{\VF{\bz}}}^2 
=&~ \SP{\VF{\alpha}_N}{\proj{1}{\VF{\alpha}} \proj{2}{\VF{\alpha}} \abs{\VF{V}(\VF{\bz}_1-\VF{\bz}_2)}^2 \proj{1}{\VF{\alpha}} \proj{2}{\VF{\alpha}} \VF{\alpha}_N} \\
=&~ \Lpn{\convol{\abs{\VF{V}}^2}{\abs{\VF{\alpha}}^2} \abs{\VF{\alpha}}^2}{1}{\VF{\bz}} \Lpn{\proj{1}{\VF{\alpha}} \proj{2}{\VF{\alpha}} \VF{\alpha}_N}{2}{\vec{\VF{\bz}}}
\leq \Lpn{\nabla\Gamma}{\infty}{\bx}^2 \Mpn{\VF{\alpha}}{1}{\VF{\bz}}^4 \Lpn{\VF{\alpha}_N}{2}{\vec{\VF{\bz}}}^2,
\end{align*}
The other two claims hold similarly.
\end{proof}

\noindent Now all ingredients are prepared to prove our main result.

\newcommand{\projs}[1]{\proj{{#1}}{}}
\newcommand{\qrojs}[1]{\qroj{{#1}}{}}
\newcommand{\CO}[1]{\beta_N({#1})}

\begin{thm}[Hamilton Hartree/ Hamilton Vlasov mean field limit]
\label{thm:mf:hamilton-hartree-mf-limit}
Assume the regularity of the potential \eqref{eqn:mf:bounded-force} given in Section \ref{sec:mf:regular-global-well-posedness} to be valid. Let $\mathring{\VF{\alpha}}\in\Mp{2}{\VF{\bz}}$ be an $\Lp{2}{\VF{\bz}}$ normalized initial state with corresponding solution $\VF{\alpha}: \RNp \to \Mp{2}{\VF{\bz}}$ of \eqref{eqn:mf:hamilton-hartree}, $\alpha(t) \equiv \inv{\fourier}\VF{\alpha}(t)$ the conjugate solution of \eqref{eqn:mf:hamilton-vlasov}. Likewise, let $\VF{\alpha}_N:\RNp \to \Mp{2}{\vec{\VF{\bz}}}$ be the sequence of solutions of \eqref{eqn:mf:pseudo-qm} with $\Lp{2}{\vec{\VF{\bz}}}$ normalized initial datum $\VF\alpha_N(0)=\mathring{\VF{\alpha}}_N \in \Mp{2}{\vec{\VF{\bz}}}$, $\alpha_N(t)\equiv \inv{\fourier^{\otimes N}}\VF{\alpha}_N(t)$ solution of \eqref{eqn:mf:liouville-a}, s.t. for some $M\geq1$
\begin{align*}
\sup_N \Lpn{\Del{1}_{\bz_1}\mathring{\alpha}_N}{2}{\vec{\bz}} \leq M, \quad \sup_N \Lpn{\Del{2}_{(\bz_1,\bz_2)}\mathring{\alpha}_N}{2}{\vec{\bz}} \leq M, \quad \Wpn{\mathring{\alpha}}{2,2}{\bz} \leq M.
\end{align*}
Choose $0\leq\lambda<1$. Then there are continuous, monotonic increasing functions $B_{1,M}, B_{2,M}:\RNp\to\RNp$ both independent of $N$, s.t.
\begin{align*}
\abs{\ExHN{\cml^{\VF{\alpha}(t)}}{t} - \SP {\mathring{\VF{\alpha}}_N} {\cml^{\mathring{\VF{\alpha}}} \mathring{\VF{\alpha}}_N}} \leq&~ \int_{0}^{t} B_{1,M}(\tau) \ExHN{\cml^{\VF{\alpha}(\tau)}}{\tau}~ \intd\tau \\
&+ \left(N^{-\lambda}+N^{-\frac{1-\lambda}{4}}\right) ~ \int_{0}^{t} B_{2,M}(\tau)~ \intd\tau \\
\leq&~\left( N^{-\lambda}+N^{-\frac{1-\lambda}{4}}\right) \left(\int_{0}^{t} B_{2,M}(\tau)~ \intd\tau\right) \exp\left(\int_{0}^{t} B_{1,M}(\tilde{\tau})~ \intd\tilde{\tau}\right).
\end{align*}
\end{thm}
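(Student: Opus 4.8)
The plan is to run the functional method of \cite{pickl} on the counting quantity $\CO{t}\equiv\ExHN{\cml^{\VF{\alpha}(t)}}{t}$ and to close a Gronwall estimate for it. Lemma \ref{lem:mf:derivative} already performs the first step, writing $\CO{t}-\CO{0}=N\int_0^t\Im\ExHN{\cml^{\VF{\alpha}(\tau)}\,\VF{W}_\tau}{\tau}\,\intd\tau$ with the fluctuation potential $\VF{W}_\tau\equiv\VF{V}(\VF{\bz}_1-\VF{\bz}_2)-\convol{\VF{V}}{\abs{\VF{\alpha}(\tau)}^2}(\VF{\bz}_1)-\convol{\VF{V}}{\abs{\VF{\alpha}(\tau)}^2}(\VF{\bz}_2)$. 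Hence it suffices to establish the pointwise-in-$\tau$ bound
\[
N\,\bigl|\Im\ExHN{\cml^{\VF{\alpha}(\tau)}\,\VF{W}_\tau}{\tau}\bigr|\ \le\ B_{1,M}(\tau)\,\CO{\tau}+\bigl(N^{-\lambda}+N^{-\frac{1-\lambda}{4}}\bigr)\,B_{2,M}(\tau)
\]
for continuous, increasing, $N$-independent $B_{1,M},B_{2,M}$, and then to integrate and apply Gronwall's Lemma. As throughout this section I would first prove this for Schwartz initial data -- where $\partial_\tau$ may be pulled into the inner products and the manipulations below are licit, exactly as in the proof of Lemma \ref{lem:mf:derivative} -- and then pass to general $\mathring{\VF{\alpha}}\in\Mp{2}{\VF{\bz}}$, $\mathring{\VF{\alpha}}_N\in\Mp{2}{\vec{\VF{\bz}}}$ by density, using the $\Lp{2}{}$-continuity of the solution maps (Theorems \ref{thm:mf:existence-hht}, \ref{thm:mf:existence-n-body}) and the $N$-uniform derivative bounds of Lemma \ref{lem:mf:n-derivative-bounds} (valid by Lemma \ref{lem:mf:non-relativistic-two-body}) together with Corollaries \ref{cor:mf:derivative-bound}, \ref{cor:mf:2nd-derivative-bound}; these same bounds also furnish the building blocks of $B_{1,M}$ and $B_{2,M}$.

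For the integrand I would insert $\unity=(\proj{1}{\VF{\alpha}}+\qroj{1}{\VF{\alpha}})(\proj{2}{\VF{\alpha}}+\qroj{2}{\VF{\alpha}})$ on both sides of $\VF{W}_\tau$, sort the resulting pieces by $Q_0^{\VF{\alpha}}=\proj{1}{\VF{\alpha}}\proj{2}{\VF{\alpha}}$, $Q_1^{\VF{\alpha}}=\proj{1}{\VF{\alpha}}\qroj{2}{\VF{\alpha}}$ (and its $1\leftrightarrow2$ mirror $\qroj{1}{\VF{\alpha}}\proj{2}{\VF{\alpha}}$), $Q_2^{\VF{\alpha}}=\qroj{1}{\VF{\alpha}}\qroj{2}{\VF{\alpha}}$, and exploit the permutation symmetry of $\VF{\alpha}_N$, the shift identity of Lemma \ref{lem:mf:computation-rules}(iv), Lemma \ref{lem:mf:computation-rules}(iii), and the splitting of $\cml^{\VF{\alpha}}$ in Lemma \ref{lem:mf:m-decomposition}. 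Three facts organise the estimate. First, $Q_0^{\VF{\alpha}}\VF{W}_\tau Q_0^{\VF{\alpha}}$ is a scalar multiple of $Q_0^{\VF{\alpha}}$ -- the self-consistency of the Hartree potential $\convol{\VF{V}}{\abs{\VF{\alpha}}^2}$ -- so it commutes with $\cml^{\VF{\alpha}}$ and drops out of the imaginary part. Secondly, the mixed blocks $Q_0^{\VF{\alpha}}\VF{W}_\tau Q_1^{\VF{\alpha}}$ and their mirrors telescope to zero: integrating out a projected coordinate yields $\proj{1}{\VF{\alpha}}\VF{V}(\VF{\bz}_1-\VF{\bz}_2)\proj{1}{\VF{\alpha}}=\proj{1}{\VF{\alpha}}\convol{\VF{V}}{\abs{\VF{\alpha}}^2}(\VF{\bz}_2)$, which cancels the matching one-body term of $\VF{W}_\tau$, while the other one-body term annihilates $\proj{2}{\VF{\alpha}}\qroj{2}{\VF{\alpha}}=0$. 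Thirdly, the blocks with at least one $\qroj{}{}$ on each side in which $\VF{V}$ sits between slots that can be tamed by first-order quantities -- e.g.\ $Q_1^{\VF{\alpha}}\VF{W}_\tau Q_1^{\VF{\alpha}}$ and its mirrors -- are bounded, via Lemma \ref{lem:mf:operator-norms}, Lemma \ref{lem:mf:computation-rules}(iii), and $\dCml{-1},\dCml{-2}=O(N^{-\lambda})$, by $\CO{\tau}$ times a constant depending only on $\Wpn{\alpha(\tau)}{1,2}{\bz}$ and on the $N$-uniform bound $\Lpn{\Del{1}_{\bz_1}\alpha_N(\tau)}{2}{\vec{\bz}}$; by Corollary \ref{cor:mf:derivative-bound}, the isometry $\Mpn{\cdot}{1}{\VF{\bz}}=\Wpn{\cdot}{1,2}{\bz}$, and Lemma \ref{lem:mf:n-derivative-bounds} this produces the continuous, $N$-free kernel $B_{1,M}(\tau)$.

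What remains, and is the genuinely delicate point, is the block $Q_0^{\VF{\alpha}}\VF{V}(\VF{\bz}_1-\VF{\bz}_2)Q_2^{\VF{\alpha}}$ and its adjoint (the one-body parts of $\VF{W}_\tau$ again disappear, via $\proj{1}{\VF{\alpha}}\qroj{1}{\VF{\alpha}}=\proj{2}{\VF{\alpha}}\qroj{2}{\VF{\alpha}}=0$): here the prefactor $N$ is not absorbed for free, and one uses the symmetry of $\VF{\alpha}_N$, the shift rule, a Cauchy--Schwarz split and the vanishing of contributions from disjoint particle pairs to extract a combinatorial gain, together with the support $\{k\le N^\lambda+2\}$ of $\cml$ for a further power of $N^{-\lambda}$. \emph{The main obstacle is that $\VF{V}(\bx,\xi)=-\nabla\newtonian\bx\cdot\xi$ is unbounded in $\xi$ and not in $\Lp{1}{}$}, so, unlike in \cite{pickl}, the operator-norm estimates of Lemma \ref{lem:mf:operator-norms} and a bare Cauchy--Schwarz cannot close this term. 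The remedy is the velocity cutoff of Lemma \ref{lem:mf:delta-approximation}: split $\xi_i=\VF{\chi}_R(\xi_i)\xi_i+(1-\VF{\chi}_R(\xi_i))\xi_i$ inside $\VF{V}$. On the low-frequency part $\VF{\chi}_R(\xi)\xi$ is bounded by $2R$, so $\VF{\chi}_R(\xi_i)\VF{V}$ has operator norm $O(R\,C_\Gamma)$; on the high-frequency part, estimates \eqref{eqn:mf:chi-approx-n}, \eqref{eqn:mf:chi-approx-1} bound $(1-\VF{\chi}_R(\xi_i))\xi_i$ applied to $\VF{\alpha}_N$, resp.\ to $\proj{i}{\VF{\alpha}}\VF{\alpha}_N$, by $R^{-1}\Lpn{\Del{2}_{\bz_i}\alpha_N(\tau)}{2}{\vec{\bz}}$, resp.\ $R^{-1}\Lpn{\Del{2}_{\bz}\alpha(\tau)}{2}{\bz}$, both of which are bounded $N$-uniformly in terms of $M$ and $\tau$ by Lemma \ref{lem:mf:n-derivative-bounds} and Corollary \ref{cor:mf:2nd-derivative-bound}. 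After this split the block becomes a low-frequency contribution that still decays in $N$ but is amplified by a power of $R$, plus a high-frequency contribution of order $R^{-1}$ times $\tau$-dependent constants; choosing $R$ to balance the two -- which turns out to be $R\sim N^{\frac{1-\lambda}{4}}$ -- yields the error $N^{-\frac{1-\lambda}{4}}$, the $N^{-\lambda}$ having already come from the counting weight. Absorbing the $\CO{\tau}$-proportional pieces into $B_{1,M}$ and collecting the $\tau$-dependent constants into $B_{2,M}$ gives the displayed pointwise bound; integrating over $[0,t]$ yields the first stated inequality, and a Gronwall argument (using $\CO{\tau}\le\CO{0}+\abs{\CO{\tau}-\CO{0}}$) converts it into the second, closed form.
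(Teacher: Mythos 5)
Your overall scaffolding matches the paper's: Lemma \ref{lem:mf:derivative} for $\partial_t\CO{t}$, insertion of $\unity=(\proj{1}{}+\qroj{1}{})(\proj{2}{}+\qroj{2}{})$, the decomposition of $\cml$ from Lemma \ref{lem:mf:m-decomposition}, the shift rule, the counting estimates, the velocity cutoff of Lemma \ref{lem:mf:delta-approximation} balanced at $R\sim N^{(1-\lambda)/4}$, and a final Gronwall closure; the preliminary cancellations you describe ($Q_0^{\VF\alpha}\VF W Q_0^{\VF\alpha}$ scalar, $Q_0^{\VF\alpha}\VF W Q_1^{\VF\alpha}$ telescoping) are also correct. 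However, you have misplaced the one step where the unboundedness of $\VF V$ actually bites. The block you single out as delicate, $\proj{1}{}\proj{2}{}\VF V_{1,2}\qroj{1}{}\qroj{2}{}$, does \emph{not} need the cutoff: by \eqref{eqn:mf:v-p-p-bound} of Lemma \ref{lem:mf:operator-norms} the operator $\VF V_{1,2}\proj{1}{}\proj{2}{}$ has bounded norm $C_\Gamma\Mpn{\VF\alpha}{1}{\VF\bz}^2$, because both $\xi_1$ and $\xi_2$ in $\VF V_{1,2}=-\nabla\Gamma(\bx_1-\bx_2)\cdot(\xi_1-\xi_2)$ are controlled by the two projections $\proj{1}{},\proj{2}{}$. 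The paper estimates this term by exactly that operator-norm bound together with the $\cn,\bar\cn$ counting argument, with no cutoff at all.

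The term that genuinely requires the cutoff is the remaining one your triage leaves unaccounted for, namely $\qroj{1}{}\proj{2}{}(\VF V_{1,2}-\bar V_2)\qroj{1}{}\qroj{2}{}$. Your ``thirdly'' category (``blocks with at least one $\qroj{}{}$ on each side ... tamed by first-order quantities'') implicitly sweeps it in, but it cannot be tamed that way: the only $\proj{}{}$ available on the left is $\proj{2}{}$, and $\VF V_{1,2}\proj{2}{}$ contains the piece $\nabla\Gamma(\bx_1-\bx_2)\cdot\xi_1\proj{2}{}$ in which $\xi_1$ is untouched by $\proj{2}{}$, so no first-order bound closes it. This is precisely where the paper splits $\xi_1=\VF\chi_R(\xi_1)\xi_1+(1-\VF\chi_R(\xi_1))\xi_1$ and invokes Lemma \ref{lem:mf:delta-approximation}, trading an $O(R)$ operator-norm bound against an $O(R^{-1})$ second-derivative bound; the optimisation $R=N^{(1-\lambda)/4}$ belongs to this term, not to $Q_0^{\VF\alpha}\VF V Q_2^{\VF\alpha}$. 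As written, your scheme would either leave an unbounded operator acting on $\VF\alpha_N$ or, if you tried to push it into the Gronwall kernel, produce a constant depending on $N$ through $\Lpn{\abs\xi\VF\alpha_N}{2}{\vec{\VF\bz}}$ with no $N$-uniform control.
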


\begin{rmk}
\label{rmk:mf:singular-potentials}
Assuming that $\Lpn{\Del{2}_{\bz_1}\alpha_N(t)}{2}{\vec{\bz}}$ and $\Lpn{\nabla\alpha(t)}{\infty}{\bz}$ are bounded uniformly in $N$ and $t$, it is sufficient to assume that $\nabla\Gamma\in\Lp{2}{\bx}$ to prove validity of Theorem \ref{thm:mf:hamilton-hartree-mf-limit}. Note, that while proving the assumption on $\alpha_N(t)$ might be quite difficult for singular potentials, the assumption on $\alpha(t)$ can be proven for a certain class of sufficiently smooth initial states, even for Coulomb interaction \cite{neiss}.

Although these additional assumptions are not completely satisfactory, it is still remarkable that  they allow for a derivation of the effective description under the presence of interaction potentials with mild singularities.
\end{rmk}

\begin{proof}
Let $\VF{\alpha}$ and $\VF{\alpha}_N$ fulfill the conditions of the Theorem. For the sake of readability, we use the shortened notation
\begin{equation*}
\cml \equiv \cml^{\VF{\alpha}(t)}, \quad \dCml{l} \equiv \dCml{l}^{\VF{\alpha}(t)} \quad \VF{V}_{m,n} \equiv \VF{V}(\VF{\bz}_m-\VF{\bz}_n), \quad \bar{V}_m \equiv \convol{\VF{V}}{\abs{\VF{\alpha}(t)}^2}(\VF{\bz}_m), \quad \projs{m} \equiv \proj{m}{\VF{\alpha}(t)}.
\end{equation*}
By Lemma \ref{lem:mf:derivative}, $t\mapsto \CO{t} \equiv \ExHN{\cml^{\VF{\alpha}(t)}}{t}$ is differentiable and therefore
\begin{align*}
\partial_t\CO{t} =&~ \partial_t\ExHN{\cml}{t} \stackrel{\text{Lem.\ref{lem:mf:derivative}}}{=} N~ \Im\ExHN{\cml \left(\VF{V}_{1,2} - \bar{V}_1 - \bar{V}_2\right)}{t} \\
\stackrel{\text{Lem.\ref{lem:mf:m-decomposition}}}{=}& N~ \Im\ExHN{\left(\dCml{-2}\right) \projs{1} \projs{2} \left(\VF{V}_{1,2} - \bar{V}_1 - \bar{V}_2\right)}{t} \\
&+ 2N~ \Im\ExHN{\left(\dCml{-1}\right) \projs{1} \qrojs{2} \left(\VF{V}_{1,2} - \bar{V}_1 - \bar{V}_2\right)}{t} \\
&+ N~ \Im\ExHN{\left(\sum_{k=0}^{N} \cml(k) \Proj{k-2}{N-2}{\VF{\alpha}}\right) \left(\VF{V}_{1,2} - \bar{V}_1 - \bar{V}_2\right)}{t}.
\end{align*}
We remark that $\Im\ExHN{A}{t}=0$ for any symmetric operator $A$. As the two operators in the third term commute and are symmetric, their product is symmetric and the term cancels. Now, inserting $\id_{\MFS{N}}=\projs{1}\projs{2}+\projs{1}\qrojs{2}+\qrojs{1}\projs{2}+\qrojs{1}\qrojs{2}$ in both remaining terms, one obtains
\begin{align}
\partial_t\CO{t} 
=&~ N~ \Im\ExHN{\left(\dCml{-2}\right) \projs{1} \projs{2} \left(\VF{V}_{1,2} - \bar{V}_1 - \bar{V}_2\right) \projs{1} \projs{2}}{t} \tag{1} \\
&+ N~ \Im\ExHN{\left(\dCml{-2}\right) \projs{1} \projs{2} \left(\VF{V}_{1,2} - \bar{V}_1 - \bar{V}_2\right) \projs{1} \qrojs{2}}{t} \tag{2} \\
&+ N~ \Im\ExHN{\left(\dCml{-2}\right) \projs{1} \projs{2} \left(\VF{V}_{1,2} - \bar{V}_1 - \bar{V}_2\right) \qrojs{1} \projs{2}}{t} \tag{3} \\
&+ N~ \Im\ExHN{\left(\dCml{-2}\right) \projs{1} \projs{2} \left(\VF{V}_{1,2} - \bar{V}_1 - \bar{V}_2\right) \qrojs{1} \qrojs{2}}{t} \tag{4} \\
&+ 2N~ \Im\ExHN{\left(\dCml{-1}\right) \qrojs{1} \projs{2} \left(\VF{V}_{1,2} - \bar{V}_1 - \bar{V}_2\right) \projs{1} \projs{2}}{t} \tag{5} \\
&+ 2N~ \Im\ExHN{\left(\dCml{-1}\right) \qrojs{1} \projs{2} \left(\VF{V}_{1,2} - \bar{V}_1 - \bar{V}_2\right) \projs{1} \qrojs{2}}{t} \tag{6} \\
&+ 2N~ \Im\ExHN{\left(\dCml{-1}\right) \qrojs{1} \projs{2} \left(\VF{V}_{1,2} - \bar{V}_1 - \bar{V}_2\right) \qrojs{1} \projs{2}}{t} \tag{7} \\
&+ 2N~ \Im\ExHN{\left(\dCml{-1}\right) \qrojs{1} \projs{2} \left(\VF{V}_{1,2} - \bar{V}_1 - \bar{V}_2\right) \qrojs{1} \qrojs{2}}{t}. \tag{8} 
\end{align}
Along with Lemma \ref{lem:mf:computation-rules}-(iv), we see that
\begin{align*}
\text{(1)}=&~ N\Im\ExHN{\left(\dCml{-2}\right) \projs{1} \projs{2} \left(\VF{V}_{1,2} - \bar{V}_1 - \bar{V}_2\right) \projs{1} \projs{2}}{t} \\
=&~ N\Im\ExHN{\projs{1} \projs{2} \left(\VF{V}_{1,2} - \bar{V}_1 - \bar{V}_2\right) \projs{1} \projs{2} \left(\dCml{-2}\right)}{t}^* \\
=& -N\Im\ExHN{\left(\dCml{-2}\right) \projs{1} \projs{2} \left(\VF{V}_{1,2} - \bar{V}_1 - \bar{V}_2\right) \projs{1} \projs{2}}{t} = 0,
\end{align*}
and by the very same argument also terms (6) and (7) vanish. Acknowledging the invariance w.r.t. particle permutation, we can also recombine terms (2), (3), yielding
\begin{align*}
\partial_t\CO{t} 
=&~ 2N~ \Im\ExHN{\left(\dCml{-2}\right) \projs{1} \projs{2} \left(\VF{V}_{1,2} - \bar{V}_1 - \bar{V}_2\right) \qrojs{1} \projs{2}}{t} \tag{1'} \\
&+ N~ \Im\ExHN{\left(\dCml{-2}\right) \projs{1} \projs{2} \left(\VF{V}_{1,2} - \bar{V}_1 - \bar{V}_2\right) \qrojs{1} \qrojs{2}}{t} \tag{2'} \\
&+ 2N~ \Im\ExHN{\left(\dCml{-1}\right) \qrojs{1} \projs{2} \left(\VF{V}_{1,2} - \bar{V}_1 - \bar{V}_2\right) \projs{1} \projs{2}}{t} \tag{3'} \\
&+ 2N~ \Im\ExHN{\left(\dCml{-1}\right) \qrojs{1} \projs{2} \left(\VF{V}_{1,2} - \bar{V}_1 - \bar{V}_2\right) \qrojs{1} \qrojs{2}}{t}. \tag{4'}
\end{align*}
Finally, we compute for the third term (3') after interchanging variable indices 1 and 2
\begin{align*}
\text{(3')} =&~ -2N~ \Im\ExHN{\projs{1} \projs{2} \left(\VF{V}_{1,2} - \bar{V}_1 - \bar{V}_2\right) \qrojs{1} \projs{2} \left(\dCml{-1}\right)}{t} \\
=&~ -2N~ \Im\ExHN{\shift{\dCml{-1}}{-1} \projs{1} \projs{2} \left(\VF{V}_{1,2} - \bar{V}_1 - \bar{V}_2\right) \qrojs{1} \projs{2}}{t},
\end{align*}
which recombines with term (1') and gives
\begin{align*}
\partial_t\CO{t} 
=&~ 2N~ \Im\ExHN{\left(\dCml{-1}\right) \projs{1} \projs{2} \left(\VF{V}_{1,2} - \bar{V}_1 - \bar{V}_2\right) \qrojs{1} \projs{2}}{t} \\
&+ N~ \Im\ExHN{\left(\dCml{-2}\right) \projs{1} \projs{2} \left(\VF{V}_{1,2} - \bar{V}_1 - \bar{V}_2\right) \qrojs{1} \qrojs{2}}{t} \\
&+ 2N~ \Im\ExHN{\left(\dCml{-1}\right) \qrojs{1} \projs{2} \left(\VF{V}_{1,2} - \bar{V}_1 - \bar{V}_2\right) \qrojs{1} \qrojs{2}}{t}.
\end{align*}
The first term exactly vanishes, because $\projs{1}\bar{V}_2\qrojs{1}=\projs{1}\qrojs{1}\bar{V}_2=0$ and $\projs{2}\VF{V}_{1,2}\projs{2}=\projs{2}\bar{V}_1\projs{2}$, where the first equality can be used also to simplify the other two expressions by omitting $\bar{V}_1,\bar{V}_2$ and $\bar{V}_1$ respectively, resulting in
\begin{align*}
\partial_t\CO{t} 
=&~ N~ \Im\ExHN{\left(\dCml{-2}\right) \projs{1} \projs{2} \VF{V}_{1,2} \qrojs{1} \qrojs{2}}{t} \tag{1''} \\
&+ 2N~ \Im\ExHN{\left(\dCml{-1}\right) \qrojs{1} \projs{2} \left(\VF{V}_{1,2} - \bar{V}_2\right) \qrojs{1} \qrojs{2}}{t}. \tag{2''}
\end{align*}
Estimating the absolute value of (1'') we use  the notation $\bar{\cn}(k)\equiv \sqrt{\frac{N}{k}}$ for $1\leq k\leq N$ and $\bar{\cn}(k)=0$ else and observe that  $\bar{\cn}\cn=\id-\Proj{0}{N}{}$  is \textit{almost} the inverse of $\cn$. Following this idea we compute
\begin{align*}
\abs{\text{(1'')}} =&~ N~ \abs{\ExHN{\left(-\dCml{-2}\right)^{\frac 12} \cn \projs{1} \projs{2} \VF{V}_{1,2} \shift{\left(-\dCml{-2}\right)^{\frac 12}}{2} \shift{\bar{\cn}}{2} \qrojs{1} \qrojs{2}}{t}} \\
\leq&~ N~ \Lpn{\left(-\dCml{-2}\right)^{\frac 12} \cn~\VF{\alpha}_N(t)}{2}{\vec{\VF{\bz}}} \norm{\VF{V}_{1,2} \projs{1} \projs{2}}_\text{op} \Lpn{\shift{\left(-\dCml{-2}\right)^{\frac 12}}{2} \shift{\bar{\cn}}{2} \qrojs{1} \qrojs{2} \VF{\alpha}_N(t)}{2}{\vec{\VF{\bz}}} \\
\leq&~ N~ \left(\frac{2(1+2N^{-\lambda})}{N}\right)^{\frac 12} \Lpn{\left(\cml\right)^{\frac 12}~ \VF{\alpha}_N(t)}{2}{\vec{\VF{\bz}}} \norm{\VF{V}_{1,2} \projs{1} \projs{2}}_{\text{op}} \\
&\cdot \left(\frac{N}{N-1}\right)^{\frac 12} \Lpn{\shift{\left(-\dCml{-2}\right)^{\frac 12}}{2} \shift{\bar{\cn}}{2} \cn^2~ \VF{\alpha}_N(t)}{2}{\vec{\VF{\bz}}} \\
\leq&~N \left(\frac{2(1+N^{-\lambda})}{N-1}\right)^{\frac 12}   \left(\frac{2}{N}\right)^{\frac 12} \norm{\VF{V}_{1,2} \projs{1} \projs{2}}_{\text{op}} \Lpn{\left(\cml\right)^{\frac 12} \VF{\alpha}_N(t)}{2}{\vec{\VF{\bz}}}\left(\Lpn{\left(\cml\right)^{\frac 12} \VF{\alpha}_N(t)}{2}{\vec{\VF{\bz}}}+2N^{-\frac{\lambda}{2}}\right) \\
\stackrel{\text{Lem.\ref{lem:mf:operator-norms}}}{\leq}&~ \left(4\frac{1+N^{-\lambda}}{1-N^{-1}}\right)^{\frac{1}{2}} C_\Gamma\Wpn{\alpha(t)}{1,2}{\bz}^2 (2\CO{t}+N^{-\lambda}) \\
\stackrel{\text{Cor.\ref{cor:mf:derivative-bound}}}{\leq}&~  2\left(\frac{1+N^{-\lambda}}{1-N^{-1}}\right)^{\frac{1}{2}} C_\Gamma \left(\bound{M}{\Wp{1,2}{\bz}}(t)\right)^2 (2\CO{t}+N^{-\lambda}) .
\end{align*}
In order to estimate (2''), one picks a smooth cutoff $\chi$ with the properties of Lemma \ref{lem:mf:delta-approximation}, regrouping $\VF{V}_{1,2}=-\nabla\Gamma_{1,2} \cdot (\VF{\chi}_R(\xi_1) \xi_1-\xi_2) - \nabla\Gamma_{1,2} \cdot (1-\VF{\chi}_R(\xi_1)) \xi_1$, \begin{align*}
\abs{\text{(2'')}} \leq&~ 2N~ \abs{\ExHN{\left(\dCml{-1}\right) \qrojs{1} \projs{2} \left(\bar{V}_2 + \nabla\Gamma_{1,2} \cdot (\VF{\chi}_R(\xi_1) \xi_1-\xi_2)\right) \qrojs{1} \qrojs{2}}{t}} \tag{1'''} \\
&+ 2N~ \abs{\ExHN{\left(\dCml{-1}\right) \qrojs{1} \projs{2} \left(\nabla\Gamma_{1,2} \cdot (1-\VF{\chi}_R(\xi_1)) \xi_1\right) \qrojs{1} \qrojs{2}}{t}}. \tag{2'''}
\end{align*}
As the multiplication operator in (1''') times $\projs{2}$ is bounded, because with help of Lemma \ref{lem:mf:operator-norms}, we find
\begin{align*}
\norm{\left(\bar{V}_2+\nabla\Gamma_{1,2} \cdot (\VF{\chi}_R(\xi_1)\xi_1 - \xi_2)\right) \projs{2}}_{\text{op}} \leq&~ \norm{\bar{V}_2\projs{2}}_{\text{op}} + \Lpn{\nabla\Gamma}{\infty}{\bx} \left(\Lpn{\VF{\chi}_R(\xi_1)\xi_1}{\infty}{\VF{\bz}} + \norm{\xi_2\projs{2}}_{\text{op}}\right) \\
\leq&~ C_\Gamma \left(\Wpn{\alpha(t)}{1,2}{\bz}^2 + 2R + \Lpn{\nabla\alpha(t)}{2}{\bz}\right).
\end{align*}
This can be used to estimate
\begin{align*}
\text{(1''')} =&~ 2N~ \abs{\ExHN{\left(-\dCml{-1}\right)^{\frac{1}{2}} \qrojs{1} \projs{2} \bar{V}_2 \shift{\left(-\dCml{-1}\right)^{\frac 12}}{1} \qrojs{1} \qrojs{2}}{t}} \\
&+ 2N~ \abs{\ExHN{\left(-\dCml{-1}\right)^{\frac{1}{2}} \qrojs{1} \projs{2} \nabla\Gamma_{1,2} \cdot (\VF{\chi}_R(\xi_1) \xi_1-\xi_2) \shift{\left(-\dCml{-1}\right)^{\frac 12}}{1} \qrojs{1} \qrojs{2}}{t}} \\
\leq&~ 2N~ \Lpn{\left(-\dCml{-1}\right)^{\frac{1}{2}} \qrojs{1} \VF{\alpha}_N(t)}{2}{\vec{\VF{\bz}}} C_\Gamma \left(\Wpn{\alpha(t)}{1,2}{\bz}^2 + 2R + \Lpn{\nabla\alpha(t)}{2}{\bz}\right) \\
&~ \quad \cdot \Lpn{\shift{\left(-\dCml{-1}\right)^{\frac{1}{2}}}{1} \qrojs{1} \qrojs{2} \VF{\alpha}_N(t)}{2}{\vec{\VF{\bz}}} \\
\leq&~ 2~ \left(1+N^{-\lambda}\right)^{\frac 12} \left(\frac{N}{N-1}\right)^{\frac 12} \left(\sup_{k\leq N^\lambda} \frac{k}{N}\right)^{\frac 12} C_\Gamma \left(\Wpn{\alpha(t)}{1,2}{\bz}^2 + 2R + \Lpn{\nabla\alpha(t)}{2}{\bz}\right) \Lpn{\cml^{\frac 12} \VF{\alpha}_N(t)}{2}{\vec{\VF{\bz}}}^2 \\
\stackrel{\text{Cor.\ref{cor:mf:derivative-bound}}}{\leq}&~ 4N^{\frac{\lambda-1}{2}} \left(\frac{1+N^{-\lambda}}{1-N^{-1}}\right)^{\frac 12} C_\Gamma \left(R + \left(\bound{M}{\Wp{1,2}{\bz}}(t)\right)^2\right) \CO{t}.
\end{align*}
At last, term (2''') yields with Lemmas \ref{lem:mf:computation-rules}-(iv), \ref{lem:mf:delta-approximation}:
\begin{align*}
\text{(2''')} \leq&~ 2N~ \Lpn{\nabla\Gamma}{\infty}{\bx} \Lpn{(1-\VF{\chi}_R(\xi_1)) \xi_1 (1-\projs{1}) \VF{\alpha}_N(t)}{2}{\vec{\VF{\bz}}} \Lpn{\shift{\dCml{-1}}{1} \qrojs{1} \qrojs{2} \VF{\alpha}_N(t)}{2}{\vec{\VF{\bz}}} \\
\leq&~ 2~ C_\Gamma \left(\frac{N}{N-1}\right)^{\frac{1}{2}} \frac{1}{R} \left(\Lpn{\Del{2}_{\bz_1}\alpha_N(t)}{2}{\vec{\bz}} + \Lpn{\Del{2}_{\bz}\alpha(t)}{2}{\bz}\right) N\Lpn{\shift{\dCml{-1}}{1} \cn^2 \VF{\alpha}_N(t)}{2}{\vec{\VF{\bz}}} \\
\stackrel{\text{Lem.\ref{lem:mf:n-derivative-bounds},Cor.\ref{cor:mf:2nd-derivative-bound}}}{\leq}&~ 2~ C_\Gamma \left(\frac{1}{1-N^{-1}}\right)^{\frac 12} \frac{1}{R} \left(\bound{M,\Del{2}_{(\bz_1,\bz_2)}}{\Lp{2}{\vec{\bz}}}(t) + \bound{M,\Del{2}}{\Lp{2}{\bz}}(t)\right).
\end{align*}
Combining all the estimates for (1''), (1'''), and (2'''), yields the Gronwall type estimate
\begin{align*}
\abs{\partial_t\CO{t}} \leq&~ 2~ C_\Gamma \left(\frac{1+N^{-\lambda}}{1-N^{-1}}\right)^{\frac 12} \left(2 \left(\bound{M}{\Wp{1,2}{\bz}}(t)\right)^2 + 2N^{\frac{\lambda-1}{2}} \left(R+\left(\bound{M}{\Wp{1,2}{\bz}}(t)\right)^2\right)\right)~ \CO{t} \\
&+2~ C_\Gamma \left(\frac{1}{1-N^{-1}}\right)^{\frac 12} \left(\bound{M}{\Wp{1,2}{\bz}}(t)\right)^2N^{-\lambda}+ 2~ C_\Gamma \left(\frac{1}{1-N^{-1}}\right)^{\frac 12} \left(\bound{M,\Del{2}_{(\bz_1,\bz_2)}}{\Lp{2}{\vec{\bz}}}(t) +  \bound{M,\Del{2}}{\Lp{2}{\bz}}(t)\right) \frac{1}{R}.
\end{align*}
If we choose $R(N)\equiv N^{\delta}$, $0<\delta\leq\frac{1-\lambda}{2}$ and optimize the choice of $\delta$, we obtain $\delta=\frac{1-\lambda}{4}$ and the convergence rate of all error terms is $\mathcal{O}\left(N^{-{\frac{1-\lambda}{4}}}\right)+\mathcal{O}\left(N^{-\lambda}\right)$.
\end{proof}

\section*{Conflict of interest statement}

The authors declare that there are no conflicts of interest, because this work has not been funded by third parties.

\bibliographystyle{abbrv}

\end{document}